\definecolor{byzantine}{rgb}{0.74, 0.2, 0.64}
\definecolor{brilliantrose}{rgb}{1.0, 0.33, 0.64}
\renewcommand{\vec}[1]{\mathbf{#1}}
\newcommand{\E}{\mathbb{E}}
\newtheorem{lemma}{Lemma}
\newtheorem{proposition}{Proposition}
\newtheorem{prop}{Proposition}
\newtheorem{definition}{Definition}
\newtheorem{example}{Example}
\newtheorem{theorem}{Theorem}
\newtheorem{observation}{Observation}
\newtheorem{corollary}{Corollary}
\newcommand{\explain}[1]{\tag{\textcolor{gray}{#1}}}
\date{}
\begin{document}
	
	\title{Searching, Sorting, and Cake Cutting in Rounds\footnote{A part of this work was done while visiting the Simons Institute for the Theory of Computing. S. Br\^anzei was supported in part by US National Science Foundation
			CAREER grant CCF-2238372.}}
	
	\author{Simina Br\^anzei\footnote{Purdue University, USA. E-mail: \href{mailto:simina.branzei@gmail.com}{simina.branzei@gmail.com}.}
		\and
		Dimitris Paparas\footnote{Google Research, USA. E-mail: 
			\href{mailto:dpaparas@google.com}{dpaparas@google.com}.}
		\and
		Nicholas  Recker\footnote{Purdue University, USA. E-mail: \href{mailto:nrecker@purdue.edu}{nrecker@purdue.edu}.}
	}

	\maketitle
	\thispagestyle{empty}

	\begin{abstract}
		We analyze {the query complexity of} search in rounds motivated by a fair division question. This leads us to the discovery of a substantial gap between the \emph{expected randomized} and  \emph{distributional} query complexity for a natural  search problem with error probability  $\delta$.
		To the best of our knowledge, this is the first natural example of a function with a large gap between its randomized and distributional query complexity.
		
		While the expected query complexity was the focus of Yao's seminal paper \cite{yao_minimax}, it has   been understudied since then. Most literature has focused on the worst-case cost setting, where the randomized complexity is defined as the worst case cost incurred by the best algorithm with error probability $\delta$ and for this setting there is no gap between the two complexities~\cite{CC_book}.  In recent years, there has been renewed interest in the expected cost setting  \cite{small_minimax_bias_BBGM22}.

		The catalyst for our findings is  a fair division problem  of  independent interest: given a cake cutting instance with $n$ players, compute a fair allocation in at most $k$ rounds of interaction with the players, where $k$ ranges from $1$ to $\infty$.
		We  {show} that proportional cake cutting in rounds is equivalent to sorting with rank queries in rounds.
		Inspired by the rank query model, we then consider two fundamental search problems: ordered and unordered search.
		
		In unordered search, we get an array $\vec{x} = (x_1, \ldots, x_n)$ and an element $z$ promised to be in $\vec{x}$. The size $n$  is known, but $z$ and the elements of $\vec{x}$ are not and cannot be accessed directly. Instead, we have access to an oracle that receives queries of the form: ``How is $z$ compared to the element at location $i$?'', answering ``$=$'' or ``$\neq$''.  The goal is to find the location of $z$ with success probability at least $p \in (0,1]$ using at most  $k$ rounds of interaction with the oracle.

		We find and quantify a gap between the expected query complexity of  
		\begin{itemize}
			\item randomized algorithms on a worst case input, which is $np\bigl(\frac{k+1}{2k}\bigr) \pm O(1)$;  and 
			\item deterministic algorithms on a worst case input distribution, which  is $np \bigl(1 - \frac{k-1}{2k}p \bigr) \pm O(1)$.
		\end{itemize}		
		
		The gap grows with the number of rounds and is maximized in the fully adaptive unordered search problem, where for each $p \in (0,1)$,   the ratio between the two complexities converges to $2-p$ as the size of the input  grows, while their additive difference grows linearly with the input size.  In particular, the ratio is strictly greater than $1$ for each fixed success probability $p \in (0,1)$. 
		
		In ordered search, the setting is the same with the difference that the array $\vec{x}=(x_1, \ldots, x_n)$ is promised to be sorted and  the answer given by the oracle is one of ``$<$'', ``$=$'', or ``$>$''.
		Here we find that the expected query complexity of randomized algorithms on a worst case input and deterministic algorithms on a worst case input distribution is essentially the same: at least $p k  n^{\frac{1}{k}} - 2pk$ and at most $k \lceil p n^{\frac{1}{k}} \rceil + 2$.
	\end{abstract}
	
	\newpage
	\pagenumbering{arabic}
	
	\section{Introduction}
	
	{W}e explore the randomized and distributional query complexity of search problems in the expected cost setting.
	This leads us to the discovery of a substantial gap between the randomized and distributional {query} complexity of a natural function induced by a search problem. 
	
	To make these concepts more precise, consider a function  $f: \mathcal{X}_n \to \mathcal{Y}_m$, where $\mathcal{X}_n \subseteq \{0,1\}^n$ and $\mathcal{Y}_m \subseteq \{0,1\}^m$ with $m,n \in \mathbb{N}$. 
	Given as input a bit vector $\vec{x} = (x_1, \ldots, x_n) \in \mathcal{X}_n$, an algorithm can query a location $j$ in $\vec{x}$ and receive the bit $x_j$ in one step. The goal is to compute  $f(\vec{x})$ with as few queries  as possible. 
	
	The randomized and distributional complexity~\cite{yao_minimax} of computing the function $f$ are defined as follows.
	The \emph{randomized complexity with error $\delta$}, denoted  $\mathcal{R}_{\delta}(f)$, is 
	the \emph{expected} number of queries issued on the worst-case input of an optimal randomized algorithm that computes $f$ with an error probability of at most  $\delta \in [0,1]$ on each input. See    Section~\ref{sec:our_results} for precise definitions. 
	
	When the input $x$ is drawn from a distribution $\Psi$,  a deterministic algorithm $\mathcal{A}$ (not necessarily correct on all inputs) has expected number of queries $cost(\mathcal{A},\Psi)$ and  error probability $e(\mathcal{A}, \Psi)$. 
	Let $\mathcal{A}_{\Psi,\delta}$ be an algorithm with error probability $\delta$ and  minimum expected cost for distribution $\Psi$. The \emph{distributional complexity with error $\delta$}, denoted $\mathcal{D}_{\delta}(f)$, represents the \emph{expected} number of queries made on the worst case  distribution $\Psi$ by the best algorithm for it: $\mathcal{D}_{\delta}(f) = \sup_{\Psi} \left\lbrace cost(\mathcal{A},\Psi)\right\rbrace$.

	For error probability $\delta = 0$,   von Neumann's minimax theorem~\cite{vonNeumann_minimax} gives $\mathcal{R}_{0}(f) = \mathcal{D}_{0}(f)$. Clearly, we also   have $\mathcal{R}_{1}(f) = \mathcal{D}_{1}(f) =0$. 
	For all   $\delta \in [0, 1/2]$, \cite{yao_minimax} showed   $\mathcal{R}_{\delta}(f) \geq 1/2 \cdot \mathcal{D}_{2 \delta}(f)$. \cite{vereshchagin1998randomized} showed an inequality in the other direction: $\mathcal{R}_{\delta}(f) \leq 2 \mathcal{D}_{\delta/2}(f)$ for all $\delta \in [0,1]$. 
	\cite{vereshchagin1998randomized} also showed that $\mathcal{R}_{\delta}(f) \leq \mathcal{D}_{\delta}(f)$ and observed that there can be a difference of  additive $1$ between $\mathcal{R}_{\delta}(f)$ and $\mathcal{D}_{\delta}(f)$ for  $\delta \in [1/4, 1/2]$  when the function is  $f : \{0,1\} \to \{0,1\}$ with $f(x) = x$. 

	
	While the expected query complexity was the focus of Yao's seminal paper \cite{yao_minimax}, it has   been understudied since then. Most literature has focused on the worst-case cost setting, where the randomized complexity is defined as the worst case cost incurred by the best algorithm with error probability $\delta$ and for this setting there is no gap between the two complexities~\cite{CC_book}. In recent years, there has been renewed interest in the expected cost setting  \cite{small_minimax_bias_BBGM22}, 
	as it has important  applications in complexity such as  the randomized composition of functions. 
	
	Our work  contributes  to this area by showing that a natural search problem has a substantial  gap between the randomized and distributional complexity in the expected cost setting.
	Specifically, we consider a natural function $u_n$ induced by the problem of finding  an element $z$ in an unordered array of size $n$. We show that for each   $\delta \in (0,1)$,
	\begin{align} 
		\lim_{n \to \infty} \frac{\mathcal{D}_{\delta}({u}_n)}{\mathcal{R}_{\delta}({u}_n)} =   1+\delta \qquad \mbox{and} \qquad \lim_{n \to \infty} \mathcal{D}_{\delta}({u}_n) - \mathcal{R}_{\delta}({u}_n) = \infty\,. \notag 
	\end{align}
	To the best of our knowledge, this is the first example demonstrating a large gap.  In fact, 
	\cite{small_minimax_bias_BBGM22} asked whether there exist constants $c,d \geq 0 $ such that $\mathcal{D}_{\delta}(f) \leq c \cdot \mathcal{R}_{\delta}(f) + d$ for each partial function $f$ and $\delta > 0$. Our results show that the two complexities can be substantially different, in particular implying that if such constants $c$ and $d$ exist, it must be the case that $c \geq 2$.
	
	\paragraph{Connections to Cake Cutting and Rounds of Interaction:}
	
	The catalyst  for our findings is a cake cutting problem that we believe is of independent interest. Suppose we are given a cake  represented as the interval $[0,1]$ and  $n$ players, each with an additive valuation over the cake induced by a private value density function. The task is to compute a fair allocation using at most $k$ rounds of interaction with the  players. Each round of interaction $i$ consists of a batch $i$ of queries issued simultaneously. Queries in batch $i$ can depend on the responses to queries from rounds $j<i$ but not to queries from rounds $\ell\geq i$. When $k=1$, all the communication between the algorithm computing the allocation and the players takes place in one simultaneous exchange, while  $k= \infty$ represents  the fully adaptive setting, where the algorithm issues one query at a time (see \cite{Val75}).

	We design an efficient protocol for proportional cake cutting in rounds, finding that this fair division problem is equivalent to sorting with rank queries in  rounds, where a rank query has  the form ``Is rank$(x_i) \leq j$?''.
	A lower bound for sorting with rank queries in rounds was given in \cite{AlonAzar88}, while the first connection to proportional cake cutting was implicitly made in \cite{woeginger2007complexity}. 
	
	Inspired by the rank query model, we then consider two fundamental search problems that are implicit in sorting: ordered and unordered search. 
	In unordered search, we get an array $\vec{x} = (x_1, \ldots, x_n)$ and an element $z$ promised to be in $\vec{x}$. The size $n$  is known, but $z$ and the elements of $\vec{x}$ are not and cannot be accessed directly. Instead, we have access to an oracle $\mathcal{O}_z$ that receives queries of the form: $\mathcal{O}_z(i)=$``How is $z$ compared to the element at location $i$?'', answering ``$=$'' or ``$\neq$''.  The goal is to find the location of $z$ with success probability at least $p \in (0,1]$ using at most  $k$ rounds of interaction with the oracle. 
	
	
	In ordered search, the setting is the same with the difference that (1) the array $\vec{x}=(x_1, \ldots, x_n)$ is promised to be sorted and (2) the answer given by the oracle is one of ``$<$'', ``$=$'', or ``$>$''.
	
	

	\subsection{Our results} \label{sec:our_results}
	
	Here we summarize our results after establishing the notation necessary for stating them.
	
	\paragraph{Notation.} For $m,n \in \mathbb{N}$, we consider functions of the form 
	$f : \mathcal{X}_n \to \mathcal{Y}_m $, where $\mathcal{X}_n \subseteq \{0,1\}^n$ and $\mathcal{Y}_m \subseteq \{0,1\}^m$.   For promise problems, as in our setting, the set $\mathcal{X}_n$ is  a strict subset of $\{0,1\}^n$.
	
	For each $x \in \{0,1\}^n$ and randomized algorithm $R$ for computing $f$, the error probability of $R$  on input $x \in \mathcal{X}_n $  is  
	\begin{align} 
		\text{err}_{f}(R,x) = \Pr[R(x) \neq f(x)] \qquad \forall x \in \mathcal{X}_n, \notag 
	\end{align}
	
	where  $R(x)$ is the output of the algorithm and can be the empty string. For the functions we consider, the empty string is never the right answer.
	
	For each  $\delta \in [0,1]$, we consider the randomized complexity $\mathcal{R}_{\delta}(f)$ with error 
	at most $\delta$ and  the distributional complexity $\mathcal{D}_{\delta}(f)$  with error at most $\delta$, formally defined as 
	\begin{align} 
		\mathcal{R}_{\delta}(f) = \inf_{R \in R(f, \delta)} \max_{x \in \mathcal{X}_n} cost(R,x) \qquad \mbox{and} \qquad 
		\mathcal{D}_{\delta}(f) = \sup_{\mu} \inf_{D \in D(f,\delta, \mu)} cost(D,\mu),  
	\end{align}
	where 
	\begin{itemize} 
		\item $R(f,\delta)$ is  the set of randomized algorithms $R$ such that $\text{err}_{f}(R,x) \leq \delta$ for all  $x \in \mathcal{X}_n$.
		\item $\mu$ is a distribution over strings in $\mathcal{X}_n$; that is, $\sum_{x \in \mathcal{X}_n} \mu(x) = 1$.
		\item  $D(f, \delta, \mu)$ is the set of deterministic algorithms  $D$ with $\E_{x \sim \mu}[\text{err}_{f}(D,x)] \leq \delta$.
		\item $cost(R,x)$ is the expected number of queries issued by a randomized algorithm $R$ on input $x$.
		\item  $cost(D,\mu) $ represents the expected number of queries issued by a deterministic algorithm $D$ when given as input a string $x$ drawn from the distribution $\mu$.
	\end{itemize}
	
	\subsubsection{Unordered Search} 
	
	The unordered search problem is formally defined as follows.

	\begin{definition}[Unordered search]
		The input is a  bit vector $\vec{x} = (x_1, \ldots, x_n) \in \{0,1\}^n$ with the promise that exactly one bit is $1$. The vector can be accessed via oracle queries of the form: ``Is the $i$-th bit equal to $1$?''. The   answer to a query is ``Yes' or ``No''.
		
		The task is to find the location of the hidden bit   in at most $k \in \mathbb{N}$ rounds of interaction with the oracle. An index must be queried before getting returned as the solution \footnote{This requirement is benign as it only makes a difference of $\pm 1$ in the bounds.}. 
	\end{definition}

	Let $\mbox{\em unordered}_{n,k}$ denote the unordered search problem on an input vector of length $n$ in $k$ rounds. 
	We have the following bounds for the randomized complexity of this problem.

	\begin{theorem}[Unordered search, randomized algorithms on worst case input] \label{thm:intro:search_unordered_rand}
		For all $k,n \in \mathbb{N}^*$ and  $p \in [0,1]$, we have: 
		$ n p  \Bigl(  \frac{k+1}{2k} \Bigr) \leq 	\mathcal{R}_{1-p}(\mbox{\em unordered}_{n,k})  \leq   n p \Bigl( \frac{k+1}{2k} \Bigr) + p + p/n  \,.
		$
	\end{theorem}
	
	We also analyze the distributional complexity. We say the input is drawn from distribution $\Psi = (\Psi_1, \ldots, \Psi_n)$ if the hidden bit is at location $i$ with probability $\Psi_i$,  where $\Psi_j \geq 0 $ for all $j \in [n]$ and $\sum_{j=1}^n \Psi_j = 1$. The distributional complexity is bounded as follows.
	
	\begin{theorem}[Unordered search, deterministic algorithms on worst case input distribution] \label{thm:intro:search_unordered_det_uniform}
		For all $k, n \in \mathbb{N}^*$ and   $p \in [0,1]$: 
		\begin{align}
			n p \Bigl( 1 - \frac{k-1}{2k} \cdot p  \Bigr) \leq 		\mathcal{D}_{1-p}(\mbox{\em unordered}_{n,k}) \leq      n p \Bigl( 1 - \frac{k-1}{2k}  \cdot p \Bigr) + 1 + p +  2/n   \,. 
		\end{align}
		The uniform distribution is  the worst case for  unordered search.
	\end{theorem}

	\begin{figure}[h!]
		\centering
		\includegraphics[scale=1.4]{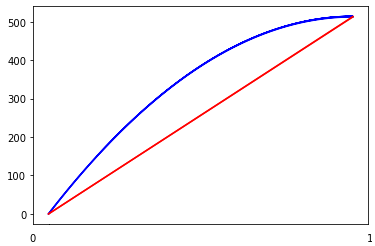}
		\label{fig:search_unsorted_fixed_n_k}
		\caption{The query complexity of fully adaptive unordered search for $n=2^{10}$ elements,  with success probability $p$ ranging from $0$ to $1$. The $X$ axis is for the success probability $p$, while the $Y$ axis is for the expected number of queries. The randomized complexity  is plotted in red (both upper and lower bounds and they coincide) and similarly the distributional complexity in blue. }
	\end{figure}
	
	Combining Theorem~\ref{thm:intro:search_unordered_rand} and \ref{thm:intro:search_unordered_det_uniform}, we obtain that for  each  $p \in (0,1)$, there exists $n_{0} = n_{0}(p) \in \mathbb{N}$ such that  for all $k,n \in \mathbb{N}$ with  $n \geq n_0$, the multiplicative gap between the distributional and randomized complexity of unordered search in $k$ rounds with success probability $p$ is 
	\begin{align}
		\frac{\mathcal{D}_{1-p}(\mbox{unordered}_{n,k})}{\mathcal{R}_{1-p}(\mbox{unordered}_{n,k})} = 
		1 + \frac{(k-1)(1-p)}{k+1}  \pm o(1) \,. \label{eq:multiplicative_gap_unordered_rounds}
	\end{align}
	The gap in \eqref{eq:multiplicative_gap_unordered_rounds} grows from $1$ to  $\approx (2-p)$  as the number of rounds grows from $k=1$ to $k=n$.
	
	\paragraph{Fully adaptive unordered search.}
	By taking $k=n$, the bounds in Theorem~\ref{thm:intro:search_unordered_rand} and \ref{thm:intro:search_unordered_det_uniform} characterize the query complexity of the fully adaptive unordered search problem, denoted $\mbox{unordered}_{n}$. 
	
	\begin{corollary}[Fully adaptive unordered search] \label{corr:fully_adaptive_unordered}
		
		Let  $n \in \mathbb{N}^*$ and  $p \in [0,1]$. The randomized and distributional complexity of fully adaptive unordered search with  success probability $p$ are:
		\begin{itemize} 
			\item   
			$ np \left( \frac{n+1}{2n} \right) \leq  \mathcal{R}_{1-p}(\mbox{\em unordered}_{n})  \leq  np \left(  \frac{n+1}{2n} \right) + p + {p}/{n}\,.
			$ 
			\item  $np\bigl(1-\frac{n-1}{2n} \cdot p \bigr) \leq \mathcal{D}_{1-p}(\mbox{\em unordered}_{n})  \leq   np \bigl( 1 - \frac{n-1}{2n} \cdot p \bigr) +  1 + p + {2}/{n} \,.$
		\end{itemize}
	\end{corollary}
	The randomized complexity  is roughly  
	${np}/{2}$ and  the distributional complexity  roughly $np \bigl(1 - \frac{p}{2} \bigr)$.

	\begin{corollary}[Multiplicative and additive gap for fully adaptive unordered search]
		For each success probability $p \in (0,1]$,  we have 
		\begin{align}
			\lim_{n \to \infty} \frac{\mathcal{D}_{1-p}(\mbox{\em unordered}_{n})}{\mathcal{R}_{1-p}(\mbox{\em unordered}_{n})} =  2-p \;\; \mbox{and} \;\; 
			\lim_{n \to \infty} {\mathcal{D}_{1-p}(\mbox{\em unordered}_{n})}-{\mathcal{R}_{1-p}(\mbox{\em unordered}_{n})} = \infty \,. 
		\end{align}
	\end{corollary}
	
	
	\subsubsection{Ordered Search} 
	
	
	The ordered search problem is formally defined next. The difference from unordered search is that the array is sorted and the oracle gives feedback about the direction in which to continue the search in case of a ``No'' answer.
	
	\begin{definition}[Ordered search]
		The  input is  a bit vector $\vec{x} = (x_1, \ldots, x_n) \in \{0,1\}^n$ with the promise that exactly one bit is set to $1$. The vector can be accessed  via oracle queries of the form: ``Is the $i$-th bit equal to $1$?''. The answer to a query is: ``Yes', ``No, go left'', or ``No, go right''.
		
		The task is to find the location of the hidden bit using at most $k \in \mathbb{N}$ rounds of interaction with the oracle. An index must be queried before getting returned as the solution. 
	\end{definition}

	
	Let $\mbox{ordered}_{n,k}$ denote the ordered search problem on an input vector of length $n$ in $k$ rounds. 
	For ordered search the number of rounds need not be larger than $\lceil \log_2{n} \rceil$, since binary search is an optimal fully adaptive algorithm for success probability $1$.
	We have the following bounds.
	
	\begin{theorem}[Ordered search, randomized and distributional complexity] \label{thm:main_ordered}
		For all  $k,n \in \mathbb{N}^*$ and $p \in [0, 1]$, we have:  
		\begin{align}
			k p n^{\frac{1}{k}} - 2 pk \leq  \mathcal{R}_{1-p}(\mbox{\em ordered}_{n,k}) \leq \mathcal{D}_{1-p}(\mbox{\em ordered}_{n,k}) \leq k  \lceil p n^{\frac{1}{k}} \rceil + 2 \,. \notag 
		\end{align}
		Moreover, $np \leq R_{1-p}(\mbox{\em ordered}_{n,1}) \leq D_{1-p}(\mbox{\em ordered}_{n,1}) \leq \lceil np \rceil$. 
		The uniform distribution is the worst case for ordered search.
	\end{theorem}
	
	Theorem~\ref{thm:main_ordered} shows that  for ordered search in constant rounds, there is essentially no gap between the randomized and distributional complexity.

	
	
	\subsubsection{Cake Cutting in Rounds and Sorting with Rank Queries} 
	We consider the cake cutting problem of finding a proportional allocation with contiguous pieces in $k$ rounds. The cake is the interval $[0,1]$ and the goal is to divide it among $n$ players with private additive valuations. A \emph{proportional} allocation, where each player gets a piece worth $1/n$ of the total cake according to the player's own valuation, always exists and can be computed in the standard (RW) query model for cake cutting. 
	
	We establish a connection between proportional cake cutting with contiguous pieces and sorting in rounds in the \emph{rank query model}. In the latter, we have oracle access to a list $x$ of $n$ elements that we cannot inspect directly. The oracle accepts rank queries of the form ``How is $rank(x_i)$ compared to $j$?'', where the answer is ``$<$'', ``$=$'', ``$or >$''\footnote{Equivalently, the queries are ``Is $rank(x_i) \leq j$?'', where the answer is \emph{Yes} or \emph{No}.}. 

	\begin{theorem}(Informal). \label{thm:intro:sorting_cake_equiv} For all $k,n \in \mathbb{N}^*$, the following problems are equivalent:
		\begin{itemize}
			\item computing a proportional cake allocation with contiguous pieces for $n$ agents in the standard (RW) query model
			\item sorting a vector with $n$ elements using rank queries.
		\end{itemize}
		The randomized query complexity of both problems (for constant success probability) is $\Theta\bigl(k \cdot n^{1 + \frac{1}{k}}\bigr)$.
	\end{theorem}
	
	We prove Theorem \ref{thm:intro:sorting_cake_equiv} in Appendix~\ref{app:cake_cutting_and_sorting}.	
	We design an optimal protocol for proportional cake cutting in $k$ rounds. En route, we re-examine the implicit reduction from sorting with rank queries to proportional cake cutting as presented in Woeginger (2007), and make it completely precise.
	
	\cite{AlonAzar88} gave a lower bound of $\Omega(k n^{1+\frac{1}{k}})$ for sorting a vector of $n$ elements  with rank queries in $k \leq  \log{n}$ rounds. We also show a slightly improved deterministic lower bound for sorting with rank queries that has a simpler proof. 
	
	Finally, to highlight the connection to Ordered Search, we point out that an operation implicit in sorting with rank queries is \emph{Locate}:  given a vector $x = (x_1, \ldots , x_n)$ and an element $x_i$, find its rank via rank queries. Locate with rank queries is equivalent to the ordered search problem.


	\subsection{Related work}

	\paragraph{Parallel complexity.}

	
	Parallel complexity is a fundamental concept with a long history in areas such as sorting and optimization; see, e.g.  \cite{NEMIROVSKI} on the parallel complexity of optimization and  more recent results on submodular optimization~\cite{BalkanskiS18}. An overview on parallel sorting algorithms is given in the book \cite{Akl_book} and many works on sorting and selection in rounds~ \cite{Val75,pippenger1987sorting,bollobas1988sorting,alon1986tight,wigderson1999expanders,GasarchGK03}, aiming to understand the tradeoffs between the number of rounds of interaction and the query complexity.
	
	\cite{Val75} initiated the study of parallelism using the number of comparisons as a complexity measure and showed that $p$ processor parallelism can offer speedups of at least $O\bigl(\frac{p}{\log \log p}\bigr)$ for problems such as sorting and finding the maximum of a list of $n>p$ elements. The connection to the problem of sorting in rounds is straight-forwards since one parallel step of the $p$ processors (e.g. $p$ comparisons performed in parallel) can be viewed as one round of computations.  
	
	\cite{haggkvist1981parallel} showed that $O\bigl(n^{\frac{3\cdot2^{k-1}-1}{2^k-1}}\log{n}\bigr)$ comparisons suffice to sort an array in $k$ rounds. 
	\cite{bollobas1983parallel} showed a bound of $O(n^{3/2}\log{n})$ for two rounds. \cite{pippenger1987sorting} made a connection between expander graphs and sorting and proved that $O\bigl(n^{1+\frac{1}{k}}(\log{n})^{2-\frac{2}{k}}\bigr)$ comparisons are enough. This was improved to $O\bigl(n^{3/2}\frac{\log{n}}{\sqrt{\log\log{n}}}\bigr)$ in  \cite{alon1988sorting}, which also showed that $\Omega ( n^{1 + 1/k} ( \log n )^{1/k} )$ comparisons are needed.
	
	\cite{bollobas1988sorting} generalized the latter upper bound to $O\bigl(n^{1 + 1/k}\frac{(\log{n})^{2-2/k}}{(\log\log{n})^{1-1/k}}\bigr)$ for $k$ rounds. The best upper bound known to us is due to  \cite{wigderson1999expanders}, which  obtained a $k$-rounds algorithm that performs $O\bigl(n^{1 + 1/k+o(1)}\bigr)$ comparisons. For randomized algorithms, \cite{AZV86} obtained an algorithm that runs in $k$ rounds and issues $O\bigl(n^{1+1/k}\bigr)$ queries, thus demonstrating that randomization helps in the comparison model. Local search in rounds was considered in~\cite{BL22_COLT}.

	\paragraph{Randomized complexity.} 
	
	The expected cost setting that we consider is the one studied in \cite{yao_minimax}. However, most of the literature since then has focused on the worst case setting, where the cost of an algorithm is the worst case cost among all possible inputs and coin-flips (for randomized algorithms). 
	In more detail, consider a function $f : \mathcal{X}_n \to \mathcal{Y}_m$, with $m, n \in \mathbb{N}$ and $\mathcal{X}_n \subseteq \{0,1\}^n$ and $\mathcal{Y}_m \subseteq \{0,1\}^m$.
	
	The worst-case randomized complexity for error $\delta$, denoted $\widehat{\mathcal{R}}_{\delta}(f)$, is defined as the maximum number of queries issued by a randomized algorithm $R$, where the maximum is taken over all inputs $x \in \mathcal{X}_n$ and coin-tosses, and $R$ has the property $\mbox{err}_f(R,x) \leq \delta$ for all $x \in \mathcal{X}_n$. 
	The worst-case distributional complexity for error $\delta$, denoted $\widehat{\mathcal{D}}_{\delta}(f)$, is  the maximum number of queries issued by an optimal deterministic algorithm $\mathcal{A}$  that computes $f$ with error probability  $\delta$ when the input is drawn from a worst case input distribution $\Psi$, where the optimality of $\mathcal{A}$ is with respect to $\Psi$. It is known that: $\widehat{\mathcal{R}}_{\delta}(f) = \widehat{\mathcal{D}}_{\delta}(f)$ \cite{CC_book}.
	
	In recent work, \cite{small_minimax_bias_BBGM22} also focus on the expected cost setting and analyze the gap between the expected query complexity of randomized algorithms on worst case input and the   expected query complexity of randomized algorithms on a worst case input distribution, in the regime where the error  probability is $\delta \approx 1/2$. 
	
	\paragraph{Group testing.} 
	In fault detection, the goal is to identify all the defective items from a finite set items via a minimum number of tests. More formally, there is a universe of $\mathcal{U}$ of $n$ items, $d$ of which are defective. Each test is executed on a subset $S \subseteq \mathcal{U}$ and says whether $S$ is contaminated (i.e. has at least one defective item) or pure (i.e. none of the items in $S$ are defective). Questions include how many tests are needed to identify all the defective items and how many stages are needed, where the tests performed in round $k+1$ can depend on the outcome of the tests in round $k$. An example of group testing is to identify which people from a set are infected with a virus, given access to any combination of individual blood samples; combining their samples allows detection using a smaller number of tests compared to checking each sample individually.
	
	The group testing problem was  posed in~\cite{dorfman} and a lower bound of $\Omega\Bigl(d^2 \frac{\log{n}}{\log{d}}\Bigr)$ for the number of tests required in the one round setting was given in~\cite{dyachkov}. One round group testing algorithms with an upper bound of $O(d^2 \log{n})$ on the number of tests were designed in~\cite{AMS06,PR08,INR10,NPR11}. Two round testing algorithms were studied in~\cite{BGV05,EGH07}. The setting where the number of rounds is allowed is given by some parameter $r$ and the number of defective items is not known in advance was studied  in~\cite{DP94,CDZ15,Damashke19,GV20}; see  \cite{book_group_testing} for a survey. 
	
	\paragraph{Fair division.}
	The cake cutting model was introduced in~\cite{Steinhaus48} to study the allocation of a heterogeneous resource among agents with complex preferences. Cake cutting was studied in mathematics, political science, economics \cite{RW98,BT96,Moulin03}, and computer science \cite{Pro13,socialchoice_book,GP14}. 
	There is a hierarchy of fairness notions such as  proportionality, envy-freeness (where no player prefers the piece of another player), equitability, and necklace splitting~\cite{Alon_necklace}, with special cases such as consensus halving and perfect partitions. See ~\cite{branzei15,pro16_handbook} for  surveys.
	
	Cake cutting protocols are often studied in the Robertson-Webb~\cite{woeginger2007complexity} query model, where a mediator asks the players queries until  it has enough information to output a fair division.
	\cite{EP84} devise an algorithm for computing a proportional allocation with connected pieces that asks $O(n \log{n})$ queries, with matching lower bounds due to \cite{woeginger2007complexity} and \cite{EP06b}. 
	
	For the query complexity of exact envy-free cake cutting (possibly with disconnected pieces), a lower bound of $\Omega(n^2)$ was given by \cite{Pro09} and an upper bound of $O\bigl(n^{n^{n^{n^{n^{n}}}}}\bigr)$ by \cite{AM16}. ~\cite{AFMPV18} found a simpler algorithm  for $4$ agents. An upper bound on the query complexity of equitability was given by \cite{CDP13} and a lower bound by \cite{PW17}. \cite{BN17} analyzed the query complexity of envy-freeness, perfect, and equitable partitions with minimum number of cuts. 
	
	The issue of rounds in cake cutting was studied in \cite{BN19}, where the goal is to bound the communication complexity of protocols depending on the fairness notion. 
	The query complexity of proportional cake cutting with different entitlements was studied by \cite{Halevi18}. 
	The query complexity of consensus halving was studied in \cite{DFH20} for  monotone valuations, with an appropriate generalization of the Robertson-Webb query model.
	The query complexity of cake cutting in one round, i.e. in the simultaneous setting, was studied in \cite{BBKP14}.

	Many  works analyzed the complexity of fair division in models such as cake cutting and  multiple  divisible and indivisible goods. \cite{DeligkasFH21,GoldbergHS20,Cheze_whp_ef,GoldbergHIMS20,FHSZ20,FHHH21,AlonGraur,PlautR19,BiloCFIMPVZ19,deligkas2022complexity} studied the complexity of  cake cutting. Indivisible goods were studied, e.g., in~\cite{OPS21_few_queries} for their query complexity and in ~\cite{MS21_closing_gaps,ChaudhuryKMS21} for algorithms.
	Cake cutting with separation was studied in~\cite{ElkindSS21}, fair division of a graph or graphical cake cutting in \cite{graph_fair_division,Bei2019DividingAG},  multi-layered cakes in ~\cite{IgarashiM21},  fair cutting in practice  in~\cite{KyropoulouOS22}, and cake cutting where some parts are good and others bad  in ~\cite{cake_burned} and  when the whole cake is a ``bad'' in \cite{ijcai2018p31}.
	Branch-choice protocols  were developed  in~\cite{IaruGoldberg21} as a simpler yet expressive alternative  for GCC protocols from \cite{BranzeiCKP16}. A body of work analyzed truthful cake cutting both in the standard (RW) query model~\cite{mossel_tamuz,BM15} and in the direct revelation model~\cite{truth_justice_and_cake_cutting,BU2023103904,bei2022truthful,Tao22}.
	
	\section{Ordered search}
	
	In  this section we focus on ordered search and  prove Theorem~\ref{thm:main_ordered}.
	The omitted proofs of this section can be found in Appendix~\ref{app:ordered_search}. 
	
	
	\subsection{Deterministic ordered search algorithm on worst case input}
	
	We first design a deterministic algorithm ${D}^o$ for ordered search that always succeeds and asks at most $k  \lceil n^{\frac{1}{k}} \rceil $ queries on each input.
	
	\begin{proposition} \label{prop:k_round_deterministic_algorithm}
		For each $n \in \mathbb{N}^*$ and $k \in [\lceil \log{n} \rceil]$, 
		there is a deterministic $k$-round algorithm for ordered search that  succeeds on every input and asks at most $k  \lceil n^{\frac{1}{k}}\rceil$ queries in the worst case.
	\end{proposition}
	The algorithm ${D}^o$ that achieves this upper bound  issues $n^{\frac{1}{k}}$ queries in the first round, which are as equally spaced as possible, partitioning the array in $n^{\frac{k-1}{k}}$ blocks. If the element is found at one of the locations queried in the first round, then ${D}^o$ returns it and halts. Otherwise, ${D}^o$ recurses on the  block that contains the solution in the remaining $k-1$ rounds. 
	\subsection{Randomized  ordered search algorithm on worst case input}

	Using ${D}^o$, for each $p \in (0,1]$, we design a randomized algorithm ${R}^o$ that succeeds with probability at least $p$ and asks at most $p k \lceil n^{\frac{k}{k}} \rceil$ queries in expectation. 
	
	\begin{proposition} \label{cor:ordered_randomized_p}
		Let $p \in (0,1]$ and $k,n \in \mathbb{N}^*$. Then 
		$ \mathcal{R}_{1-p}(\mbox{unordered}_{n,k}) \leq  pk   \lceil n ^{\frac{1}{k}} \rceil$.
	\end{proposition}
	The  randomized algorithm ${R}^o$ has an all-or-nothing structure:
	\begin{itemize} 
		\item  with probability $1-p$, do nothing (i.e. output the empty string); 
		\item  with probability $p$, run the deterministic algorithm ${D}^o$ from Proposition~\ref{prop:k_round_deterministic_algorithm}.
	\end{itemize}
	
	\subsection{Deterministic  ordered search algorithm on worst case input distribution}
	
	Next we  upper bound the distributional complexity of ordered search.
	\begin{proposition} \label{prop:deterministic_alg_on_hard_distribution} Let $p \in (0,1]$ and $k, n \in \mathbb{N}^*$. Then  $\mathcal{D}_{1-p}(\mbox{\em ordered}_{n,k}) \leq k  \lceil p n^{\frac{1}{k}} \rceil + 2 $. Moreover, $ \mathcal{D}_{1-p}(\mbox{\em ordered}_{n,1}) \leq \lceil np \rceil $.
	\end{proposition}
	\begin{proof}[Proof sketch] We include the proof sketch, while the formal details can be found in Appendix~\ref{app:ordered_search}.
		
		Using ${D}^o$ and ${R}^o$, we  show how for each $p \in (0,1]$, if the input is drawn from an arbitrary distribution $\Psi = (\Psi_1, \ldots, \Psi_n)$, one can design  a deterministic algorithm ${D}_{\Psi}^o$ that asks at most $k \lceil pn^{\frac{1}{k}} \rceil + 2$ queries in expectation and succeeds with probability at least $p$.
		The distribution-dependent deterministic algorithm ${D}_{\Psi}^o$ will simulate the execution of ${R}^o$ using the following steps. 
		
		\begin{description}
			\item[{Step 1.}]	Given   $\Psi$, define probability density $v: [0,1] \to \mathbb{R}$ by $v(x) = n  \Psi_i$ $\; \forall i \in [n]$ $\forall x \in \left[\frac{i-1}{n}, \frac{i}{n}\right]$.
			
			Let $\mathcal{C}$ denote the circle obtained by bending the interval $[0,1]$ so that the point $0$ coincides with $1$.
			A fixed point theorem  (Lemma~\ref{lem:length_p_probability_mass_p_ordered}) ensures  there is a  point $c \in [0,1]$ such that the interval $[c,c+p]$ on the circle $\mathcal{C}$ has probability mass $p$ (and length $p$). That is:
			\begin{enumerate}[(a)]
				\item $\int_{c}^{c+p} v(x) \, dx = p$, where $0 \leq c \leq 1-p$;  or 
				\item 
				$\int_{0}^{c}v(x) \, dx + \int_{c+1-p}^{1} v(x) \, dx = p $, where $ 1-p < c < 1$.
			\end{enumerate} 
			\begin{figure}[h!]
				\centering
				\includegraphics[scale=0.85]{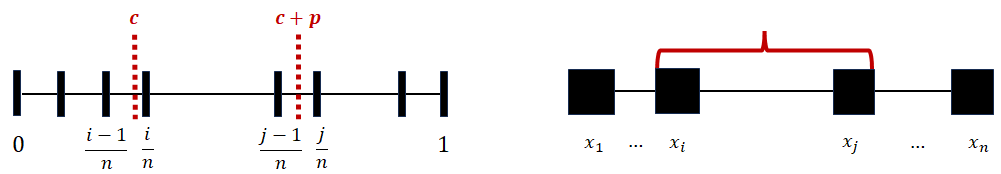}
				\caption{\small{Illustration for case (a) in step 1. Given  $\Psi = (\Psi_1, \ldots, \Psi_n)$, define  $v: [0,1] \to \mathbb{R}_{\geq 0}$  by $v(x) = n \cdot \Psi_{\ell}$ for all $\ell \in [n]$ and $x \in [(\ell-1)/n, \ell/n]$. The left figure shows the point $c$ with $\int_{c}^{c+p} v(x)\, dx  = p$. The right figure shows the queried sub-array  $\vec{y}_{\Psi} = [x_i, \ldots, x_j]$, of  length $\approx np$ and probability mass $\approx p$.}} \label{fig:intro:illustration_deterministic_on_hard_distribution_case_I}
			\end{figure}
			\item[{Step 2.}] The points $c$ and $c+p$ can be mapped to indices $i \in [n]$ and $j \in [n]$, respectively, so that one of the following conditions holds:
			\begin{description}
				\item[$\bullet$] $\vec{y}_{\Psi} = [x_i, \ldots, x_j]$ has length  $\approx np$ and probability mass   $\sum_{\ell=i}^{j} \Psi_{\ell} \approx p$; or 
				\item[$\bullet$] $\vec{y}_{\Psi} = [x_1, \ldots, x_i, x_j, \ldots, x_n]$ has length  $\approx np$ and probability mass $\sum_{\ell=1}^{i} \Psi_{\ell} + \sum_{\ell=j}^{n} \Psi_{\ell}\approx p$.
			\end{description}
			\item[{Step 3.}]		In the first round, algorithm ${D}_{\Psi}^o$ queries locations $i$ and $j$, as well as $\approx pn^{\frac{1}{k}}$ other equally spaced locations in the sub-array $\vec{y}_{\Psi}$. These queries create approximately $ pn^{\frac{1}{k}}$ blocks of size roughly $  \bigl( \frac{np}{pn^{{1}/{k}}} \bigr) \approx n^{\frac{k-1}{k}}$ each. Then:
			\begin{description}
				\item[$\bullet$] If the first round queries reveal the hidden element is not in $\vec{y}_{\Psi}$, then ${D}_{\Psi}^o$ gives up right away (i.e. outputs the empty string).
				\item[$\bullet$] Else, if the element is found at a location queried in round $1$, then ${D}_{\Psi}^o$ returns it and halts.
				\item[$\bullet$] Else, in the remaining $k-1$ rounds, run   ${D}^o$ on the block  identified to contain the element.
			\end{description} 
		\end{description}

		\paragraph{Expected number of queries of ${D}_{\Psi}^o$.} The block identified at the end of the first round has length  $\approx  n^{\frac{k-1}{k}}$. Moreover, ${D}_{\Psi}^o$ continues to the second round with probability $\approx p$. Thus the success probability is roughly $ p$ and the  
		total expected number of queries is approximately   
		\[  \bigl( pn^{\frac{1}{k}} + 2 \bigr) + p \cdot (k-1)\bigl(n^{\frac{k-1}{k}}\bigr)^{\frac{1}{k-1}} = pk n^{\frac{1}{k}} + 2 \,.
		\]
		
		In summary,  the deterministic algorithm ${D}_{\Psi}^o$ is able to generate an event of probability  $\approx p$ via the first round queries while also pre-partitioning a relevant sub-array. If the event does not take place, then ${D}_{\Psi}^o$ gives up. Otherwise, it  runs an optimal deterministic $(k-1)$-round algorithm on the block identified via the first round queries. 
		This strategy enables ${D}_{\Psi}^o$ to simulate the all-or-nothing structure of the optimal randomized algorithm and catch up with it fast enough so that the query complexity remains essentially the same. 
	\end{proof}
	
	
	\subsection{Lower bounds for  ordered search}
	
	We prove the next lower bound for randomized algorithms that succeed with probability $p$. 

	\begin{proposition} \label{thm:randomized_lb_k_rounds_ordered_search}
		Let $k,n \in \mathbb{N}^*$ and $p \in (0,1]$. Then 
		$\mathcal{R}_{1-p}(\mbox{\em ordered}_{n,k})  \geq pk n^{\frac{1}{k}} - 2 pk$ for $k\geq 2$ and $\mathcal{R}_{1-p}(\mbox{\em ordered}_{n,1})  \geq np$.
	\end{proposition}

	This lower bound has the same leading term as the upper bound achieved by  $\mathcal{D}_{\Psi}$, thus showing that the randomized and distributional complexity have the same order. The uniform distribution is the   worst case.

	We prove   Proposition~\ref{thm:randomized_lb_k_rounds_ordered_search}   by induction on the number $k$ of rounds. The induction step requires   showing polynomial inequalities, where the  polynomials involved have high degrees that are themselves functions of $k$. For $k \geq 4$, the roots of such polynomials cannot be found by a formula. To overcome this, we use  delicate  approximations of the polynomials by simpler ones that are more amenable to study yet  close-enough to the original polynomials to yield the required inequalities.
	
	Finally, we obtain the proof of Theorem~\ref{thm:main_ordered}  by combining the upper bound from    Proposition~\ref{prop:deterministic_alg_on_hard_distribution} and  the lower bound from Proposition \ref{thm:randomized_lb_k_rounds_ordered_search}.
	
	
	\section{Unordered search}

	In this section we analyze the unordered search problem and  prove  Theorems~\ref{thm:intro:search_unordered_rand} and \ref{thm:intro:search_unordered_det_uniform}, which quantify the 
	randomized and distributional complexity of unordered search algorithms, respectively. Theorem~\ref{thm:intro:search_unordered_rand} will follow from Propositions~\ref{thm:select_ub_rand} and \ref{thm:randomized_lb_k_rounds_unordered_search} stated next.  Theorem~\ref{thm:intro:search_unordered_det_uniform} will follow from Propositions~\ref{thm:select_ub} and \ref{thm:det_lb_k_rounds_unordered_search}.  
	The omitted proofs of this section are in Appendix~\ref{app:unordered_search}. 
	
	
	\subsection{Deterministic and randomized algorithms for unordered search on a worst case input}
	The maximum number of rounds for unordered search is  $n$.  Since with each location queried the only information an algorithm  receives is whether the element is at that location or not, a  $k$-round deterministic unordered search algorithm  that succeeds on every input  cannot do better than querying roughly $n/k$ queries in each round until finding the element. This gives a total of $n$ queries in the worst case.
	However, randomized algorithms can do better by querying locations uniformly at random. 
	
	\begin{proposition} \label{thm:select_ub_rand} Let $p \in (0,1]$ and $k,n \in \mathbb{N}^*$. Then 
		$\mathcal{R}_{1-p}(\mbox{unordered}_{n,k}) \leq np \cdot \frac{k+1}{2k} + p + \frac{p}{n} \,.$
	\end{proposition}
	The optimal randomized algorithm given by  Proposition~\ref{thm:select_ub_rand} has an all-or-nothing structure: 
	\begin{description}
		\item[$(i)$] with probability $1-p$, do nothing;  
		\item[$(ii)$] with probability $p$, select a uniform random permutation $\pi$ over $[n]$. For all $j \in [k]$, define  $S_j = \{\pi_1, \ldots, \pi_{m_j}\}$, where $m_j = \lceil nj/k\rceil$. In each round $j$, query the locations of $S_j$ that have not been queried in the previous $j-1$ rounds. Once the element is found, return it and halt.
	\end{description}
	
	\subsection{Deterministic algorithms for unordered search on random input}
	We have the following upper bound on the distributional complexity of unordered search.
	
	\begin{proposition} \label{thm:select_ub}
		Let $p \in (0,1]$ and $k,n \in \mathbb{N}^*$. 
		Then 
		\[\mathcal{D}_{1-p}(\mbox{unordered}_{n,k}) \leq 
		np \bigl( 1 - \frac{k-1}{2k} \cdot p\bigr) + 1+ p + \frac{2}{n}\,.
		\]
	\end{proposition}	
	
	Since the unordered search problem has less structure than ordered search, a deterministic algorithm receiving an element drawn from some distribution $\Psi$ will no longer be able to extract enough randomness from the answers to the first round queries to simulate the optimal randomized algorithm. Instead, the optimal deterministic algorithm will establish in advance a fixed set of $np$ locations and query those in the same manner as  step $(ii)$ of the optimal randomized algorithm. 
	
	However, since the search space becomes smaller as an algorithm checks more locations, the fact that the deterministic algorithm is forced to stop after at most $np$ queries  regardless of whether it found the element or not (to avoid exceeding  the optimal expected query bound), is a source of inefficiency. This is the main reason for which a deterministic algorithm receiving a random input cannot do as well as the optimal randomized algorithm that decided in advance to either do nothing or search all the way until finding the solution.


	\subsection{Lower bounds for unordered search}

	Finally, we   lower bound  the randomized and distributional complexity of unordered search. 
	
	\begin{proposition} \label{thm:randomized_lb_k_rounds_unordered_search}
		Let $p \in (0,1]$ and $k,n \in \mathbb{N}^*$. Then $\mathcal{R}_{1-p}(\mbox{unordered}_{n,k}) \geq np \cdot \frac{k+1}{2k} $.
	\end{proposition}
	
	\begin{proposition} \label{thm:det_lb_k_rounds_unordered_search}
		Let $p \in (0,1]$ and $k,n \in \mathbb{N}^*$. Then $\mathcal{D}_{1-p}(\mbox{unordered}_{n,k}) \geq  np\left(1-\frac{k-1}{2k}p\right) $.
	\end{proposition}

	\bibliographystyle{alpha}
	\bibliography{sorting-cake-rounds}
	
	\appendix 
	
	\section*{Roadmap to the appendix}  Appendix \ref{app:ordered_search}  contains the analysis of  ordered search. Appendix \ref{app:unordered_search} contains the analysis for unordered search. Appendix \ref{app:cake_cutting_and_sorting} contains the analysis for cake cutting and sorting in rounds. Appendix~\ref{app:folklore_lemmas} contains folklore lemmas that we use.
	
	\section{Appendix: Ordered search} \label{app:ordered_search}

	In this section we include the omitted proofs for ordered search, which constitute the proof of Theorem~\ref{thm:main_ordered}.  
	
	\subsection{Ordered search upper bounds} 

	In this section we describe an optimal deterministic algorithm for a worst case input,  an optimal randomized algorithm for a worst case input, and an optimal deterministic algorithm for an arbitrary input distribution.
	
	\paragraph{Deterministic algorithms for  a worst case input.} The optimal deterministic algorithm for a worst case input is given in the next proposition. 
	
	\bigskip 
	
	\noindent \textbf{Proposition~\ref{prop:k_round_deterministic_algorithm} (restated).} \emph{For each $n \in \mathbb{N}^*$ and $k \in [\lceil \log{n} \rceil]$, 
		there is a deterministic $k$-round algorithm for ordered search that  succeeds on every input and asks at most $k  \lceil n^{\frac{1}{k}}\rceil$ queries in the worst case.}
	\begin{proof}
		We design a $k$-round algorithm recursively, using induction on $k$.
		\medskip

		\noindent {\em Base case: $k=1$.} Let $\mathcal{A}_1$ be the following algorithm:
		\begin{itemize}
			\item Query all the elements of the array simultaneously. Return the correct location based on the results of the queries.
		\end{itemize}
		Then $\mathcal{A}_1$ runs in one round,  succeeds on every input, and the number of queries is at most $n$.
		
		\medskip

		\noindent{\em Induction hypothesis.} For $k \geq 2$, assume there is a $(k-1)$-round algorithm $\mathcal{A}_{k-1}$ that always succeeds and asks at most $(k-1) \cdot \lceil n^{\frac{1}{k-1}}\rceil$ queries on each array of length $n$. 
		
		\medskip 
		
		\noindent{\em Induction step.} Using the induction hypothesis, we will design a $k$-round algorithm $\mathcal{A}_k$ with the required properties. For each $s \in [n]$, write $n  = s \cdot  u_s + v_s$, for $u_s = \lfloor \frac{n}{s} \rfloor$ and $v_s = n  \; (\text{mod} \; s)$.  Let $\mathcal{A}_k(s)$ be the following algorithm:
		\begin{itemize}
			\item [$(i)$] In round $1$, query  locations $i_1, \ldots, i_s \in [n]$ with the property that   $1 < i_1 < \ldots < i_s = n$. Let $i_0 = 0$. Then these queries create $s$ contiguous blocks $B_1, \ldots, B_s$,  such that $B_j = [i_{j-1} + 1, i_j]$ for $j \in [s]$.
			
			For each $j \in [s]$, 
			set the size of  each block $B_j$ to $\lfloor \frac{n}{s} \rfloor$ if  $j \leq s-v_s$ and to $\lceil \frac{n}{s} \rceil$ if $j > s-v_s$. This uniquely determines indices $i_1, \ldots, i_{s}$. 
			
			If the element searched for is found at one of these $s$ locations, then return that location and halt. Otherwise, identify the index $\ell \in[s]$ for which the block $B_{\ell}$   contains the answer.
			\item [$(ii)$] Given index $\ell$ from step $(i)$ such that block  $B_{\ell} = [i_{\ell-1}+1, i_{\ell}]$ contains the answer, we observe that position $i_{\ell}$ is the only one from block $B_{\ell}$ that has  been queried so far. If $i_{\ell}-1 \geq i_{\ell-1}+1$, let  $\widetilde{B}_{\ell} = [i_{\ell-1}+1, i_{\ell}-1]$ and run algorithm $\mathcal{A}_{k-1}$  on  block $\widetilde{B}_{\ell}$. Else,  halt. 
		\end{itemize}
		We first show algorithm $\mathcal{A}_k(s)$ is correct for every choice of $s$, and then obtain $\mathcal{A}_k$ by optimizing  $s$.
		
		Algorithm $\mathcal{A}_k(s)$ is correct if  the choice of indices $i_1, \ldots, i_s$ is valid. This is the case if the sizes of the blocks  $B_1, \ldots, B_s$ sum up to $n$. We have 
		$\sum_{j=1}^s |B_j| = \left \lfloor {n}/{s} \right \rfloor \cdot (s-v_s) + \left \lceil  {n}/{s} \right \rceil \cdot v_s  \,.  
		$
		\begin{enumerate}[(a)]
			\item If $ v_s = 0$ then  $\lfloor n/s \rfloor = \lceil n/s \rceil = u_s$, so the sum of block sizes is 
			$	\sum\limits_{j=1}^s |B_j| = u_s \cdot (s-v_s) + u_s \cdot v_s =n\,.$
			\item If $v_s > 0$ then $\lceil n/s \rceil = u_s +1$, so  $
			\sum_{j=1}^s |B_j| = u_s \cdot (s-v_s) + (u_s+1) \cdot v_s = u_s \cdot s +v_s =n \,.
			$
		\end{enumerate}
		Combining (a) and (b), we get that the block sizes are valid. Thus $\mathcal{A}_k(s)$ does not skip any indices, so it always finds the element.
		
		\medskip 
		
		Next we argue that there is a choice of $s$ such that by setting $\mathcal{A}_k = \mathcal{A}_k(s)$, we obtain a  $k$-round algorithm that issues at most $k \lceil n^{\frac{1}{k}}  \rceil$ queries.
		
		For a fixed $s \in [n]$, the array size at the beginning of round $2$ is at most $m(s) = \max_{j \in [s]} |B_j| - 1$, since the rightmost element of each block $B_j$ has been queried in round $1$ while the rest of block $B_j$  has not been queried.  Then $m(s) = \max\left\{ \lfloor \frac{n}{s} \rfloor - 1, \lceil \frac{n}{s} \rceil -1 \right\} = \lceil \frac{n}{s} \rceil -1 $.
		
		The total number of queries of algorithm $\mathcal{A}_k(s)$ is at most 
		\begin{align} \label{eq:f_s_def}
			f(s) = s + (k-1) \cdot \left \lceil  m(s)^{\frac{1}{k-1}} \right \rceil = s + (k-1) \cdot \left \lceil  \left(  \left \lceil \frac{n}{s} \right \rceil -1 \right)^{\frac{1}{k-1}} \right \rceil \,.  
		\end{align}
		Taking $s=  \lceil n^{\frac{1}{k}}  \rceil$ in \eqref{eq:f_s_def}, we get 
		\begin{align}
			f\left( \lceil n^{\frac{1}{k}} \rceil\right) & =   \lceil n^{\frac{1}{k}}  \rceil + (k-1) \cdot \left \lceil  \left(  \left \lceil \frac{n}{ \lceil n^{\frac{1}{k}} \rceil} \right \rceil -1 \right)^{\frac{1}{k-1}} \right \rceil \notag \\
			& \leq  \lceil n^{\frac{1}{k}}  \rceil + (k-1) \cdot \left \lceil  \left(  \left \lceil \frac{n}{ n^{\frac{1}{k}}} \right \rceil -1 \right)^{\frac{1}{k-1}} \right \rceil \notag   \\
			& \leq  \lceil n^{\frac{1}{k}}  \rceil + (k-1) \cdot \left \lceil  \left(   \frac{n}{ n^{\frac{1}{k}}}  \right)^{\frac{1}{k-1}} \right \rceil  \notag \\
			& = k \cdot  \lceil  n^{\frac{1}{k}} \rceil \,. \notag 
		\end{align}
		Setting $\mathcal{A}_k = \mathcal{A}_k(\lceil n^{\frac{1}{k}}  \rceil)$, we obtain a correct $k$-round algorithm that issues at most $k \cdot  \lceil  n^{\frac{1}{k}} \rceil$ queries on every array with $n$ elements. This completes the induction step and the proof. 
	\end{proof}
	
	\paragraph{Randomized algorithms for a worst case input.} Building on the optimal deterministic algorithm for worst case input, we design next an optimal randomized algorithm.
	
	\bigskip 
	
	\noindent \textbf{Proposition~\ref{cor:ordered_randomized_p} (restated).} \emph{Let $p \in (0,1]$ and $k,n \in \mathbb{N}^*$. Then 
		$ \mathcal{R}_{1-p}(\mbox{unordered}_{n,k}) \leq  pk   \lceil n ^{\frac{1}{k}} \rceil$.}
	\begin{proof}
		Consider the following randomized algorithm:
		\begin{itemize}
			\item With probability $p$, run the deterministic algorithm $\mathcal{A}_{k}$ from Proposition \ref{prop:k_round_deterministic_algorithm}.
			\item With probability $1-p$, do nothing.
		\end{itemize} 
		On each input, by Proposition \ref{prop:k_round_deterministic_algorithm}, this algorithm succeeds with probability $p$ and issues at most $ pk   \lceil n ^{\frac{1}{k}} \rceil$ queries in expectation, as required.
	\end{proof}
	
	\paragraph{Deterministic algorithms for a random input.} We consider first  the case of $k=1$ rounds. With one round, there is no distinction between ordered and unordered search.
	\begin{proposition} \label{lem:ordered_search_one_round_deterministic_distribution}  
		Let $p \in (0,1]$ and $n \in \mathbb{N}^*$. Then 
		\begin{align}
			\mathcal{D}_{1-p}(\mbox{\em unordered}_{n,1}) \leq \lceil np \rceil \; \; \mbox{and} \; \; \mathcal{D}_{1-p}(\mbox{\em ordered}_{n,1}) \leq \lceil np \rceil \,.
		\end{align}
	\end{proposition}
	\begin{proof}
		Sort the elements of $\vec{x}$ in decreasing order by $\Psi$ and let $\pi$ be the permutation obtained, that is, $\Psi_{\pi_1} \geq \ldots \geq \Psi_{\pi_n}$. Let $\ell$ be  the smallest index for which $\sum_{i=1}^{\ell} \Psi_{\pi_i} \geq p$. Let $q = \sum_{i=1}^{\ell} \Psi_{\pi_i} \geq p$. Consider the following algorithm $\mathcal{A}$: 
		\begin{itemize}
			\item Query elements $x_{\pi_1}, \ldots, x_{\pi_{\ell}}$, i.e. compare each of them with $z$. If there is $i \in [\ell]$ such that $z = x_{\pi_i}$, then return  $\pi_i$.
		\end{itemize}
		By choice of $\ell$, the success probability of this algorithm is $q \geq p$. 
		The number of queries is $\ell$. Let $m = \lceil np\rceil$.  Then  $(m-1)/n < p \leq m/n$. 
		By Lemma~\ref{eq:simple_lemma_sorted_array}, we have 
		$\Psi_{\pi_1} + \ldots + \Psi_{\pi_m} \geq m/n\,.$ Since $\ell$ is the smallest index with $\Psi_{\pi_1} + \ldots + \Psi_{\pi_{\ell}} \geq p$, it follows that  $\ell \leq m  = \lceil np \rceil$.
	\end{proof}
	
	Using the deterministic algorithm of Proposition ~\ref{prop:k_round_deterministic_algorithm} and the randomized algorithm of Proposition~\ref{cor:ordered_randomized_p}, we can now design a deterministic algorithm that is designed to be optimal when the input is drawn from a distribution $\Psi$.
	
	\bigskip 
	
	\noindent \textbf{Proposition~\ref{prop:deterministic_alg_on_hard_distribution} (restated).} \emph{ Let $p \in (0,1]$ and $k, n \in \mathbb{N}^*$. Then  $\mathcal{D}_{1-p}(\mbox{\em ordered}_{n,k}) \leq k  \lceil p n^{\frac{1}{k}} \rceil + 2 $. Moreover, $ \mathcal{D}_{1-p}(\mbox{\em ordered}_{n,1}) \leq \lceil np \rceil $.}
	\begin{proof}
		
		The upper bound of $ D_{1-p}(\mbox{\em ordered}_{n,1}) \leq \lceil np \rceil $ for $k=1$ rounds holds by Proposition~\ref{lem:ordered_search_one_round_deterministic_distribution}. Thus from now on we can assume $k \geq 2$.
		
		At a high level, given input distribution $\Psi$, the deterministic algorithm for this distribution will consists of two steps:
		\begin{itemize}
			\item First, observe there  exists an interval $[i,j]$ on the array viewed on the circle (i.e. where index $n+1$ is the same as index $1$) that has probability mass roughly $pn$ and length roughly $pn$ as well. Find this interval offline without any queries. 
			\item Second,  use the interval identified in the first step to generate an event with probability $p$, thus simulating the randomized algorithm from Proposition~\ref{prop:k_round_deterministic_algorithm}. 
		\end{itemize}

		Formally, given input distribution $\Psi$,  define a probability density function $v: [0,1] \to \mathbb{R}_{\geq 0}$  by  \[
		v(x) = n \cdot \Psi_{i} \qquad \forall  i \in [n] \; \mbox{and} \; x \in [(i-1)/n, i/n] \,.
		\] 
		Then $\int_{0}^{1} v(x) \, dx = \sum_{i=1}^{n} \frac{1}{n} \cdot n \Psi_i = \sum_{i=1}^{n} \Psi_i = 1$.
		By Lemma~\ref{lem:length_p_probability_mass_p_ordered}, there exists  a point $c \in [0,1]$ such that  one of the following holds:
		\begin{enumerate}[(a)]
			\item $\int_{c}^{c+p} v(x) \, dx = p$, where $0 \leq c \leq 1-p$; 
			\item 
			$\int_{0}^{c}v(x) \, dx + \int_{c+1-p}^{1} v(x) \, dx = p $, where $ 1-p < c < 1$.
		\end{enumerate} 
		
		\paragraph{Case (a).} In this case there exists $c \in [0, 1-p]$ such that $\int_{c}^{c+p} v(x) \, dx = p$.
		
		We first make a few observations and then define the protocol. Let $i,j \in [n]$ be such that 
		\[ 
		\frac{i-1}{n} \leq c < \frac{i}{n} \qquad \mbox{and}  \qquad \frac{j-1}{n} \leq c+p <   \frac{j}{n} \,.
		\] 
		Let $T = j - i +1$. Then $np \leq T \leq np+2.$
		Since each interval $\left[ ({\ell-1})/{n}, {\ell}/{n}\right]$ corresponds to element $x_{\ell}$ of the array, we have $\sum_{\ell=i}^{j} \Psi_i \geq p$. 
		By choice of $i$ and $j$, we have:
		\begin{description} 
			\item[$\bullet$] 
			the sub-array $\vec{y} = [x_i, \ldots, x_j]$ has length $T \leq np+2$ and probability mass $\sum_{\ell=i}^{j} \Psi_i \geq p$. 
			\item[$\bullet$] if $T \geq 2$, the sub-array $\widetilde{\vec{y}} = [x_{i+1}, \ldots, x_{j-1}]$ has length  $T-2 \leq np$ and probability mass $\sum_{\ell=i+1}^{j-1} \Psi_i \leq p$. 
		\end{description}
		
		\begin{figure}[h!]
			\centering
			\includegraphics[scale=1]{deterministic_protocol_illustrations_I.png}
			\caption{\small{Given  distribution $\Psi = (\Psi_1, \ldots, \Psi_n)$, define probability density $v: [0,1] \to \mathbb{R}_{\geq 0}$  by $v(x) = n \cdot \Psi_{\ell}$ for all $\ell \in [n]$ and $x \in [(\ell-1)/n, \ell/n]$. The left figure shows an interval $[c, c+p]$ of length $p$ and probability mass $\int_{c}^{c+p} v(x) \, dx = p$. The right figure shows the queried sub-array $\vec{y} = [x_i, \ldots, x_j]$, which has  length $T = j - i +1 \leq np+2$ and probability mass $\sum_{\ell=i}^{j} \Psi_{\ell} \geq p$. When $T \geq 2$, the sub-array $\widetilde{\vec{y}} = [x_{i+1}, \ldots, x_{j-1}]$ has length $T-2 \leq np$ and probability mass  $\sum_{\ell=i+1}^{j-1} \Psi_{\ell} \leq  p$. }} \label{fig:illustration_deterministic_on_hard_distribution_case_I}
		\end{figure}
		Let $\mathcal{A}$ be the following  $k$-round protocol:
		\begin{description}
			\item[\emph{Step a.(i)}]  
			\begin{description}
				\item[If $T \leq 2$:] query locations $i$ and $j$ in round $1$. If the element is found, return it and halt.
				\item[Else:] since the element is guaranteed to be in the array $\vec{x}$, it must be the case that  $T \geq 3$. 
				Let $r =  \lceil p \cdot n^{\frac{1}{k}}  \rceil$. 
				Query in round $1$  locations $i$ and $j$, together with additional locations  $t_1, \ldots, t_r$ set as equally spaced as possible. 
				
				More precisely, require   
				$ i+1 \leq t_1 \leq \ldots \leq t_r = j-1   
				$, with    
				$t_0 = i.$
				For each $\ell \in [r]$, let 
				\[ B_{\ell} = [x_{(t_{\ell-1}+1)}, \ldots, x_{t_{\ell}}]
				\] 
				be the $\ell$-th block created by the queries $t_1, \ldots, t_r$. Define  indices $t_1, \ldots, t_r$   so that  each block $B_{\ell}$ has size  at most $\left\lceil \frac{T-2}{r} \right\rceil $, which is possible since  the sub-array $\widetilde{\vec{y}}$ has length $T-2$ and there are $r$ blocks.
				
				If the element is found at one of the indices $i, j, t_1, \ldots, t_r$ queried in round $1$, then return it and halt. Otherwise,  continue to step \emph{a.(ii)}.
			\end{description}
			
			\item[\emph{Step a.(ii)}] If the answers to round $1$  queries show the element is not at one of the indices $[i, \ldots,  j]$, then halt.
			Else, 
			let   $B_{\ell} = [x_{(t_{\ell-1}+1)}, \ldots, x_{t_{\ell}}]$ be the block identified to contain the element, where location $t_{\ell}$  has  been queried. Run the $(k-1)$-round deterministic protocol from Proposition~\ref{prop:k_round_deterministic_algorithm} on the sub-array $\overline{\vec{y}} = [x_{(t_{\ell-1}+1)}, \ldots, x_{(t_{\ell}-1 ) }]$, which always succeeds and asks at most $(k-1)  \cdot \left( \mbox{len}(\overline{\vec{y}}) \right)^{\frac{1}{k-1}}$ queries.
		\end{description}
		
		We now analyze the success probability and expected number of queries of  algorithm $\mathcal{A}$ described in steps \emph{a.(i-ii)}.
		
		\medskip 
		
		\noindent{\emph{Success probability.}} The algorithm is guaranteed to find the element precisely when it is located  in the sub-array $[x_i, \ldots, x_j]$. Since $\sum_{\ell=i}^{j} \Psi_{\ell} \geq p$, the success probability of the algorithm is at least $p$.
		
		\medskip 
		
		\noindent{\emph{Expected number of queries.}} We count separately the expected queries for round $1$ and the remainder.
		The number of queries issued in round $1$ is at most  
		\begin{align}  \label{eq:case_1_contiguous_interval_round_1_ordered}
			2 + r = 2 + \lceil p \cdot n^{\frac{1}{k}} \rceil \,.
		\end{align}
		The algorithm continues beyond round $1$  when the element is  in the sub-array $\widetilde{\vec{y}} = [x_{i+1}, \ldots, x_{j-1}]$, which has length $T-2 \leq np$ and probability mass $\sum_{\ell=i+1}^{j-1} \leq p$.
		
		Thus with probability at least $1-p$, the algorithm halts at the end of round $1$. With  probability at most $p$,  it continues beyond round $1$ by running step \emph{a.(ii)}. The  number of queries in step \emph{a.(ii)} is bounded by 
		\[ (k-1) \bigl( \left \lceil ({T-2})/{r}  \right \rceil  - 1 \bigr)^{\frac{1}{k-1}}
		\] by Proposition~\ref{prop:k_round_deterministic_algorithm} since  $\mbox{len}(\overline{\vec{y}}) \leq \lceil \frac{T-2}{r} \rceil - 1 $.
		Since $T-2 \leq np$ and $r = \lceil p \cdot n^{\frac{1}{k}} \rceil$,  the expected number of queries from step \emph{a.(ii)} can be bounded by  
		\begin{align} 
			& p \cdot (k-1) \left( \left \lceil \frac{T-2}{r} \right \rceil - 1  \right)^{\frac{1}{k-1}} + (1-p) \cdot 0  =  p \cdot (k-1) \left( \left \lceil \frac{T-2}{\lceil p \cdot n^{\frac{1}{k}} \rceil} \right \rceil - 1 \right)^{\frac{1}{k-1}} \notag \\
			& \qquad \leq p \cdot (k-1) \left( \left \lceil \frac{np}{\lceil p \cdot n^{\frac{1}{k}} \rceil} \right \rceil - 1 \right)^{\frac{1}{k-1}} \explain{Since $T-2 \leq np$} \\
			& \qquad \leq p \cdot (k-1) \cdot \left( \left \lceil \frac{np}{ p \cdot n^{\frac{1}{k}} } \right \rceil - 1 \right)^{\frac{1}{k-1}} \explain{Since  $\frac{np}{\lceil p \cdot n^{{1}/{k}} \rceil } \leq \frac{np}{p \cdot n^{{1}/{k}}}$}  \\
			& \qquad  \leq p \cdot (k-1) \cdot \left(   \frac{np}{ p \cdot n^{\frac{1}{k}} } \right)^{\frac{1}{k-1}} = p \cdot (k-1) \cdot n^{\frac{1}{k}}\,. \label{eq:case_1_contiguous_interval_round_2_to_k_ordered}
		\end{align}
		
		Combining \eqref{eq:case_1_contiguous_interval_round_1_ordered} and \eqref{eq:case_1_contiguous_interval_round_2_to_k_ordered}, the expected 
		number of queries of algorithm $\mathcal{A}$ is at most 
		\begin{align}
			2 + \lceil p \cdot n^{\frac{1}{k}} \rceil  + p \cdot (k-1) \cdot n^{\frac{1}{k}}  \leq k \lceil p  n^{\frac{1}{k}} \rceil + 2 \,.
		\end{align}
		
		\paragraph{Case (b).} In this case, there exists $c \in (1-p, 1)$ such that $\int_{0}^{c} v(x) \, dx + \int_{c+1-p}^{1} v(x) \, dx = p$. Let $i, j \in [n]$ be such that $(i-1)/n \leq c \leq i/n$ and  $(j-1)/n \leq c +p-1\leq j/n$. By choice of $i$ and $j$, we have $np \leq T \leq np+2$. Then 
		\begin{description}
			\item[$\bullet$] the sub-array $\vec{y} = [x_1, \ldots, x_i, x_j, \ldots, x_n]$ has length $T = n + i - j +1 \leq np+2$ and probability mass  $\sum_{\ell=1}^{i} \Psi_{\ell}  + \sum_{\ell=j}^{n} \Psi_{\ell}  \geq p$.
			\item[$\bullet$] the sub-array $\widetilde{\vec{y}} = [x_1, \ldots, x_{i-1}, x_{j+1}, \ldots, x_n]$ has length $T -2  \leq np$ and probability mass  \\ $\sum_{\ell=1}^{i-1} \Psi_{\ell} + \sum_{\ell=j+1}^{n} \Psi_{\ell}\geq p$.
		\end{description}

		\begin{figure}[h!]
			\centering
			\includegraphics[scale=1]{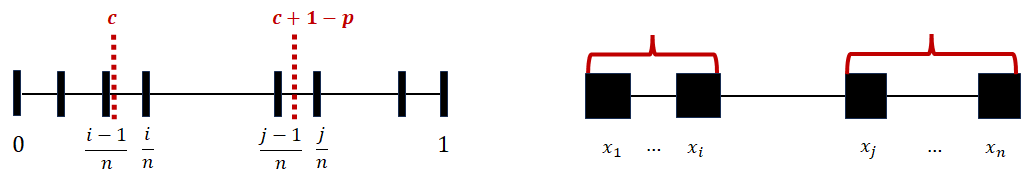}
			\caption{\small{Given  distribution $\Psi = (\Psi_1, \ldots, \Psi_n)$, define  $v: [0,1] \to \mathbb{R}_{\geq 0}$  by $v(x) = n \cdot \Psi_{\ell}$ for all $\ell \in [n]$ and $x \in [(\ell-1)/n, \ell/n]$. The left figure shows point $c$ with probability mass $\int_{0}^{c} v(x)\, dx + \int_{c+1-p}^{1} v(x) \, dx = p$. The right figure shows the queried sub-array consisting of two parts: $\vec{y} = [x_1, \ldots, x_i, x_j, \ldots, x_n]$, of  length $T = n+i-j +1 \leq np+2$ and probability mass $\sum_{\ell=1}^{i} \Psi_{\ell}+ \sum_{\ell=j}^{n} \Psi_{\ell} \geq p$. When $T \geq 2$, the sub-array $\widetilde{\vec{y}} = [x_{1}, \ldots, x_{i-1}, x_{j+1}, \ldots, x_n]$ has length $T-2 \leq np$ and probability mass  $\sum_{\ell=1}^{i-1} \Psi_{\ell}+ \sum_{\ell=j+1}^{n} \Psi_{\ell} \leq  p$.}} \label{fig:illustration_deterministic_on_hard_distribution_case_II}
		\end{figure}
		
		Let $\mathcal{A}$ be the same  $k$-round protocol as in case (a), but where the array $\vec{y}$ is treated as if it were contiguous when making queries:
		\begin{description}
			\item[\emph{Step b.(i)}]  
			\begin{description}
				\item[If $T \leq 2$:] query locations $i$ and $j$ in round $1$. If the element is found, return it.
				\item[Else,   $T \geq 3$.] 
				Let $r =  \lceil p \cdot n^{\frac{1}{k}}  \rceil$. 
				Query in round $1$  locations $i$ and $j$, together with additional locations  $t_1, \ldots, t_r \in \{1, \ldots, i-1, j+1, \ldots, n\}$, set as equally spaced as possible 
				so that 
				for each $\ell \in [r]$, 
				the size of each block  
				$B_{\ell} = [x_{(t_{\ell-1}+1)}, \ldots, x_{t_{\ell}}]$ is at most  $\lceil \frac{T-2}{r} \rceil$. At most one of the blocks may skip over the the indices in  $\{i, \ldots, j\}$.
				If the element is found at one of the queried locations then return it and halt. Else,  go to step \emph{b.(ii)}.
			\end{description}
			\item[\emph{Step b.(ii)}] If  round $1$  indicates that the element is not at one of the indices $\{1, \ldots, i, j, \ldots, n\}$, then halt.
			Otherwise, 
			let   $B_{\ell} = [x_{(t_{\ell-1}+1)}, \ldots, x_{t_{\ell}}]$ be the block identified to contain the element, where location $t_{\ell}$  has  been queried. Run the $(k-1)$-round deterministic protocol from Proposition~\ref{prop:k_round_deterministic_algorithm} on the sub-array $\overline{\vec{y}} = [x_{(t_{\ell-1}+1)}, \ldots, x_{(t_{\ell}-1 ) }]$, which always succeeds and asks at most $(k-1) \cdot \left( \mbox{len}(\overline{\vec{y}}) \right)^{\frac{1}{k-1}}$ queries.
		\end{description}
		
		Next we bound the success probability and expected number of queries when the algorithm executes steps $b.(i)$ and $b.(ii)$.
		
		\medskip 
		\noindent \emph{Success probability.} The algorithm finds the element when its location is one of $[1, \ldots, i, j, \ldots, n]$. Since $\sum_{\ell=1}^{i} \Psi_{\ell} + \sum_{\ell=j}^{n} \Psi_{\ell} \geq p$, the success probability is at least $p$.

		\medskip 
		\noindent \emph{Expected number of queries.}
		The expected number of queries in round $1$ is at most $2 + r =2 + \lceil p n^{\frac{1}{k}} \rceil $, while the number of queries after round $2$ is at most 
		\[
		p \cdot (k-1) \bigl( \left \lceil ({T-2})/{r}  \right \rceil 
		- 1 \bigr)^{\frac{1}{k-1}} \leq p \cdot (k-1) \cdot n^{\frac{1}{k}}\,.
		\]
		Thus the total expected number of queries is at most 
		$
		2 + \lceil p n^{\frac{1}{k}} \rceil +  p \cdot (k-1) \cdot n^{\frac{1}{k}} \leq 2 + k \lceil p n^{\frac{1}{k}} \rceil, 
		$
		which completes the proof.
	\end{proof}

	\subsection{Ordered search lower bounds}
	In this section we prove a lower bound that applies to both randomized algorithms on a worst case input and deterministic algorithms on a worst case input distribution. The lower bound considers the expected query complexity of randomized algorithms on the uniform distribution, which turns out to be the hardest distribution for ordered search.

	\bigskip 
	\noindent \textbf{Proposition~\ref{thm:randomized_lb_k_rounds_ordered_search} (restated).} \emph{Let $k,n \in \mathbb{N}^*$ and $p \in (0,1]$. Then 
		$\mathcal{R}_{1-p}(\mbox{\em ordered}_{n,k})  \geq pk n^{\frac{1}{k}} - 2 pk$ for $k\geq 2$ and $\mathcal{R}_{1-p}(\mbox{\em ordered}_{n,1})  \geq np$.}
	\begin{proof}

		For proving the required lower bound, it  will suffice to assume the input is drawn from the uniform distribution. This means the algorithm is given a bit vector where the location of the unique bit with value $1$ is chosen uniformly at random from $\{1, \ldots, n\}$.  
		If a lower bound  holds for a randomized algorithm  when the input is uniformly distributed, then  by an average argument the same lower bound also holds for a worst case input.
		
		Let  $\mathcal{A}_k$ be  an optimal  $k$-round randomized algorithm that succeeds with probability $p$ when facing the uniform distribution as input. Let $q_k(n,p)$ be the expected number of queries of algorithm $\mathcal{A}_k$ as a function of $n$  and  $p$.
		
		In round $1$, the algorithm has some probability $\delta_m$ of asking $m$ queries, for each $m \in \{0, \ldots, n\}$. Moreover, for  each such $m$, there are different (but finitely many) choices for the positions of the $m$ queries of round $1$. However, since the algorithm is optimal, it suffices to restrict attention to the best way of positioning the queries in round $1$, breaking ties arbitrarily if there are multiple equally good options.
		
		For  each $m \in \{0, \ldots, n\}$, we  define  the following  variables:
		
		\begin{itemize}
			\item $\delta_m$ is the probability that the algorithm asks $m$ queries in round one.
			\begin{figure}[h!] \label{fig:round_1_and_the_blocks_ordered}
				\centering 
				\includegraphics[scale=1]{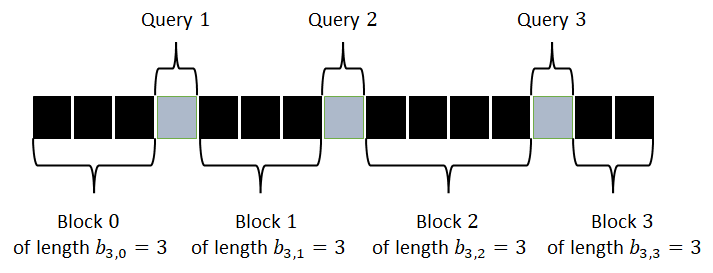}
				\caption{Array with $n=15$ elements. The $m=3$ locations issued in round $1$ are illustrated in gray. The resulting blocks demarcated by these queries are marked, such that the $i$-th block  has length $b_{m,i}$, for $i \in \{0, 1, 2, 3\}$.}
			\end{figure}
			\item $b_{m,i}$ is the size of the $i$-th block demarcated by the indices  queried in round $1$, excluding those indices, counting from left to right, for all  $i \in \{0, \ldots, m\}$. Thus $\sum_{i=0}^{m} b_{m,i} = n-m$. An illustration with an array and the blocks  formed by the queries issued in round $1$ can be found in Figure 1.
			\item  $\alpha_{m,i}$  the success probability of finding the element in the $i$-th block (as demarcated by the indices queried in round $1$),  given that the element is in this block.
		\end{itemize} 
		
		The expected number of queries of the randomized algorithm is 
		
		\begin{align} \label{eq:q_k_n_p}
			q_k(n,p) & = \sum_{m=0}^{n} \delta_m \left[ m + \left( \frac{n-m}{n}\right) \sum_{i=0}^{m}\left( \frac{b_{m,i}}{n-m}\right) \cdot q_{k-1}(b_{m,i}, \alpha_{m,i}) \right] \notag \\
			& = \sum_{m=0}^{n} \delta_m \left[ m + \frac{1}{n} \sum_{i=0}^{m} {b_{m,i}} \cdot q_{k-1}(b_{m,i}, \alpha_{m,i}) \right],
		\end{align}
		where the variables are related by the  following constraints:
		\begin{align}
			&  \sum_{i=0}^{m} b_{m,i} = n-m, \qquad \forall m \in \{0, \ldots, n\} \\
			&  \sum_{m=0}^{n} \delta_m  = 1 \\
			&	p_m = \frac{m}{n} + \frac{n-m}{n}\cdot \sum_{i=0}^{m} \frac{b_{m,i}}{n-m} \cdot \alpha_{m,i} = \frac{m}{n} + \frac{1}{n}\cdot \sum_{i=0}^{m} {b_{m,i}} \cdot \alpha_{m,i}, \qquad \forall m \in \{0, \ldots, n\}   \label{p_m_as_function_sum_of_blocks} \\ 
			& 			p = \sum_{m=0}^n \delta_m \cdot p_m \label{eq:ordered_p_sum_of_p_m} \\ 
			&  b_{m,i} \geq 0, \qquad \forall m \in \{0, \ldots, n\}, i \in \{0, \ldots, m\} \\
			&  0 \leq \alpha_{m,i} \leq 1, \qquad \forall m \in \{0, \ldots, n\}, i \in \{0, \ldots, m\}\\
			& \delta_{m} \geq 0, \qquad \forall m  \in \{0, \ldots, n\} \,.
		\end{align}
		
		Let  $\{\gamma_{\ell}\}_{\ell = 1}^{\infty}$ be the sequence given by  $\gamma_1 = 0$ and $\gamma_{\ell} = 2 \ell$ for $\ell \geq 2$.
		We will prove by induction  on $k$ that 
		\begin{align}  \label{eq:induction_lower_bound_k_ordered}
			q_k(n,p) \geq p \bigl(  k \cdot n^{\frac{1}{k}} -  \gamma_k\bigr) \qquad \forall n,k \geq 1 \mbox{ and }  p \in [0,1] \,.
		\end{align}
		
		\paragraph{Base case.} Proposition \ref{prop:lb_one_round} gives  $q_1(n,p) \geq np$, so  inequality \eqref{eq:induction_lower_bound_k_ordered} holds with $\gamma_1 = 0$ for all $n \geq 1$ and $p \in [0,1]$.
		
		\paragraph{Induction hypothesis.} Suppose  $q_{\ell}(m,s) \geq s \bigl( \ell \cdot  m^{\frac{1}{\ell}} - \gamma_{\ell}\bigr)$ for all $ \ell \in [k-1]$, $m \geq 1$, $s \in [0,1]$.
		
		\paragraph{Induction step.} We will show that \eqref{eq:induction_lower_bound_k_ordered} holds for  $k \geq 2$, where $n \geq 1$ and $p \in [0,1]$. The bound clearly holds when $p = 0$, so we will focus on the scenario  $p > 0$.
		For each $m \in \{0, \ldots, n\}$, define 
		\begin{align} \label{eq:q_k_n_p_m_definition}
			r_k(n,m,p) & =   m +  \frac{1}{n} \cdot  \sum_{i=0}^{m} {b_{m,i}}  \cdot q_{k-1}(b_{m,i}, \alpha_{m,i})  \,.
		\end{align}
		By definition of $q_{k}(n,p)$, 
		\begin{align}
			q_{k}(n,p) = \sum_{m=0}^{n} \delta_m \cdot  r_k(n,m,p)\,. \label{eq:q_k_n_p_as_function_of_r_k_n_p_m}
		\end{align}
		The induction hypothesis implies  $q_{k-1}(b_{m,i}, \alpha_{m,i}) \geq \alpha_{m,i} \cdot \Bigl((k-1)\left(b_{m,i}\right)^{\frac{1}{k-1}}-\gamma_{k-1} \Bigr)$, which substituted in (\ref{eq:q_k_n_p_m_definition}) gives 
		\begin{align} \label{eq:q_k_n_p_m_inequality}
			r_k(n,m,p)   & \geq  m   + \left( \frac{k-1}{n}  \right) \cdot  \sum_{i=0}^{m}   \alpha_{m,i} \cdot \left(b_{m,i}\right)^{\frac{k}{k-1}} - \left( \frac{\gamma_{k-1}}{n}\right)  \cdot \sum_{i=0}^{m}  \alpha_{m,i} \cdot {b_{m,i}}  \,.
		\end{align}
		
		Given a choice of $\alpha_{m,i}, b_{m,i}$ for all $m\in\{0,\ldots, n\}$ and $i \in \{0, \ldots, m\}$, 
		let $i_0, \ldots, i_m \in \{0, \ldots, m\}$ be  such that $0 \leq \alpha_{m,i_0} \leq \ldots \leq \alpha_{m,i_m} \leq 1$. 
		Then we can decompose  $p_m$ using a telescoping sum: 
		\begin{align}  
			p_m & = \frac{m}{n} + \frac{1}{n}\cdot \sum_{i=0}^{m} {b_{m,i}} \cdot \alpha_{m,i} \explain{By \eqref{p_m_as_function_sum_of_blocks}} \\
			& = \alpha_{m,i_0} \cdot \left[ \frac{m}{n} + \frac{1}{n}  \cdot \sum_{\ell=0}^m  b_{m,i_{\ell}}  \right] +  
			\left\{ \sum_{j=1}^{m} (\alpha_{m,i_{j}} - \alpha_{m,i_{j-1}}) \cdot \left[  \frac{m}{n} + \frac{1}{n}  \cdot \sum_{\ell=j}^m  b_{m,i_{\ell}}  \right]\right\} + (1 - \alpha_{m,i_m}) \cdot \frac{m}{n} \,. \label{eq:p_m_decomposed_k_rounds_telescope}
		\end{align}
		We can similarly decompose the right hand side of  inequality \eqref{eq:q_k_n_p_m_inequality}, obtaining:
		\begin{align} 
			r_k(n,m,p)  	&  \geq m + \frac{k-1}{n} \cdot \sum_{i=0}^m \alpha_{m,i} \cdot \left(b_{m,i}\right)^{\frac{k}{k-1}} - \frac{\gamma_{k-1}}{n} \cdot  \sum_{i=0}^m \alpha_{m,i} \cdot b_{m,i} \explain{By \eqref{eq:q_k_n_p_m_inequality}} \\
			& = \alpha_{m,i_0} \cdot \left[  m + \frac{k-1}{n}  \cdot \sum_{\ell=0}^m  \left(b_{m,i_{\ell}}\right)^{\frac{k}{k-1}} - \frac{\gamma_{k-1}}{n}  \cdot \sum_{\ell=0}^m b_{m,i_{\ell}} \right] \notag \\
			& \qquad  + \sum_{j=1}^{m}  (\alpha_{m,i_j} - \alpha_{m,i_{j-1}}) \cdot \left[   m + \frac{k-1}{n}  \cdot \sum_{\ell=j}^m  \left(b_{m,i_{\ell}}\right)^{\frac{k}{k-1}} - \frac{\gamma_{k-1}}{n}  \cdot \sum_{\ell=j}^m b_{m,i_{\ell}} \right] \notag \\
			& \qquad +   (1 - \alpha_{m,i_m}) \cdot m\,.  \label{eq:telescope_objective_m_branch_k_rounds}
		\end{align}

		Let $w_{m,0} = \alpha_{m,i_0}$, $w_{m,j} = \alpha_{m,i_j} - \alpha_{m,i_{j-1}}$ for all $j \in \{1, \ldots, m\}$, and $w_{m,m+1} = 1 - \alpha_{m,i_m}$. Then we can rewrite   \eqref{eq:p_m_decomposed_k_rounds_telescope} and \eqref{eq:telescope_objective_m_branch_k_rounds} as follows:
		
		\begin{align}  \label{eq:inequality_q_k_n_p_m_abstract}
			r_{k}(n,m,p) & \geq \sum_{j=0}^m w_{m, j} \cdot \left[  m + \frac{k-1}{n}  \cdot \sum_{\ell=j}^m  \left(b_{m,i_{\ell}}\right)^{\frac{k}{k-1}} - \frac{\gamma_{k-1}}{n}  \cdot \sum_{\ell=j}^m b_{m,i_{\ell}} \right] + \Bigl(w_{m,m+1} \cdot m \Bigr)  \\
			p_m & =  \sum_{j=0}^m w_{m, j} \cdot \left[  \frac{m}{n} +  \frac{1}{n}  \cdot \sum_{\ell=j}^m b_{m,i_{\ell}} \right] + \Bigl(w_{m,m+1} \cdot \frac{m}{n} \Bigr)\,. \label{eq:identity_p_m_abstract_ordered}
		\end{align}

		For each $m \in \{0, \ldots, n\}$ and $j \in \{0, \ldots, m+1\}$, define   
		\begin{align}
			p_{m,j} & =
			\begin{cases}
				&   \frac{m}{n} +  \frac{1}{n}  \cdot \sum_{\ell=j}^m b_{m,i_{\ell}} \; \; \mbox{ if } j \in \{0, \ldots, m\}\,.
				\\
				& \\ 
				&  \frac{m}{n} \; \; \mbox{ if } j = m+1\,.
			\end{cases} \label{eq:definition_p_m_j_ordered} \\ 
			r_k^j(n,m,p) & =
			\begin{cases}
				m + \frac{k-1}{n}  \cdot \sum_{\ell=j}^m  \left(b_{m,i_{\ell}}\right)^{\frac{k}{k-1}} - \frac{\gamma_{k-1}}{n}  \cdot \sum_{\ell=j}^m   b_{m,i_{\ell}} & \text{ if } j \in \{0, \ldots, m\} \,. \\
				m & \text{ if } j = m+1 \,. 
			\end{cases} \label{eq:definition_q_k_j_n_m_p}
		\end{align}
		Substituting the definition of $r_{k}^{j}(n,m,p)$  in  \eqref{eq:inequality_q_k_n_p_m_abstract}  and that  of  $p_{m,j}$  in \eqref{eq:identity_p_m_abstract_ordered} yields   
		\begin{align} \label{eq:r_k_n_p_m_as_function_of_r_k_j_n_m_p}
			r_k(n,m,p)  \geq  \sum_{j=0}^{m+1} w_{m,j} \cdot r_k^j(n,m,p)  \; \; \mbox{ and } \; \; 
			p_m  = \sum_{j=0}^{m+1} w_{m,j} \cdot p_{m,j}\,.
		\end{align}
		Combining  \eqref{eq:ordered_p_sum_of_p_m}, \eqref{eq:q_k_n_p_as_function_of_r_k_n_p_m}, and \eqref{eq:r_k_n_p_m_as_function_of_r_k_j_n_m_p}, we obtain  
		\begin{align} \label{eq:tying_together_q_k_n_p_function_of_r_k_and_p_as_function_of_p_m}
			& q_k(n,p)  = \sum_{m=0}^n \delta_m \cdot r_k(n,m,p) \geq \sum_{m=0}^n \delta_m \cdot  \left(\sum_{j=0}^{m+1} w_{m,j} \cdot r_k^j(n,m,p) \right)\,. \notag \\
			& p = \sum_{m=0}^n p_m = \sum_{m=0}^n \left( \sum_{j=0}^{m+1} w_{m,j} \cdot p_{m,j}\right) \,.
		\end{align} 
		Let $S_{m,j}  = \sum_{\ell=j}^m b_{m,i_{\ell}}$,  for all $m \in \{0, \ldots, n\}$ and  $j \in \{0, \ldots, m\}$.
		Then for $ j \in \{0, \ldots, m\}$, we have   $n \cdot p_{m,j} = m + S_{m,j}$,  so $S_{m,j} = n \cdot p_{m,j} - m$. Since $S_{m,j} \geq 0$, we have $n \cdot p_{m,j} \geq m$. In summary, 
		\begin{align}
			& \sum_{\ell=j}^{m} b_{m, i_{\ell}} = n \cdot p_{m,j} - m, \qquad  \forall j \in \{0, \ldots, m\} \label{eq:sum_of_j_to_m_blocks_identity_ordered}\\
			& n \cdot p_{m,j} \geq m,  \qquad  \forall j \in \{0, \ldots, m\}  \label{eq:n_times_p_m_j_inequality_at_least_m_ordered}\,.
		\end{align}

		Next we will lower bound $r_k^j(n,m,p)$ and consider two cases, for  $m \geq 1$ and  $m = 0$.
		
		\paragraph{Case $m \geq 1$.} 
		If $j = m+1$, we are in  the scenario where the algorithm asks $m$ queries in round $1$ and no queries in the later rounds. 
		Formally, since  $p_{m,m+1} = m/n$, we have $m = n \cdot p_{m,m+1}$. Using the identity for $r_{k}^{m+1}(n,m,p)$ in \eqref{eq:definition_q_k_j_n_m_p}, we obtain 
		\begin{align} 
			r_k^{m+1}(n,m,p) & = m  =  n \cdot p_{m,m+1} \notag   \\
			& \geq p_{m,m+1} \cdot k n^{\frac{1}{k}} - \gamma_k \cdot p_{m,m+1} \,. \explain{By Corollary~\ref{corr:helper_lemma_1}.}
		\end{align} 
		
		Thus from now on we can assume $j \in \{0, \ldots, m\}$.  Observe that by definition of $p_{m,0}$ in \eqref{eq:definition_p_m_j_ordered}, we have $p_{m,0} = {m}/{n} + \sum\limits_{\ell=0}^{m} b_{m,i_{\ell}}/n = {m}/{n} +{(n-m)}/{n}= 1 \,. $   For all $j \in \{0, \ldots, m\}$, using  \eqref{eq:definition_q_k_j_n_m_p} and Jensen's inequality, we obtain 
		\begin{align}
			r_k^j(n,m,p) & = m + \frac{k-1}{n}  \cdot \sum_{\ell=j}^m  \left(b_{m,i_{\ell}}\right)^{\frac{k}{k-1}} - \frac{\gamma_{k-1}}{n}  \cdot \sum_{\ell=j}^m b_{m,i_{\ell}}  \notag  \\
			&  \geq  m + \frac{(k-1)\left(m-j+1\right)}{n}   \left(\sum_{\ell=j}^m  \frac{b_{m,i_{\ell}}}{m-j+1}\right)^{\frac{k}{k-1}} - \frac{\gamma_{k-1}}{n}   \cdot \sum_{\ell=j}^m b_{m,i_{\ell}}  \,.  \label{eq:case_j_at_least_1_case_a_after_jensen_ordered} 
		\end{align}
		Since $\sum_{\ell = j}^{m} b_{m, i_{\ell}} =   n \cdot p_{m,j} - m$ by \eqref{eq:sum_of_j_to_m_blocks_identity_ordered}, the inequality in \eqref{eq:case_j_at_least_1_case_a_after_jensen_ordered} can be rewritten as 
		\begin{align}
			r_k^j(n,m,p) & 	 \geq   m \left( 1 + \frac{\gamma_{k-1}}{n}\right) + \frac{(k-1)  ( n \cdot p_{m,j} - m)^{\frac{k}{k-1}} }{n \cdot (m-j+1)^{\frac{1}{k-1}}}  - {\gamma_{k-1}} \cdot   p_{m,j}   \,.  \label{eq:inequality_with_r_k_j_containing_mj1_ordered}
		\end{align}
		When $j = 0$, 
		substituting $\sum_{\ell=0}^m b_{m, i_{\ell}} = n - m$ in  \eqref{eq:inequality_with_r_k_j_containing_mj1_ordered}, we obtain  
		\begin{align}
			r_{k}^0(n,m,p) 	&  \geq  m \left( 1 + \frac{\gamma_{k-1}}{n}\right) + \frac{(k-1)({n-m})^{\frac{k}{k-1}}}{n \cdot \left(m+1\right)^{\frac{1}{k-1}}}   - {\gamma_{k-1}}  \notag   \\
			&  \geq k n^{\frac{1}{k}} - \gamma_k  \explain{By Lemma~\ref{lem:helper_lemma_main_2}.}\\
			& = p_{m,0} \cdot  k n^{\frac{1}{k}} - p_{m,0} \cdot \gamma_k \,. \explain{Since $p_{m,0}= 1$}
		\end{align}
		Thus from now on we can assume  $j \in \{ 1, \ldots, m\}$. Using $j \geq 1$ in \eqref{eq:inequality_with_r_k_j_containing_mj1_ordered}, we further get 
		\begin{align} 
			r_k^j(n,m,p)		& \geq m \left( 1 + \frac{\gamma_{k-1}}{n}\right) + \frac{(k-1) ( n \cdot p_{m,j} - m)^{\frac{k}{k-1}}}{n \cdot m^{\frac{1}{k-1}}} - \gamma_{k-1} \cdot p_{m,j}   \,. \label{eq:intermediate_inequality_m_positive_not_m_plus_1_j_nonzero_ordered}
		\end{align}
		In this range of $m$ and $j$, we have $m/n \leq p_{m,j} \leq 1$ and $1/2 < m \leq n \cdot p_{m,j}$ by inequality  \eqref{eq:n_times_p_m_j_inequality_at_least_m_ordered}. Applying Lemma \ref{lem:helper_lemma_main_1} with $c = p_{m,j}$  in \eqref{eq:intermediate_inequality_m_positive_not_m_plus_1_j_nonzero_ordered}, we obtain:
		\begin{align}
			r_k^j(n,m,p)  &\geq    m \left( 1 + \frac{\gamma_{k-1}}{n}\right) + \frac{(k-1) ( n \cdot p_{m,j} - m)^{\frac{k}{k-1}}}{n \cdot m^{\frac{1}{k-1}}} - \gamma_{k-1} \cdot p_{m,j}   \explain{By \eqref{eq:intermediate_inequality_m_positive_not_m_plus_1_j_nonzero_ordered}} \\ 
			& \geq p_{m,j} \cdot  k n^{\frac{1}{k}} - \gamma_k \cdot p_{m,j} \,. \explain{By Lemma~\ref{lem:helper_lemma_main_1}}
		\end{align}
		%

		\paragraph{Case  $m = 0$.} This corresponds to the scenario where the algorithm asks zero queries in round $1$. Since $j \in \{0, \ldots, m+1\}$, it follows that $j = 0$ or $j=1$.

		If $j=0$, then by definition of $p_{m,j}$ we have  $p_{0,0} = 0/n + (1/n) \cdot \sum_{\ell=0}^0 b_{0,i_{\ell}} = 0 + b_{0, i_0}/n$. Since there is only one block, $b_{0,i_0} = n$. Thus $p_{0,0} = 1$.  We get	\begin{align}
			r_k^0(n,0,p) & = 0 + \frac{k-1}{n}  \cdot \sum_{\ell=0}^0  \left(b_{0,i_{\ell}}\right)^{\frac{k}{k-1}} - \frac{\gamma_{k-1}}{n}  \cdot \sum_{\ell=0}^0 b_{0,i_{\ell}}  \notag  \\
			& = ({k-1}) \cdot n^{\frac{1}{k-1}} - \frac{\gamma_{k-1}}{n} \cdot n \explain{Since  $b_{0,i_0} = n$} \\
			& \geq  kn^{\frac{1}{k}} -  \gamma_{k} \explain{By Corollary  \ref{lem:ordered_lb_case_II_0}}\\ 
			& = p_{0,0} \cdot kn^{\frac{1}{k}} - p_{0,0} \cdot \gamma_{k} \,. \explain{Since $p_{0,0} = 1$}
		\end{align}

		If $j=1$, then since $m=0$ we are in the case $j = m+1$. Since $p_{m,m+1} = m/n$, we have $p_{0,1} = 0/n = 0$. Informally, this corresponds to the scenario where the algorithm asks $m=0$ queries in round $1$ and no queries in the later rounds either. Formally, 
		\begin{align}
			r_k^1(n,0,p) & = 0 = p_{0,1} \cdot k n^{\frac{1}{k}} - p_{0,1} \cdot \gamma_k\,.
		\end{align}
		
		\paragraph{Combining cases $m\geq 1$ and $m=0$.} We obtain   
		\begin{align} \label{eq:unified_bound_q_m_j}
			r_k^j(n,m,p) \geq p_{m,j} \cdot kn^{\frac{1}{k}} - p_{m,j} \cdot \gamma_k \; \; \forall m \in \{0, \ldots, n\}, \forall j \in \{0, \ldots, m+1\}
		\end{align}
		Summing inequality \eqref{eq:unified_bound_q_m_j} over all $m \in \{0, \ldots, n\}$ and  $j \in \{0, \ldots, m+1\}$ and using identity \eqref{eq:tying_together_q_k_n_p_function_of_r_k_and_p_as_function_of_p_m} that expresses the total expected number of queries $q_k(n,p)$ as a weighted sum of the $r_k^j(n,m,p)$ terms, we obtain 
		\begin{align}
			q_k(n,p) &  =\sum_{m=0}^n \delta_m \cdot  \left(\sum_{j=0}^{m+1} w_{m,j} \cdot r_k^j(n,m,p) \right) \notag \\
			& \geq \sum_{m=0}^n \delta_m \cdot  \left(\sum_{j=0}^{m+1} w_{m,j} \cdot \left( p_{m,j} \cdot kn^{\frac{1}{k}} - p_{m,j} \cdot \gamma_k \right) \right) \explain{By inequality \eqref{eq:unified_bound_q_m_j}} \\
			& =  \sum_{m=0}^n \delta_m \cdot p_m \cdot \left(kn^{\frac{1}{k}} - \gamma_k\right) \explain{Since $p_m  = \sum_{j=0}^{m+1} w_{m,j} \cdot p_{m,j}$ by \eqref{eq:r_k_n_p_m_as_function_of_r_k_j_n_m_p}}\\
			& = p \cdot \left(kn^{\frac{1}{k}} - \gamma_k\right)\,. \explain{Since $p = \sum_{m=0}^n \delta_m \cdot p_m$ by \eqref{eq:ordered_p_sum_of_p_m}}
		\end{align}
		This completes the induction step and the proof.
	\end{proof}
	
	We  consider separately the case of $k=1$ rounds, giving a lower bound that applies to both ordered and unordered search.
	
	\begin{proposition}\label{prop:lb_one_round}  Let $p \in (0,1]$ and $n \in \mathbb{N}_{\geq 1}$. Then 
		\begin{align}
			\mathcal{R}_{1-p}(\mbox{\em unordered}_{n,1}) \geq  np  \; \; \mbox{and} \; \; \mathcal{R}_{1-p}(\mbox{\em ordered}_{n,1}) \geq   np  \,.
		\end{align} 
	\end{proposition}
	\begin{proof}
		We show the lower bound for randomized algorithms when facing the uniform distribution. For one round, there is no distinction between ordered and unordered search. By an average argument, the lower bound obtained applies to a worst case input.
		
		Let $\mathcal{A}_1$ be a randomized algorithm that runs in one round and  succeeds with probability $p$ when given an input drawn from the uniform distribution. Let $q_1(n,p)$ be the expected number of queries of $\mathcal{A}_1$ as a function of the input size $n$ and the success probability $p$.  Denote by $\delta_m$ the probability that the algorithm issues $m$ queries in round $1$. Since there is no second round and the input distribution is uniform, the location of these queries does not matter. 
		
		The expected number of queries issued by the algorithm on the uniform input distribution can be written as 
		\[
		q_1(n,p) = \sum_{m=0}^n \delta_m \cdot m,
		\]
		where 
		\begin{align}
			& \sum_{m=0}^n \delta_m = 1 \\
			& p = \sum_{m=0}^n \delta_m \cdot \left( \frac{m}{n}\right) \\
			& \delta_m \geq 0 \; \; \forall m \in \{0, \ldots, n\}\,.
		\end{align}
		Thus  we have $q_1(n,p) = \sum_{m=0}^n \delta_m \cdot m = np$.
	\end{proof}

	\subsection{Lemmas for ordered search proofs}	In this section we include the lemmas  used to prove the ordered search upper and lower bounds. 
	
	\begin{lemma} \label{lem:helper_lemma_main_2}
		Let $k \geq 2, n\geq 1$,  and the sequence $\{\gamma_{\ell}\}_{\ell= 1}^{\infty}$ with  $\gamma_1 = 0$ and   $\gamma_{\ell} = 2 \ell$ for all $\ell \geq 2$.
		Then  
		\begin{align} \label{eq:required_inequality_ckn_non-boundary_j_zero}
			x\left(1 + \frac{\gamma_{k-1}}{n}\right) + ({k-1}) \cdot \frac{(n  - x)^{\frac{k}{k-1}}}{n \cdot (x+1)^{\frac{1}{k-1}}} - \gamma_{k-1}  \geq   k n^{\frac{1}{k}} - \gamma_k \; \; \forall x \in (1/2, n]\,. 
		\end{align}
	\end{lemma}
	\begin{proof} 
		Let $t = \left( \frac{n-x}{x+1}\right)^{\frac{1}{k-1}}$. Then $t$ is decreasing in $x$. Since $x \in (1/2, n]$, we have   $0 \leq t < \left( \frac{2n-1}{3}\right)^{\frac{1}{k-1}}$. Expressing $x$ in terms of $t$ we get
		\begin{align} x = \frac{n - t^{k-1}}{t^{k-1}+1}\,. \label{eq:case_j_0_x_in_terms_of_t_ordered}
		\end{align}
		Substituting \eqref{eq:case_j_0_x_in_terms_of_t_ordered} in  \eqref{eq:required_inequality_ckn_non-boundary_j_zero}, we get that \eqref{eq:required_inequality_ckn_non-boundary_j_zero} is equivalent to  
		\begin{align} \label{eq:required_inequality_j_zero_rewritten_in_terms_of_t_ordered}
			&   t^{k} \cdot (k-1)(n+1) - t^{k-1} \cdot \left( kn^{\frac{k+1}{k}} + n + \gamma_{k-1} (n+1) -n \gamma_k   \right) + \left(n^2 +  n \gamma_k  - kn^{\frac{k+1}{k}} \right) \geq 0  \notag \\
			&   \qquad \qquad \qquad \qquad \qquad \qquad \qquad \qquad \qquad \qquad \qquad \qquad \qquad \qquad  \; \; \;   \forall \; 0 \leq  t < \left( \frac{2n-1}{3}\right)^{\frac{1}{k-1}}  \,.
		\end{align}
		
		We consider two cases, for $k=2$ and $k \geq 3$.
		
		\paragraph{Case  $k=2$.}	  Since $\gamma_1=0$ and $\gamma_2 = 4$, inequality \eqref{eq:required_inequality_j_zero_rewritten_in_terms_of_t_ordered}  is equivalent to  
		\begin{align}
			& t^2 \cdot (n+1) - t \cdot \left(2n \sqrt{n} - 3n \right)  + n^2 -2n \sqrt{n} + 4n \geq 0 \qquad \forall \;  0 \leq t  <  \frac{2n-1}{3}  \label{eq:required_inequality_boundary_j_zero_k_2} \,.
		\end{align}
		Inequality~\eqref{eq:required_inequality_boundary_j_zero_k_2} holds by Lemma~\ref{lem:helper_lemma_case1b_2_rounds}.

		\paragraph{Case  $k \geq 3$.}	 
		Since  $\gamma_1 = 0$ and $\gamma_{\ell} = 2 \ell$ for $\ell \geq 2$, inequality \eqref{eq:required_inequality_j_zero_rewritten_in_terms_of_t_ordered} can be simplified to  
		\begin{align}
			& t^k \cdot (k-1)(n+1) - t^{k-1} \cdot \left( k \cdot n^{\frac{k+1}{k}} - n + 2k -2\right) + n^2 + 2kn - k n^{1 + \frac{1}{k}} \geq 0 \notag \\
			& \qquad  \qquad   \qquad \qquad  \qquad   \qquad  \qquad  \qquad   \qquad 
			\qquad  \qquad   \qquad \forall \; 0 \leq t <  \left(\frac{2n-1}{3}\right)^{\frac{1}{k-1}} \,. \label{eq:required_inequality_intermediate_case_1b_k_at_least_3}
		\end{align}
		
		Inequality \eqref{eq:required_inequality_intermediate_case_1b_k_at_least_3} holds by   Lemma~\ref{lem:helper_lemma_case1b_k_more_than_3_rounds} for all $k\geq 3$. 
		This completes the proof. 
	\end{proof}
	
	\begin{lemma} \label{lem:helper_lemma_case1b_2_rounds}
		Let $n \geq 1$. Then for all $t \in \bigl[0, (2n-1)/3 \bigr)$, we have 
		\begin{align}
			& t^2 \cdot (n+1) - t \cdot \left(2n \sqrt{n} - 3n \right)  + n^2 -2n \sqrt{n} + 4n  \geq 0 \,. \label{eq:target_inequality_helper} 
		\end{align}
	\end{lemma}
	\begin{proof}
		Let $f : \mathbb{R} \to \mathbb{R}$ be 
		$	f(t) = t^2 \cdot (n+1) - t \cdot \left(2n \sqrt{n} - 3n \right)  + n^2 -2n \sqrt{n} + 4n  \,.
		$
		Then 
		\begin{align}
			f'(t) &= 2t (n+1) - (2n\sqrt{n} - 3n) \; \; \; \mbox{and} \;\;\; 
			f''(t) = 2(n+1)\,.
		\end{align}
		Thus $f$ is convex and the global minimum is at $t^*$ for which $f'(t^*)=0$, that is, $t^* = \frac{2n\sqrt{n} - 3n}{2n+2}$.
		
		Evaluating $f(t^*)$ gives
		\begin{align}
			f(t^*) & = \left( \frac{2n\sqrt{n} - 3n}{2n+2} \right)^2 \cdot (n+1) - \left(\frac{2n\sqrt{n} - 3n}{2n+2} \right) \cdot \left(2n \sqrt{n} - 3n \right)  + n^2 -2n \sqrt{n} + 4n \notag \\
			& =\frac{ 11n^2  - 8n\sqrt{n} + 16n   + 4n^2 \sqrt{n}}{4n+4}  \notag \\
			& > 0\,. \explain{Since $11n^2 > 8n \sqrt{n}$ for $n \geq 1$}
		\end{align}
		Thus $f(t) \geq f(t^*) > 0$ for all $t \in \mathbb{R}$, which implies the inequality required by the lemma.
	\end{proof}
	
	\begin{lemma} \label{lem:helper_lemma_case1b_k_more_than_3_rounds} Let $n \geq 1$ and $k \geq 3$. Then 
		\begin{align} 
			t^k \cdot (k-1)(n+1) - t^{k-1} \cdot \left( k \cdot n^{\frac{k+1}{k}} - n + 2k -2\right) + n^2 + 2kn - k n^{1 + \frac{1}{k}} \geq 0, \; \; \forall  t \in \left[0, n^{\frac{1}{k-1}}\right) \,. \label{eq:required_inequality_case_1b_k_more_than_3}
		\end{align}
	\end{lemma}
	\begin{proof}		
		Dividing both sides of \eqref{eq:required_inequality_case_1b_k_more_than_3} by $n^2$, we get that \eqref{eq:required_inequality_case_1b_k_more_than_3} holds if and only if   
		\begin{align}
			& \left(\frac{t}{n^{\frac{1}{k-1}}}\right)^{k} 
			\cdot (k-1) \left(1 + \frac{1}{n}\right) \cdot n^{\frac{1}{k-1}} - \left(\frac{t}{n^{\frac{1}{k-1}}}\right)^{k-1} \cdot \left( k \cdot n^{\frac{1}{k}} - 1 + \frac{2k -2}{n} \right) + 1 + \frac{2k}{n} - \frac{k}{n^{\frac{k-1}{k}}}   \geq 0   \notag \\ 
			& \qquad \qquad \qquad \qquad \qquad \qquad  \qquad \qquad \qquad \qquad \qquad \qquad \qquad \qquad \qquad  \qquad 
			\forall  t \in \left[0, n^{\frac{1}{k-1}}\right)\,. \label{eq:required_inequality_case_1b_k_more_than_3_intermediate_1}
		\end{align}
		If $t=0$ then \eqref{eq:required_inequality_case_1b_k_more_than_3_intermediate_1}  is equivalent to  $n + 2k - k n^{\frac{1}{k}} \geq 0$, which  holds by Corollary~\ref{corr:helper_lemma_1}. 
		
		For $t > 0$, let $x = {n^{\frac{1}{k-1}}}/{t}$. Since $0 < t < n^{\frac{1}{k-1}}$, we have $x > 0$. Substituting $t$ by $x$ we obtain that \eqref{eq:required_inequality_case_1b_k_more_than_3_intermediate_1} is equivalent to 
		%
		\begin{align}
			&   x^k \cdot \left(1 + \frac{2k}{n} - \frac{k}{n^{\frac{k-1}{k}}} \right) - x \cdot  \left(k \cdot n^{\frac{1}{k}} - 1 + \frac{2k -2}{n}\right) + (k-1)\left(1+\frac{1}{n}\right) \cdot n^{\frac{1}{k-1}} \geq 0, \;\; \forall x > 1 \,. \label{eq:def_f_equation_case_1b_k_3}
		\end{align}
		Define the function $f : (0, \infty) \to \mathbb{R}$, where $f(x)$ is given by the left hand side of \eqref{eq:def_f_equation_case_1b_k_3}. Then 
		\begin{align}
			f'(x) & = k \left(1 + \frac{2k}{n} - \frac{k}{n^{\frac{k-1}{k}}} \right) x^{k-1} - \left(k \cdot n^{\frac{1}{k}} - 1 + \frac{2k -2}{n}\right) \notag \\
			f''(x) &= k(k-1)\left(1 + \frac{2k}{n} - \frac{k}{n^{\frac{k-1}{k}}} \right) x^{k-2}\,.
		\end{align}
		By  Corollary~\ref{corr:helper_lemma_1}, we have $1 + {2k}/{n} - {k}/{n^{\frac{k-1}{k}}}  > 0$ for all $n \geq 1$ and $k \geq 3$. Thus 
		$f''(x) > 0$ for all $x > 0$, so  $f$ is convex on  $(0, \infty)$. Observe that $k \cdot n^{\frac{1}{k}} - 1 + \frac{2k -2}{n} > 0$ for all $n \geq 1$ and $k \geq 3$.
		Then there is a  global minimum of $f$ at a point $\overline{x} \in (0, \infty)$ with $ f'(\overline{x}) = 0 $, or equivalently,
		\begin{align}
			&   \overline{x} = \left( \frac{ k \cdot n^{\frac{1}{k}} - 1 + \frac{2k -2}{n}}{  k \left(1 + \frac{2k}{n} - \frac{k}{n^{\frac{k-1}{k}}} \right)} \right)^{\frac{1}{k-1}} \,.
		\end{align}
		Evaluating $f$ at $\overline{x}$ and rearranging terms gives
		\begin{align}
			f(\overline{x}) & = \overline{x} \left[  \left(1 + \frac{2k}{n} -  \frac{k}{n^{\frac{k-1}{k}}} \right) \overline{x}^{k-1} - \left(k \cdot n^{\frac{1}{k}} - 1 + \frac{2k -2}{n}\right)   \right] +  (k-1)\left(1+\frac{1}{n}\right) \cdot n^{\frac{1}{k-1}}  \notag \\
			& = \overline{x} \left(\frac{1}{k}-1\right) \left(k \cdot n^{\frac{1}{k}} - 1 + \frac{2k -2}{n}\right)  + (k-1)\left(1+\frac{1}{n}\right) \cdot n^{\frac{1}{k-1}} \notag \\
			& =  (k-1)\left(1+\frac{1}{n}\right) \cdot n^{\frac{1}{k-1}}  - (k-1) \left({  n^{\frac{1}{k}} -\frac{1}{k} + \frac{2}{n} - \frac{2}{kn} }\right)^{\frac{k}{k-1}} \left(\frac{n}{n+2k-{k}{n^{\frac{1}{k}}}}\right)^{\frac{1}{k-1}} \,.
		\end{align}
		Thus  $ f(x) > 0 \; \; \forall x > 1$ whenever the next two properties are met
		\begin{enumerate}
			\item $f(\overline{x}) > 0$ when $\overline{x} > 1$. This is equivalent to 
			
			\begin{align}
				(n+2k-{k}n^{\frac{1}{k}})\left(1 + \frac{1}{n}\right)^{{k-1}} & >   \left({  n^{\frac{1}{k}} -\frac{1}{k} + \frac{2}{n} - \frac{2}{kn} }\right)^{k} \label{eq:first_property_under_next_condition_helper_lemma_ordered} \\ 
				& \mbox{ whenever }    n^{\frac{1}{k}} - \frac{1}{k} + \frac{2}{n} - \frac{2}{kn}   >  1 + \frac{2k}{n}  - \frac{k}{n^{\frac{k-1}{k}}} \,.\label{eq:first_property_helper_lemma_case1b_k_more_than_3_rounds}
			\end{align}
			Lemma~\ref{lem:helper_lemma_case1b_k_more_than_3_rounds_II} implies that   inequality   \eqref{eq:first_property_under_next_condition_helper_lemma_ordered} holds under condition \eqref{eq:first_property_helper_lemma_case1b_k_more_than_3_rounds}.
			\item $f(1) \geq  0 $  when $\overline{x} < 1$. To show this, observe that for all $n \geq 1$ and $k \geq 3$, we have  
			\begin{align}
				f(1) & = \left(1 + \frac{1}{n}\right) \left( 2 - k \cdot n^{\frac{1}{k}} + (k-1)  \cdot n^{\frac{1}{k-1}} \right) \notag \\
				& \geq 0 \,. \explain{By Lemma~\ref{lem:case_1b_case_2_f_1_positive}}
			\end{align}
			Thus $f(1) \geq 0$ for all $n \geq 1$ and $k \geq 3$, which completes property 2. 
		\end{enumerate} 
		Since both properties $1$ and $2$ hold, we have that  $f(x) > 0$ for all $x > 1$,  so \eqref{eq:def_f_equation_case_1b_k_3} holds. Equivalently, \eqref{eq:required_inequality_case_1b_k_more_than_3} holds as required by the lemma.
	\end{proof}

	\begin{lemma} \label{lem:case_1b_case_2_f_1_positive}
		Let $n \geq 1$ and $k \geq 3$, where $k,n \in \mathbb{N}$. Then 
		$
		2  - k \cdot  n^{\frac{1}{k}} + (k-1)\cdot  n^{\frac{1}{k-1}} \geq 0 \,.
		$
	\end{lemma}
	\begin{proof}
		Consider the function $f : [2, \infty)\to \mathbb{R}$ given by $f(x) = x n^{\frac{1}{x}}$. We first show an upper bound on $f'$ and then use it to upper bound $f(k)- f(k-1)$.
		We have 
		\[
		f'(x) = n^{\frac{1}{x}}\left(1  -    \frac{\ln(n)}{x} \right)\,.
		\]
		Let $y = n^{\frac{1}{x}}$. Then $\ln(y) = \frac{1}{x} \ln(n)$. Since $x \geq 2$, we have $y \in (1, \sqrt{n}]$. Then 
		\begin{align}
			n^{\frac{1}{x}}\left(1  -    \frac{\ln(n)}{x} \right) = y \left( 1 - \ln(y)\right) \,. \label{eq:identity_derivative_x_g_y}
		\end{align}
		The function $g : (1, \infty) \to \mathbb{R}$ given by $g(y) = y \left(1 - \ln(y)\right)$ has $g'(y) = -\ln(y) < 0$. Therefore $g(y) < g(1) = 1$ for all $y > 1$.  Using the identity in \eqref{eq:identity_derivative_x_g_y}, we get that $f'(x) < 1$ for all $x \geq 2$. 
		
		Then  for all $k \geq 3$,
		\begin{align}
			k \cdot  n^{\frac{1}{k}} =	f(k) \leq f(k-1) + \max_{x \in [2, \infty)} f'(x) < f(k-1) + 1 = (k-1)\cdot  n^{\frac{1}{k-1}} + 1\,. \label{eq:required_even_stronger_f_k_minus_f_k_1}
		\end{align}
		Inequality \eqref{eq:required_even_stronger_f_k_minus_f_k_1} implies the lemma statement.
	\end{proof}
	
	\begin{corollary} \label{lem:ordered_lb_case_II_0}
		Let $n \geq 1$ and $k \geq 2$. Suppose $\{\gamma_{\ell}\}_{\ell=1}^{\infty}$ is the sequence given by $\gamma_1=0$ and $\gamma_{\ell} = 2\ell$ for  $\ell \geq 2$.  Then
		$({k-1}) \cdot n^{\frac{1}{k-1}} - {\gamma_{k-1}} 
		\geq   kn^{\frac{1}{k}} -  \gamma_{k} \,.$
	\end{corollary}
	\begin{proof}
		If $k=2$, the required inequality is   
		$
		n \geq 2 \sqrt{n} - 4
		$, or  $\left(\sqrt{n} -1 \right)^2 + 3 \geq 0$. The latter   holds for all $n \geq 1$.
		If $k \geq 3$, the required inequality is  
		$(k-1) \cdot n^{\frac{1}{k-1}} +2 \geq kn^{\frac{1}{k}}$, which 
		holds by Lemma~\ref{lem:case_1b_case_2_f_1_positive}.
	\end{proof}
	
	\begin{lemma} \label{lem:helper_lemma_case1b_k_more_than_3_rounds_II}
		Let $n \geq 1$ and $k \geq 3$, where $k,n \in \mathbb{N}$. Then 
		\begin{align}
			(n+2k-{k}n^{\frac{1}{k}})\left(1 + \frac{1}{n}\right)^{{k-1}} & >   \left({  n^{\frac{1}{k}} -\frac{1}{k} + \frac{2}{n} - \frac{2}{kn} }\right)^{k} \label{eq:required_inequality_helper_lemma_case1b_k_more_than_3_rounds_restated} \\ 
			& \mbox{ whenever }   \;  n^{\frac{1}{k}} - \frac{1}{k} + \frac{2}{n} - \frac{2}{kn}   >  1 + \frac{2k}{n}  - \frac{k}{n^{\frac{k-1}{k}}} \,. \label{eq:first_property_helper_lemma_case1b_k_more_than_3_rounds_restated}
		\end{align} 
	\end{lemma}
	\begin{proof}		
		If $n=1$ then the condition   in  \eqref{eq:first_property_helper_lemma_case1b_k_more_than_3_rounds_restated}  is equivalent to $1 - 1/k + 2 -2/k > 1 + 2k - k$, which holds if and only if $2 - 3/k > k$ $(\dag)$. Since $k\geq 3$,  inequality $(\dag)$ does not hold so condition \eqref{eq:first_property_helper_lemma_case1b_k_more_than_3_rounds_restated}  is not met either. 
		
		Thus from now on we can assume $n \geq 2$.  By Lemma~\ref{lem:ruling_out_range_k_at_least_n_ordered}, condition \eqref{eq:first_property_helper_lemma_case1b_k_more_than_3_rounds_restated}  implies  $k < n$.
		We show  \eqref{eq:required_inequality_helper_lemma_case1b_k_more_than_3_rounds_restated} holds when  $n \geq 2$ and $k < n$ by considering  separately a few ranges of $k$. 	
		
		\paragraph{Case I: $n/2 <  k < n$ and $k \geq 3$.} Then $k < n < 2k$. 
		When  $n=2k-1$ inequality \eqref{eq:required_inequality_helper_lemma_case1b_k_more_than_3_rounds_restated} holds by Lemma~\ref{lem:case_II_ordered_search_analysis_inequality_for_n_equal_to_2k_minus_1}.
		
		Thus from now on we can assume $k < n \leq 2k-2$. 
		To show inequality  \eqref{eq:required_inequality_helper_lemma_case1b_k_more_than_3_rounds_restated},  we will first bound separately several of the terms in the inequality and then combine the bounds. 
		
		For  $k \geq 3$, we have $k \leq 2^{k-1}$. Moreover, since $ n <  2k$, we have 
		$n^{\frac{1}{k}} < (2k)^{\frac{1}{k}} \leq 2$, and so $2k > kn^{\frac{1}{k}}$. Thus  
		$n+2k - k n^{\frac{1}{k}} > n$, which implies  
		\begin{align}
			& \left(n+2k - k n^{\frac{1}{k}} \right) \left(1 + \frac{1}{n}\right)^{k-1} > n \left(1 + \frac{1}{n}\right)^{k-1}  \,.  \label{eq:n_plus_2k_minus_k_times_n_to_power_1_over_k_at_least_n_times_extra_terms_on_both_sides_ordered}
		\end{align}
		Moreover, since $n \geq 2$, we have $2k \leq k n \cdot n^{\frac{1}{k}}$, and so $2k - 2 - n <  k n \cdot n^{\frac{1}{k}}$. Since $n \leq 2k-2$, we also have $2k-2-n \geq 0$, and so   
		\begin{align}
			0 \leq   \frac{2k - 2 - n}{k n \cdot n^{\frac{1}{k}}} < 1\,. \label{eq:2k_minus_2_minus_n_over_k_times_n_times_n_to_power_1_over_k_ordered}
		\end{align}
		Let $r = ({2k-2-n})/({kn \cdot n^{\frac{1}{k}}})$. Inequality    \eqref{eq:2k_minus_2_minus_n_over_k_times_n_times_n_to_power_1_over_k_ordered} gives $0 \leq r < 1$. We consider two sub-cases:
		\begin{itemize}
			\item If  $n=2k-2$ then $r=0$. We have  
			\begin{align} 
				\left(1 + \frac{2k-2-n}{kn\cdot n^{\frac{1}{k}}}\right)^k = (1+r)^k = 1 = e^{0} =  e^{\left(\frac{2k-2-n}{n \cdot n^{\frac{1}{k}}} \right)}\,. \label{eq:part_1_upper_bound_e_to_involved_exponent_ordered}
			\end{align}
			\item Else  $k < n < 2k-2$. Then $0 < r < 1$. We have  
			\begin{align}
				\left(1 + \frac{2k-2-n}{kn \cdot n^{\frac{1}{k}}}\right)^k & = (1+r)^k \explain{By definition of $r$} \\
				& = \left[\left(1 + r\right)^{\frac{1}{r}} \right]^{kr} \notag \\
				& \leq e^{kr} \explain{Since $(1+r)^{\frac{1}{r}} \leq e$ for  $r \in (0,1)$.} \\ 
				& = e^{\left( \frac{2k-2-n}{n \cdot n^{\frac{1}{k}}} \right)}  \,. \label{eq:part_2_upper_bound_e_to_involved_exponent_ordered}
			\end{align}
		\end{itemize}
		Combining  inequalities \eqref{eq:part_1_upper_bound_e_to_involved_exponent_ordered} and \eqref{eq:part_2_upper_bound_e_to_involved_exponent_ordered} from the two sub-cases, we obtain \begin{align}
			\left(1 + \frac{2k-2-n}{kn \cdot n^{\frac{1}{k}}}\right)^k \leq e^{\left( \frac{2k-2-n}{n \cdot n^{\frac{1}{k}}} \right)} \qquad \forall n \in \mathbb{N} \; \mbox{with} \; k < n \leq 2k-2\,. \label{eq:both_parts_upper_bound_e_to_involved_exponent_ordered}
		\end{align}
		
		Using \eqref{eq:both_parts_upper_bound_e_to_involved_exponent_ordered}, we can upper bound the right hand side of  inequality  \eqref{eq:required_inequality_helper_lemma_case1b_k_more_than_3_rounds_restated} as follows:  
		\begin{align}
			\left(n^{\frac{1}{k}} - \frac{1}{k} + \frac{2}{n} - \frac{2}{kn}\right)^k & = \left(n^{\frac{1}{k}} + \frac{2k-2-n}{kn} \right)^k \notag \\
			& = n \left(1 + \frac{2k-2-n}{kn \cdot n^{\frac{1}{k}}}\right)^k \notag \\
			& \leq n  e^{\left( \frac{2k-2-n}{n \cdot n^{\frac{1}{k}}} \right)}\,. \label{eq:RHS_of_target_at_most_e_to_power_big_fraction_ordered}
		\end{align}
		
		By Lemma~\ref{lem:1_plus_1_over_n_all_to_k_minus_1_at_least_exp_of_big_fraction_with_2k_minus_2_minus_n_ordered}, we have 
		\begin{align}  e^{\left( \frac{2k-2-n}{n \cdot n^{\frac{1}{k}}} \right)} \leq  \left(1+\frac{1}{n}\right)^{k-1}  \,. \label{eq:1_plus_1_over_n_to_power_k_minus_1_less_than_e_to_big_fraction_ordered}
		\end{align} 
		Combining \eqref{eq:n_plus_2k_minus_k_times_n_to_power_1_over_k_at_least_n_times_extra_terms_on_both_sides_ordered}, \eqref{eq:RHS_of_target_at_most_e_to_power_big_fraction_ordered}, and \eqref{eq:1_plus_1_over_n_to_power_k_minus_1_less_than_e_to_big_fraction_ordered},  gives:
		\begin{align}
			\left(n^{\frac{1}{k}} - \frac{1}{k} + \frac{2}{n} - \frac{2}{kn}\right)^k & \leq n  e^{\left( \frac{2k-2-n}{n \cdot n^{\frac{1}{k}}} \right)} \explain{By \eqref{eq:RHS_of_target_at_most_e_to_power_big_fraction_ordered}} \\
			& \leq n \left(1 + \frac{1}{n}\right)^{k-1} \explain{By \eqref{eq:1_plus_1_over_n_to_power_k_minus_1_less_than_e_to_big_fraction_ordered}} \\
			& < \left(n+2k - k n^{\frac{1}{k}} \right) \left(1 + \frac{1}{n}\right)^{k-1}   \,.  \explain{By \eqref{eq:n_plus_2k_minus_k_times_n_to_power_1_over_k_at_least_n_times_extra_terms_on_both_sides_ordered}}
		\end{align}
		
		In summary,   
		\[ 
		\left(n^{\frac{1}{k}} - \frac{1}{k} + \frac{2}{n} - \frac{2}{kn}\right)^k \leq \left(n+2k - k n^{\frac{1}{k}} \right) \left(1 + \frac{1}{n}\right)^{k-1} \qquad \forall n \in \mathbb{N} \; \mbox{with} \; k < n \leq 2k-2\,.
		\]
		This is the required inequality \eqref{eq:required_inequality_helper_lemma_case1b_k_more_than_3_rounds_restated}, which completes case I.
		
		\paragraph{Case II: $3 \leq k \leq n/2$.} Then $n \geq 2k$ and $k \geq 3$. Then 
		$t^k \ge 2k$.
		For $k=3$, the 
		required inequality \eqref{eq:required_inequality_helper_lemma_case1b_k_more_than_3_rounds_restated} is equivalent to   
		\[
		\left(x + 6 - 3 x^{\frac{1}{3}}\right) \left(1 + \frac{1}{x} \right)^{2} - \left( x^{\frac{1}{3}} - \frac{1}{3} + \frac{2}{x} - \frac{2}{3x}\right)^3 > 0 \qquad \forall x \geq 6^{\frac{1}{3}},
		\]
		which can be easily checked to hold  (see, e.g., \cite{wolfram}). 
		
		Thus from now on we can assume $k \ge 4$ with $k \in \mathbb{N}$.
		Let $f : (0, \infty) \to \mathbb{R}$ be 
		\[ f(x) = \left(x+2k-{k}x^{\frac{1}{k}} \right)\left(1 + \frac{1}{x}\right)^{{k-1}} -   \left({  x^{\frac{1}{k}} -\frac{1}{k} + \frac{2}{x} - \frac{2}{kx} }\right)^{k} \,.
		\] 
		
		Using Bernoulli's inequality gives 
		\begin{align}
			\left(1 + \frac{1}{n}\right)^{k-1} \geq 1 + \frac{k-1}{n}\,.  \label{eq:bernoulli_applied_to_1_plus_1_over_x_all_to_k_minus_1_ordered}
		\end{align}
		Since $n \geq 2k$, we have ${2}/{n} \leq 1/k$. Thus 
		\begin{align}
			-\frac{1}{k}+ \frac{2k-2}{kn} \leq  -\frac{1}{k}+ \frac{k-1}{k^2} = - \frac{1}{k^2} \,. \label{eq:2k_minus_2_all_over_kn_less_than_k_minus_1_over_k_squared_ordered}
		\end{align}
		Using \eqref{eq:bernoulli_applied_to_1_plus_1_over_x_all_to_k_minus_1_ordered} and \eqref{eq:2k_minus_2_all_over_kn_less_than_k_minus_1_over_k_squared_ordered}, we can lower bound $f(n)$ as follows: 
		\begin{align}
			f(n) & = \left(n+2k-{k}n^{\frac{1}{k}} \right)\left(1 + \frac{1}{n}\right)^{{k-1}} -   \left({  n^{\frac{1}{k}} -\frac{1}{k} + \frac{2k-2}{kn}  }\right)^{k}  \notag \\
			& \geq \left(n+2k-{k}n^{\frac{1}{k}} \right)\left(1 + \frac{k-1}{n}\right) - \left({  n^{\frac{1}{k}} - \frac{1}{k^2}  }\right)^{k} \explain{By \eqref{eq:bernoulli_applied_to_1_plus_1_over_x_all_to_k_minus_1_ordered} and \eqref{eq:2k_minus_2_all_over_kn_less_than_k_minus_1_over_k_squared_ordered}} \\
			& \geq  \left(2k-{k}n^{\frac{1}{k}} \right)\left(1 + \frac{k-1}{n}\right) +n - \left({  n^{\frac{1}{k}} - \frac{1}{k^2}  }\right)^{k} \,. \label{eq:f_n_lb_2k_minus_k_times_n_to_power_1_over_k_times_bernoulli_result_plus_n_minus_smaller_than_n_ordered}
		\end{align}
		
		If $n^{\frac{1}{k}} \leq 2$, then $2k - kn^{\frac{1}{k}} \geq 0$, which together with \eqref{eq:f_n_lb_2k_minus_k_times_n_to_power_1_over_k_times_bernoulli_result_plus_n_minus_smaller_than_n_ordered} yields $
		f(n) \geq n - \left({  n^{\frac{1}{k}} - \frac{1}{k^2}  }\right)^{k} \geq 0$, as required.
		
		Thus from now on we will assume  $n^{\frac{1}{k}} > 2$, that is, $n > 2^k$. Then  $2k - kn^{\frac{1}{k}} < 0$. Together with $n \geq 2k$, this implies  
		\begin{align}
			\left(2k-{k}n^{\frac{1}{k}} \right) \left(1 + \frac{k-1}{n} \right) > \left(2k-{k}n^{\frac{1}{k}} \right) \left(1 + \frac{k-1}{2k} \right)\,. \label{eq:first_part_of_f_simplified_bounded_by_multiple_of_1_plus_k_minus_1_over_2k_ordered}
		\end{align}
		Inequality \eqref{eq:first_part_of_f_simplified_bounded_by_multiple_of_1_plus_k_minus_1_over_2k_ordered} together with $(k-1)/2k < 1/2$ yields 
		\begin{align} 
			\left(2k-{k}n^{\frac{1}{k}} \right) \left(1 + \frac{k-1}{n} \right) > 1.5 \cdot \left(2k-{k}n^{\frac{1}{k}} \right)\,. \label{eq:first_part_of_f_simplified_bound_at_least_1_point_5_times_2k_minus_k_times_n_to_power_1_over_k_ordered}
		\end{align}
		Combining  
		\eqref{eq:f_n_lb_2k_minus_k_times_n_to_power_1_over_k_times_bernoulli_result_plus_n_minus_smaller_than_n_ordered}  and \eqref{eq:first_part_of_f_simplified_bound_at_least_1_point_5_times_2k_minus_k_times_n_to_power_1_over_k_ordered}
		gives 
		\begin{align}
			f(n) \geq 1.5 \cdot \left(2k-{k}n^{\frac{1}{k}}  \right) +n - \left({  n^{\frac{1}{k}} - \frac{1}{k^2}  }\right)^{k} \,. \label{eq:f_n_at_least_1_point_5_times_2k_minus_k_times_n_to_1_over_k_plus_n_minus_n_to_1_over_k_minus_small_term_all_to_k_ordered}
		\end{align}
		
		Next we expand  and truncate $\bigl(n^{\frac{1}{k}} - \frac{1}{k^2} \bigr)^k$ via Lemma~\ref{lem:expanding}, yielding
		\begin{align}
			-\left(n^{\frac{1}{k}} - \frac{1}{k^2}\right)^k \ge -n + \frac{n^{1-\frac{1}{k}}}{k} - \frac{(k-1)n^{1-\frac{2}{k}}}{2k^3} \,.  \label{eq:expanding_and_truncating_ordered}
		\end{align}
		Using \eqref{eq:expanding_and_truncating_ordered}, we can further bound $f(n)$ by  
		\begin{align}
			f(n) & \ge 1.5 \cdot \left(2k-{k}n^{\frac{1}{k}}  \right) + \frac{n^{1-\frac{1}{k}}}{k} - \frac{(k-1)n^{1-\frac{2}{k}}}{2k^3} \explain{Combining \eqref{eq:f_n_at_least_1_point_5_times_2k_minus_k_times_n_to_1_over_k_plus_n_minus_n_to_1_over_k_minus_small_term_all_to_k_ordered} and \eqref{eq:expanding_and_truncating_ordered}} \\
			& \geq 0  \,. \explain{By Lemma~\ref{lem:bound_polynomial_after_truncation_ordered}}
		\end{align}
		Thus $f(n) > 0$, as required. This completes the analysis for the range $n > 2^k$ and case II.

		\paragraph{Wrapping up.} We obtain that inequality  \eqref{eq:required_inequality_helper_lemma_case1b_k_more_than_3_rounds_restated} holds under condition \eqref{eq:first_property_helper_lemma_case1b_k_more_than_3_rounds_restated}, as required.
	\end{proof}

	\begin{lemma} \label{lem:expanding}
		Let  $k, n \in \mathbb{N}$ with $n \ge 1$ and $k \ge 3$.
		Then
		\begin{align}
			\left(n^{\frac{1}{k}} - \frac{1}{k^2} \right)^k \le n - \frac{n^{1 - \frac{1}{k}}}{k} + \frac{(k-1) \cdot  n^{1 - \frac{2}{k}}}{2k^3}\,. \label{eq:expanding_target_inequality_ordered}
		\end{align}
	\end{lemma}
	\begin{proof}
		Let $t = n^{\frac{1}{k}}$. Then $t \geq 1$. The required inequality \eqref{eq:expanding_target_inequality_ordered} is equivalent to  
		\begin{align}
			\left(t - \frac{1}{k^2} \right)^k \le t^k - \frac{t^{k-1}}{k} + \frac{(k-1) \cdot t^{k-2}}{2k^3}\,. \label{eq:expanding_target_inequality_ordered_in_terms_of_t}
		\end{align}
		
		For $i \in [k+1]$ let $c_i$ be the $i$-th term in the binomial expansion of $(t - 1/k^2)^k$. In particular, 
		\begin{align}
			c_1 = t^k ; \qquad c_2 = - \frac{t^{k-1}}{k} ; \qquad  c_3 = \frac{(k-1)t^{k-2}}{2k^3} ; \qquad c_{k+1} = (-1)^k \frac{1}{k^{2k}} \,.
		\end{align}
		
		Let us bound the ratio $|c_i / c_{i+1}|$ for $i \in [k]$:
		\begin{align}
			\left| \frac{c_i}{c_{i+1}} \right| = 
			\frac{k^2 \cdot t  i}{k-i+1} \geq  tk > 1 \,. \label{eq:ratio_c_i_over_c_i_plus_1_in_absolute_value_ordered}
		\end{align}
		Since $c_{2i} < 0$ and $c_{2i+1} > 0$ for all $i$, inequality \eqref{eq:ratio_c_i_over_c_i_plus_1_in_absolute_value_ordered} implies  
		\begin{align}
			c_i + c_{i+1} \le 0 \qquad \forall i \in [k] \; \mbox{with} \; i \in 2 \mathbb{N} \,. \label{eq:ci_plus_ci_plus_1_negative}
		\end{align}
		We bound the term $\left(t- \frac{1}{k^2}\right)^k$ by considering two cases.
		If $k$ is even, then
		\begin{align}
			\left(t- \frac{1}{k^2} \right)^k &=  c_1 + c_2 + c_3 + \sum_{i=2}^{k/2} \left( c_{2i} + c_{2i+1} \right) \explain{By definition of $c_i$.} \\
			&\le c_1 + c_2 + c_3 \,. \explain{Since $c_{2i} + c_{2i+1} < 0$ by \eqref{eq:ci_plus_ci_plus_1_negative}} 
		\end{align}
		If  $k$ is odd, then 
		\begin{align}
			\left(t- \frac{1}{k^2} \right)^k &= \sum_{i=1}^{k+1} c_i \explain{By definition of $c_i$.}\\
			&< \sum_{i=1}^k c_i 
			= c_1 + c_2 + c_3 + \sum_{i=2}^{(k-1)/2} \left( c_{2i} + c_{2i+1} \right) \explain{Since $c_{k+1} < 0$.}\\
			&\le c_1 + c_2 + c_3 \,. \explain{Since $c_{2i} + c_{2i+1} < 0$ by \eqref{eq:ci_plus_ci_plus_1_negative}}
		\end{align}
		
		Thus for both odd and even $k$, we have 
		\begin{align}
			\left(t- \frac{1}{k^2} \right)^k \leq c_1 + c_2 + c_3 =   t^k - \frac{t^{k-1}}{k} + \frac{(k-1)t^{k-2}}{2k^3}\,.
		\end{align}
		Thus in both cases \eqref{eq:expanding_target_inequality_ordered_in_terms_of_t} holds, as required.
	\end{proof}
	
	\begin{lemma} \label{lem:bound_polynomial_after_truncation_ordered}
		Let $k,n \in \mathbb{N}$ with  $n \geq 2$ and $k \geq 4$. Then 
		\begin{align}
			1.5  \left(2k-{k}n^{\frac{1}{k}}  \right) + \frac{n^{1-\frac{1}{k}}}{k} - \frac{(k-1) \cdot n^{1-\frac{2}{k}}}{2k^3} \geq 0 \,. \label{eq:target_bound_polynomial_after_truncation_ordered}
		\end{align}
	\end{lemma}
	\begin{proof}
		Let $t = n^{\frac{1}{k}}$. Then $t > 1 $ since $n \geq 2$. 
		For $k=4$, inequality \eqref{eq:target_bound_polynomial_after_truncation_ordered} with $n$  substituted by $t^4$ is equivalent to 
		$1.5(8 - 4t) + {t^{3}}/{4} - {3 t^2}/{128} \geq 0, $
		which holds for all $t > 1$.
		
		Thus from now on we can assume $k \geq 5$. The left hand side of \eqref{eq:target_bound_polynomial_after_truncation_ordered}, where  $n$ is substituted by $t^k$, can be bounded as follows:
		\begin{align}
			1.5  \left(2k-{k} t  \right) + \frac{t^{k-1}}{k} - \frac{(k-1) \cdot  t^{k-2} }{2k^3} &\geq  1.5(2k - kt) + \frac{t^{k-2}}{2k^2} \bigl(2kt - 1 \bigr)  \explain{Since $k-1 < k$}  \\
			& \geq  1.5 \left(2k - kt + \frac{t^{k-1}}{2k}\right)\,.  \explain{Since  $ t > 1$ and $k \geq 4$} \\
			\label{eq:target_inequality_at_least_g_t_ordered_1.5_times_stuff}
		\end{align}
		Let $g: (0, \infty) \to \mathbb{R}$ be  $g(t) = 2k - kt + \frac{t^{k-1}}{2k}$.
		We will show that $g(t) \geq 0$ for all $t > 1$. We have 
		\begin{align}
			g'(t) = -k + \frac{k-1}{2k} \cdot t^{k-2} \qquad \mbox{and} \qquad g''(t) = \frac{(k-1)(k-2)}{2k} \cdot t^{k-3}\,.
		\end{align}
		Thus $g$ is convex on $(0, \infty)$. The global minimum is $t^*$ with $g'(t^*) = 0$, so 
		$
		t^* = \left(\frac{2k^2}{k-1}\right)^{\frac{1}{k-2}}\,.
		$
		Then 
		\begin{align}
			g(t) \geq g(t^*) & = 2k - k \cdot t^* + t^* \cdot \frac{(t^*)^{k-2}}{2k} = 2k -  \left(\frac{2k^2}{k-1}\right)^{\frac{1}{k-2}} \left(k - \frac{k}{k-1}\right) \notag \\
			&= k \left( 2 -  \left(\frac{2k^2}{k-1}\right)^{\frac{1}{k-2}}\cdot \left( \frac{k-2}{k-1}\right) \right)\,. \label{eq:g_t_at_least_g_t_star_at_least_k_times_2_minus_product_of_two_terms_ordered}
		\end{align}
		Since $2^{k-2} > \left(\frac{2k^2}{k-1} \right)$ for $k \geq 5$, we get 
		\begin{align}
			2 > \left( \frac{2k^2}{k-1} \right)^{\frac{1}{k-2}} > \left(  \frac{2k^2}{k-1} \right)^{\frac{1}{k-2}} \cdot \left( \frac{k-2}{k-1} \right) \qquad \forall k \geq 5\,. \label{eq:2_greater_than_2_k_squared_over_k_minus_1_all_to_1_over_k_minus_2_times_k_minus_2_over_k_minus_1_ordered}
		\end{align}
		Using \eqref{eq:2_greater_than_2_k_squared_over_k_minus_1_all_to_1_over_k_minus_2_times_k_minus_2_over_k_minus_1_ordered} in \eqref{eq:g_t_at_least_g_t_star_at_least_k_times_2_minus_product_of_two_terms_ordered}, we obtain 
		\begin{align}
			g(t) \geq k \left( 2 -  \left(\frac{2k^2}{k-1}\right)^{\frac{1}{k-2}}\cdot \left( \frac{k-2}{k-1}\right) \right) > 0 \qquad \forall t > 1, k \geq 5\,. \label{eq:g_t_positive_ordered}
		\end{align}
		Combining \eqref{eq:target_inequality_at_least_g_t_ordered_1.5_times_stuff} and  \eqref{eq:g_t_positive_ordered}, we obtain 
		$
		1.5  \left(2k-{k} t  \right) + {t^{k-1}}/{k} - {(k-1) \cdot  t^{k-2} }/{(2k^3)} \geq 0 
		$ for all $t > 1 $ and $k \geq 5$. This completes the proof.
	\end{proof}

	\begin{lemma} \label{lem:case_II_ordered_search_analysis_inequality_for_n_equal_to_2k_minus_1}
		Let $k \in \mathbb{N}$ with $k \geq 3$. Then 
		\begin{align}
			\left(4k-1-{k}\cdot (2k-1)^{\frac{1}{k}}\right)\left(1 + \frac{1}{2k-1}\right)^{{k-1}} >   \left({  (2k-1)^{\frac{1}{k}} -\frac{1}{k \cdot (2k-1)}   }\right)^{k}  \,.
		\end{align}
	\end{lemma}
	\begin{proof}
		Since $k \geq 3$, we have $2k-1 \leq 2^k$, so $(2k-1)^{\frac{1}{k}} \leq 2$. Then 
		\begin{align}
			4k-1 - k \cdot (2k-1)^{\frac{1}{k}} \geq  2k-1\,. \label{eq:lhs_of_target_first_part_at_least_2k_minus_1_ordered}
		\end{align}
		Meanwhile, 
		\begin{align}
			\left({  (2k-1)^{\frac{1}{k}} -\frac{1}{k \cdot (2k-1)}   }\right)^{k} = (2k-1) \cdot \left(1 - \frac{1}{k \cdot (2k-1)^{1 + \frac{1}{k}}} \right)^k < 2k-1\,. \label{eq:rhs_of_target_is_at_most_2k_minus_1_ordered}
		\end{align}
		Combining \eqref{eq:lhs_of_target_first_part_at_least_2k_minus_1_ordered} and \eqref{eq:rhs_of_target_is_at_most_2k_minus_1_ordered}, we obtain  
		\begin{align}
			\left(4k-1-{k}\cdot (2k-1)^{\frac{1}{k}}\right)\left(1 + \frac{1}{2k-1}\right)^{{k-1}} & > \left(4k-1-{k}\cdot (2k-1)^{\frac{1}{k}}\right) \explain{Since $1 + {1}/{(2k-1)}  > 1$} \\
			& \geq 2k-1 \explain{By \eqref{eq:lhs_of_target_first_part_at_least_2k_minus_1_ordered}} \\
			& > \left({  (2k-1)^{\frac{1}{k}} -\frac{1}{k \cdot (2k-1)}   }\right)^{k} \,. \explain{By \eqref{eq:rhs_of_target_is_at_most_2k_minus_1_ordered}}
		\end{align}
		Thus the required inequality \eqref{eq:lhs_of_target_first_part_at_least_2k_minus_1_ordered} holds, which completes the proof.
	\end{proof}

	\begin{lemma} \label{lem:ruling_out_range_k_at_least_n_ordered}
		Let $n \geq 2$ and $k \geq 3$, where $k,n \in \mathbb{N}$. Suppose 
		\begin{align}
			n^{\frac{1}{k}} - \frac{1}{k} + \frac{2}{n} - \frac{2}{kn}   >  1 + \frac{2k}{n}  - \frac{k}{n^{\frac{k-1}{k}}} \,. \label{eq:first_property_helper_lemma_case1b_k_more_than_3_rounds_restated_in_separate_lemma}
		\end{align}
		Then $k < n$.
	\end{lemma}
	\begin{proof}
		We will show the constraint in \eqref{eq:first_property_helper_lemma_case1b_k_more_than_3_rounds_restated_in_separate_lemma}  is incompatible with the range $k \geq n$. 
		Let $t = n^{\frac{1}{k}}$. Then $n = t^k$. Since $k \geq  n$, we have $t =n^{\frac{1}{k}} \leq  k^{\frac{1}{k}}$. We have 
		\begin{align}
			&  n^{\frac{1}{k}} - \frac{1}{k} + \frac{2}{n} - \frac{2}{kn} >  1 + \frac{2k}{n} - \frac{k n^{\frac{1}{k}}}{n}, \; \;  \forall k \geq  n \iff \label{eq:target_inequality_in_terms_of_t_case_k_more_than_n_ordered_part_0} \\
			&  kt^{k+1} - t^k (k+1) + k^2t -2k^2 +2k - 2 > 0, \;  \forall t \in \bigl(0, k^{\frac{1}{k}}\bigr], \label{eq:target_inequality_in_terms_of_t_case_k_more_than_n_ordered}
		\end{align}
		where \eqref{eq:target_inequality_in_terms_of_t_case_k_more_than_n_ordered} is obtained from \eqref{eq:target_inequality_in_terms_of_t_case_k_more_than_n_ordered_part_0} by multiplying both sides by $kn$,  substituting $n = t^k$, and rearranging.
		
		In order to upper bound the left hand side of \eqref{eq:target_inequality_in_terms_of_t_case_k_more_than_n_ordered}, we  define a function   $f : [0, \infty) \to \mathbb{R}$ by 
		\[ f(x) =  x^k \left( k^{1+\frac{1}{k}} -k - 1\right) + k^2 x - 2k^2 + 2k -2\,.
		\] 
		For $0 \leq  t \leq  k^{\frac{1}{k}}$, we have $ kt^{k+1} \leq kt^{k} k^{\frac{1}{k}}$, so  the left hand side of \eqref{eq:target_inequality_in_terms_of_t_case_k_more_than_n_ordered} can be upper bounded as follows:
		\begin{align}
			kt^{k+1} - t^k (k+1) + k^2t -2k^2 +2k - 2 & \leq  t^k \left( k^{1+\frac{1}{k}} -k - 1\right) + k^2 t - 2k^2 + 2k -2 \notag \\
			& = f(t)\,. \label{eq:simpler_ub_expression_t_case_k_more_than_n_ordered}
		\end{align}
		Observe   
		$f'(x) = kx^{k-1}\left( k^{1+\frac{1}{k}} -k - 1\right) + k^2$ and $ f''(x) = k(k-1) x^{k-2} \left( k^{1+\frac{1}{k}} -k - 1\right)$.
		By Lemma~\ref{lem:helper_k_to_power_1_plus_1_over_k} the function $f$   is convex for all  $k \geq 3 $. Thus $f$ has a global maximum on the interval $[0, k^{\frac{1}{k}}]$ which is attained at one of the endpoints. We check the value of the function is  negative at both endpoints of $[0, k^{\frac{1}{k}}]$:
		\begin{itemize}
			\item $f(0) = -2k^2 + 2k - 2 = -k^2 - (k-1)^2 -1 < 0\,. $
			\item $f(k^{\frac{1}{k}})  = 2 k^2 k^{\frac{1}{k}} -3k^2 + k -2 
			< 0 $ by Lemma \ref{lem:f_of_k_to_1_over_k_is_negative_when_k_at_least_3}.
		\end{itemize}
		By convexity, it follows that $f(t) < 0$ for all $t \in [0, k^{\frac{1}{k}}]$.
		Combining this fact with  \eqref{eq:simpler_ub_expression_t_case_k_more_than_n_ordered}, we get 
		\begin{align}
			kt^{k+1} - t^k (k+1) + k^2t -2k^2 +2k - 2  \leq  f(t) < 0 \qquad \forall t \in [0, k^{\frac{1}{k}}],
		\end{align}
		which implies   
		\eqref{eq:target_inequality_in_terms_of_t_case_k_more_than_n_ordered}  cannot hold when $k \geq n$. Thus condition \eqref{eq:first_property_helper_lemma_case1b_k_more_than_3_rounds_restated_in_separate_lemma} in the lemma statement rules out the range  $k \geq  n$. This completes the proof.
	\end{proof}
	
	\begin{lemma} \label{lem:1_plus_1_over_n_all_to_k_minus_1_at_least_exp_of_big_fraction_with_2k_minus_2_minus_n_ordered}
		Let $k,n \in \mathbb{N}$ such that $k \geq 3$ and $k < n \leq 2k-2$. Then 
		\begin{align}
			\left(1+\frac{1}{n}\right)^{k-1} \geq  e^{\left( \frac{2k-2-n}{n \cdot n^{\frac{1}{k}}} \right)}\,. \label{eq:lem_target_inequality_1_plus_1_over_n_to_power_k_minus_1_at_least_e_to_big_fraction_exponent_ordered}
		\end{align} 
	\end{lemma}
	\begin{proof}
		Taking log on both sides of \eqref{eq:lem_target_inequality_1_plus_1_over_n_to_power_k_minus_1_at_least_e_to_big_fraction_exponent_ordered}, the required inequality is equivalent to 
		\begin{align}
			\ln\left(1 + \frac{1}{n}\right) \geq 
			\frac{2}{n^{1+\frac{1}{k}}} - \frac{1}{(k-1) \cdot n^{\frac{1}{k}}}  \,. \label{eq:required_ln_of_1_plus_1_over_n_at_least_2_over_big_term_minus_1_over_k_minus_1_times_n_to_1_over_k_ordered}
		\end{align}
		We first show several independent inequalities and then combine them to obtain the inequality required by the lemma.
		Recall that  $\ln(1+x) \geq \frac{x}{1+x}$ for all  $x > -1$ (see, e.g., \cite{bernoulli}). Taking $x = 1/n$ yields  
		\begin{align}
			\ln\left(1 + \frac{1}{n}\right) \geq \frac{1/n}{1 + 1/n} = \frac{1}{n+1}\,. \label{eq:ln_1_plus_1_over_n_at_least_1_over_1_plus_n_plus_1_ordered}
		\end{align}
		Since $n > k \geq 3$, we get $n \geq 3$.
		By Lemma~\ref{lem:helper_k_to_power_1_plus_1_over_k}, we obtain $n^{\left(1 + \frac{1}{n}\right)} \geq n+1$. Since $k < n$, we get 
		\begin{align}
			n^{\left(1 + \frac{1}{k}\right)} \geq n^{\left(1 + \frac{1}{n}\right)}  \geq n+1\,. \label{eq:n_to_power_1_plus_1_over_k_at_least_n_plus_1_ordered}
		\end{align}
		Next we will show that 
		\begin{align}
			\frac{1}{n+1} \geq \frac{2}{n^{1+\frac{1}{k}}} - \frac{1}{(k-1) \cdot n^{\frac{1}{k}}}, \label{eq:intermediate_1_over_n_plus_1_at_least_2_over_n_to_1_plus_1_over_k_minus_little_stuff_like_k_minus_1_times_n_to_1_over_k_ordered}
		\end{align}
		which is equivalent to 
		\begin{align}
			(k-1)\cdot  n^{1+\frac{1}{k}} \geq 2(k-1)(n+1) - n(n+1)  \,. \label{eq:intermediate_k_minus_1_times_all_n_to_1_plus_1_over_k_at_least_bunch_of_little_terms_ordered}
		\end{align}
		By \eqref{eq:n_to_power_1_plus_1_over_k_at_least_n_plus_1_ordered} we have 
		\begin{align} 
			(k-1) n^{\left(1+\frac{1}{k} \right)} \geq (k-1)(n+1)\,. \label{eq:k_minus_1_all_times_n_to_power_1_plus_1_over_k_ordered}
		\end{align}
		Since $n > k-1$, we have $k-1 > 2(k-1)-n$, which multiplied by $n+1$ on both sides gives 
		\begin{align}
			(k-1)(n+1) \geq 2(k-1)(n+1) - n(n+1) \,. \label{eq:k_minus_1_all_times_n_plus_1_at_least_2_times_k_minus_1_times_n_plus_1_minus_n_times_n_plus_1_ordered}
		\end{align}
		Combining \eqref{eq:k_minus_1_all_times_n_to_power_1_plus_1_over_k_ordered} and \eqref{eq:k_minus_1_all_times_n_plus_1_at_least_2_times_k_minus_1_times_n_plus_1_minus_n_times_n_plus_1_ordered} yields 
		\begin{align}
			(k-1) n^{\left(1+\frac{1}{k} \right)} & \geq  (k-1)(n+1) \explain{By \eqref{eq:k_minus_1_all_times_n_to_power_1_plus_1_over_k_ordered}} \\
			& \geq 2(k-1)(n+1) - n(n+1) \explain{By \eqref{eq:k_minus_1_all_times_n_plus_1_at_least_2_times_k_minus_1_times_n_plus_1_minus_n_times_n_plus_1_ordered}} \,.
		\end{align}
		Thus \eqref{eq:intermediate_k_minus_1_times_all_n_to_1_plus_1_over_k_at_least_bunch_of_little_terms_ordered} holds, so \eqref{eq:intermediate_1_over_n_plus_1_at_least_2_over_n_to_1_plus_1_over_k_minus_little_stuff_like_k_minus_1_times_n_to_1_over_k_ordered} holds as well. Combining \eqref{eq:ln_1_plus_1_over_n_at_least_1_over_1_plus_n_plus_1_ordered} and \eqref{eq:intermediate_1_over_n_plus_1_at_least_2_over_n_to_1_plus_1_over_k_minus_little_stuff_like_k_minus_1_times_n_to_1_over_k_ordered} yields 
		\begin{align}
			\ln\left(1 + \frac{1}{n}\right) & \geq \frac{1}{n+1} \explain{By \eqref{eq:ln_1_plus_1_over_n_at_least_1_over_1_plus_n_plus_1_ordered}} \\
			& \geq \frac{2}{n^{1+\frac{1}{k}}} - \frac{1}{(k-1) \cdot n^{\frac{1}{k}}} \explain{By \eqref{eq:intermediate_1_over_n_plus_1_at_least_2_over_n_to_1_plus_1_over_k_minus_little_stuff_like_k_minus_1_times_n_to_1_over_k_ordered}}
		\end{align}
		Thus \eqref{eq:required_ln_of_1_plus_1_over_n_at_least_2_over_big_term_minus_1_over_k_minus_1_times_n_to_1_over_k_ordered} holds, which is equivalent to the required inequality \eqref{eq:lem_target_inequality_1_plus_1_over_n_to_power_k_minus_1_at_least_e_to_big_fraction_exponent_ordered}.  This completes the proof.
	\end{proof}
	
	\begin{lemma} \label{lem:f_of_k_to_1_over_k_is_negative_when_k_at_least_3}
		Let  $k \in \mathbb{N}$ such that $k \geq 3$. Then
		$2k^2 \cdot k^{\frac{1}{k}} - 3k^2 + k - 2 < 0\,. $
	\end{lemma}
	\begin{proof}
		Let $f : (0, \infty) \to \mathbb{R}$ be 
		$f(x) = 2x^2 \cdot x^{\frac{1}{x}} - 3x^2 + x - 2$. We check separately for $k \in \{3, 4, 5, 6\}$:
		\begin{itemize}
			\item $f(3) = 18 \cdot 3^{\frac{1}{3}} - 26 < -0.01 < 0$ and  $f(4) = 32 \cdot 4^{\frac{1}{4}} - 46 < -0.7 < 0$.
			\item $f(5) = 50 \cdot 5^{\frac{1}{5}} - 72 < -3 < 0 $ and $f(6) = 72 \cdot 6^{\frac{1}{6}} - 104 < -6 < 0$.
		\end{itemize}
		Thus it remains to show  the required inequality when $k \geq 7$. 
		The function $x^{\frac{1}{x}}$ has a global maximum at $e^{\frac{1}{e}}$ (see, e.g., Wolfram Alpha \cite{wolfram}). Then 
		\begin{align}
			f(x) = 2x^2 \cdot x^{\frac{1}{x}} - 3x^2 + x - 2  & \leq 2x^2 \cdot e^{\frac{1}{e}} - 3x^2 + x-2 \notag \\
			& < -0.11 x^2 + x -2  \notag \\
			& < 0 \qquad \forall x \geq 7\,.
		\end{align}
		Thus $f(k) < 0$ for all $k \geq 3, k \in \mathbb{N}$, as required.
	\end{proof}
	
	\begin{lemma} \label{lem:helper_lemma_main_1}
		Let $k \geq 2, n\geq 1$,  $c \in [1/n,1]$, and the  sequence $\{\gamma_{\ell}\}_{\ell= 1}^{\infty}$ with  $\gamma_1 = 0$ and  $\gamma_{\ell} = 2\ell$ for  $\ell \geq 2$.
		Then  
		\begin{align} \label{eq:required_inequality_ckn_non-boundary}
			x\left(1 + \frac{\gamma_{k-1}}{n}\right) + ({k-1}) \cdot \frac{(n c - x)^{\frac{k}{k-1}}}{n \cdot x^{\frac{1}{k-1}}} - \gamma_{k-1} \cdot  c \geq  c \cdot k n^{\frac{1}{k}} - \gamma_k \cdot c, \qquad \forall x \in (1/2, nc]\,.
		\end{align}
	\end{lemma}
	\begin{proof}
		When $x = nc$, inequality \eqref{eq:required_inequality_ckn_non-boundary} is equivalent to 
		\begin{align}  \label{eq:required_boundary_case_x=nc}
			nc\left(1 + \frac{\gamma_{k-1}}{n}\right) - \gamma_{k-1} \cdot c & \geq c \cdot k n^{\frac{1}{k}} - \gamma_k \cdot  c t\iff  \\
			n - k n^{\frac{1}{k}} & \geq - \gamma_k \explain{Dividing both sides by $c$ and re-arranging terms.}\,.
		\end{align}
		Since $n  -k n^{\frac{1}{k}} \geq 1 - k$ for all $n \geq 1$, it follows that \eqref{eq:required_boundary_case_x=nc} holds if   $\gamma_k \geq k - 1$, which is the case since $\gamma_k = 2k$.
		Thus \eqref{eq:required_inequality_ckn_non-boundary} holds when $x = nc$.
		
		From now on we can assume $x \in (1/2, nc)$. 
		Let $ t = \left(nc/x - 1 \right)^{\frac{1}{k-1}}$.  Then $0 < t < (2nc-1)^{\frac{1}{k-1}}$. Equivalently, $x = \frac{nc}{1 + t^{k-1}}$, which substituted in 
		\eqref{eq:required_inequality_ckn_non-boundary} gives  
		\begin{align}
			& x\left(1 + \frac{\gamma_{k-1}}{n}\right) + ({k-1}) \cdot \frac{(n c - x)^{\frac{k}{k-1}}}{n \cdot x^{\frac{1}{k-1}}} - \gamma_{k-1} \cdot  c \geq c k n^{\frac{1}{k}} - \gamma_k \cdot c \iff \notag \\
			& \left( \frac{nc}{1 + t^{k-1}}\right) \left(1 + \frac{\gamma_{k-1}}{n}\right) + ({k-1}) \cdot \frac{\left(n c - \frac{nc}{1 + t^{k-1}}\right)^{\frac{k}{k-1}}}{n \cdot \left( \frac{nc}{1 + t^{k-1}}\right)^{\frac{1}{k-1}}} - \gamma_{k-1} \cdot  c \geq c k n^{\frac{1}{k}} - \gamma_k \cdot c \,. \label{eq:changing_from_x_to_t_variable_ordered}  
		\end{align}
		
		Multiplying both sides of \eqref{eq:changing_from_x_to_t_variable_ordered} by $\left({1+t^{k-1}}\right)/{c}$ and simplifying, we see   \eqref{eq:changing_from_x_to_t_variable_ordered} is equivalent to 
		\begin{align}
			(k-1) \cdot t^k - t^{k-1} \cdot \left(k n^{\frac{1}{k}} - \gamma_k + \gamma_{k-1}\right) + \left(n - k n^{\frac{1}{k}} + \gamma_k\right) \geq 0\,.\label{eq:rephrasing_t_variable}
		\end{align}
		We will show that \eqref{eq:rephrasing_t_variable}  holds, which will imply  inequality \eqref{eq:required_inequality_ckn_non-boundary} for all $x \in (1/2, nc)$.
		We consider two cases, depending on whether $k=2$ or $k\geq 3$.
		
		\paragraph{Case $k =2$.} Since $\gamma_1 = 0$ and $\gamma_2 = 4$,  inequality \eqref{eq:rephrasing_t_variable} is equivalent to 
		\begin{align}
			& t^2 - t \left(2 \sqrt{n} - 4 + 0 \right) + \left(n - 2 \sqrt{n} + 4 \right) \geq 0 \iff  \label{eq:simple_inequality_raw_sum_of_squares_k_2_ordered_search}\\
			& \left( t - \left(\sqrt{n} - 2 \right) \right)^2 + 2 \sqrt{n} \geq 0, \label{eq:simple_inequality_sum_of_squares_k_2_ordered_search}
		\end{align}
		where  \eqref{eq:simple_inequality_sum_of_squares_k_2_ordered_search} was obtained from \eqref{eq:simple_inequality_raw_sum_of_squares_k_2_ordered_search} by re-arranging terms. 
		Inequality \eqref{eq:simple_inequality_sum_of_squares_k_2_ordered_search} clearly holds, which implies \eqref{eq:rephrasing_t_variable}  and completes the analysis for $k=2$.
		
		\paragraph{Case $k \geq 3$.} We define a function 
		$h : \left(0, \infty \right) \to \mathbb{R}$ to capture the left hand side of \eqref{eq:rephrasing_t_variable}. Then we will show $h$ is non-negative on the entire domain, which will imply \eqref{eq:rephrasing_t_variable}. Let 
		\begin{align}
			h(t) = (k-1) \cdot t^k - t^{k-1} \cdot \left(k n^{\frac{1}{k}} - \gamma_k + \gamma_{k-1}\right) + \left(n - k n^{\frac{1}{k}} + \gamma_k\right)\,.
		\end{align}
		
		The first and second derivatives of $h$ are 
		\begin{align}
			h'(t) & = t^{k-1} \cdot k(k-1) - t^{k-2} \cdot (k-1) \left(k n^{\frac{1}{k}} - \gamma_k + \gamma_{k-1}\right) \notag \\
			h''(t) & = t^{k-2} \cdot k(k-1)^2 - t^{k-3} \cdot (k-1)(k-2)\left(k n^{\frac{1}{k}} - \gamma_k + \gamma_{k-1}\right)\,.
		\end{align}
		On  $\left(0, \infty\right)$ we have:
		\begin{itemize}
			\item the function $h'$ has a unique root at $t_1 = n^{\frac{1}{k}} + \frac{\gamma_{k-1}-\gamma_k}{k}$;
			\item the function   $h''$ has a unique root at $t_2 = \left(\frac{k-2}{k-1}\right)\left(n^{\frac{1}{k}} + \frac{\gamma_{k-1}-\gamma_k}{k}\right) = \left(\frac{k-2}{k-1}\right) t_1$.
		\end{itemize}  
		
		Clearly $t_2 < t_1$.
		Since $n \geq 1$ and $\gamma_k - \gamma_{k-1} = 2$ when $k \geq 3$, we have 
		\[ 
		n^{\frac{1}{k}} \geq 1 > \frac{2}{k} = \frac{\gamma_{k} - \gamma_{k-1}}{k} \geq 0\,.
		\]
		Thus $n^{\frac{1}{k}} +  ({\gamma_{k-1}-\gamma_k})/{k} > 0$,  so $t_2 > 0$.
		We obtain   $ 0 < t_2 < t_1$. 
		Moreover,  $h'(t) < 0$ for $t < t_1$ and $h'(t) > 0$ for $t > t_1$; similarly $h''(t) < 0$ for $t < t_2$ and $h''(t) > 0$ for $t > t_2$.
		Thus  $h$ is
		\begin{itemize}
			\item concave and decreasing on $(0,t_2)$;
			\item convex and decreasing on $(t_2, t_1)$;
			\item convex and increasing on $(t_1, \infty)$. 
		\end{itemize}  
		Thus $h$ has a unique global minimum at $t_1$, so   the required inequality \eqref{eq:rephrasing_t_variable} holds if    $h(t_1) \geq 0$.
		We have 
		\begin{align}
			h(t_1) & = (k-1) \cdot t_1^k - t_1^{k-1} \cdot \left(k n^{\frac{1}{k}} - \gamma_k + \gamma_{k-1}\right) + \left(n - k n^{\frac{1}{k}} + \gamma_k\right)  \notag \\
			& = n - k n^{\frac{1}{k}} + \gamma_k - \left(n^{\frac{1}{k}} + \frac{\gamma_{k-1}-\gamma_k}{k} \right)^k\,. \label{eq:h_t1}
		\end{align}

		Since $\gamma_{k} = 2k$ and $\gamma_{k-1} = 2(k-1)$ for $k\geq 3$, we have 
		
		\begin{align}
			n^{\frac{1}{k}} + \frac{\gamma_{k-1} - \gamma_k}{k} = n^{\frac{1}{k}} - \frac{2}{k}   \geq 1 - \frac{2}{k} > 0\,. \label{eq:simple_ineq_n_1_per_k_minus_2_per_k}
		\end{align}
		Using \eqref{eq:simple_ineq_n_1_per_k_minus_2_per_k} in \eqref{eq:h_t1} gives 
		\begin{align}
			h(t_1) & = n - k n^{\frac{1}{k}} + 2k - \left(n^{\frac{1}{k}} - \frac{2}{k} \right)^k \notag \\
			& > 0 \,. \explain{By Lemma~\ref{eq:helper_lemma_1}.}
		\end{align}
		Thus $h(t_1) \geq 0$, and  so inequality \eqref{eq:rephrasing_t_variable} also holds in the case $k \geq 3$.
		
		\paragraph{Combining the cases.}		In both cases $k=2$ and $k \geq 3$, inequality  \eqref{eq:rephrasing_t_variable} holds, which implies \eqref{eq:required_inequality_ckn_non-boundary} for all $x \in (1/2, nc)$. This completes the proof.
	\end{proof}
	
	\begin{lemma} \label{eq:helper_lemma_1}
		Let $n  \geq 1$ and $k \geq 3$, where $k, n \in \mathbb{N}$. Then  
		$
		n - k n^{\frac{1}{k}} + 2(k-1) -  \left(n^{\frac{1}{k}} - \frac{2}{k}\right)^k \geq 0\,.
		$
	\end{lemma}
	\begin{proof}
		Define  $f : \left[1 - \frac{2}{k}, \infty\right) \to \mathbb{R}$ as 
		\begin{align}  \label{eq:def_f_show_positive_ordered}
			f(x) = \left(x + \frac{2}{k}\right)^k - k \left(x + \frac{2}{k}\right) + 2(k-1) -  x^k = \left(x + \frac{2}{k}\right)^k - kx - x^k + 2k -4  \,.
		\end{align}
		The lemma statement requires showing $f\left(n^{\frac{1}{k}}-\frac{2}{k}\right) \geq 0$. We will show that  $f(x) \geq 0$ for all $x \geq 1 - 2/k$, which will imply the required inequality.
		We divide the range of $x$ in two parts and analyze each separately.
		
		\paragraph{Case $x \in \left[1 - \frac{2}{k},1\right]$.}  We consider a few sub-cases depending on the value of $k$:
		\begin{itemize} 
			\item If $k = 3$, then 
			$ f(x) = \left(x + \frac{2}{3}\right)^3 - 3x - x^3 +2 \cdot 3 - 4 = \frac{1}{27} \left( 54x^2 - 45x  + {62}  \right)\,.$
			Then $\Delta < 0$, so $f(x) > 0$ for all $x \in \mathbb{R}$.
			\item If $k \geq 4$, then  using the inequalities $1-2/k \leq x \leq 1$ in the definition of $f$ from \eqref{eq:def_f_show_positive_ordered} gives   
			\begin{align}
				f(x) \geq 1^k - k \cdot 1 - 1^k + 2k -4 = k - 4   \geq 0 \,.
			\end{align}
		\end{itemize}
		\paragraph{Case $x > 1$.} Then   
		\begin{align} \label{eq:lemma_simple_inequality_f}
			f(x) & \geq x^k + \binom{k}{1} \cdot x^{k-1} \cdot \frac{2}{k} + \binom{k}{2} \cdot x^{k-2} \cdot \left( \frac{2}{k} \right)^2 - kx - x^k + 2k -4 
			\notag \\
			& = 2 x^{k-1} +\frac{2(k-1)}{k} \cdot x^{k-2} - kx + 2k - 4 \,.
		\end{align}
		When $k = 3$, using  inequality  \eqref{eq:lemma_simple_inequality_f}, we obtain  
		$f(x) \geq 2x^2 - {5}x/3 + 2 \geq 0 \; \; \forall x \in [1, \infty)\,.$ 
		
		\noindent Thus from now on we can assume  $k \geq 4$.  
		Using $x > 1$ and $ k \geq 4$, we obtain 
		\begin{align} 
			f(x) & \geq 2 x^{k-1} +\frac{2(k-1)}{k} \cdot x^{k-2} - kx + 2k - 4 \explain{By  \eqref{eq:lemma_simple_inequality_f}} \\ 
			& > 2 x^{k-2}  - k x + k, \label{eq:simple_inequality_f_simpler}
		\end{align}
		
		Let $f_1: (0, \infty) \to \mathbb{R}$ be $f_1(x) = 2x^{k-2}-kx+k$. The derivatives are $f_1'(x) = 2(k-2)x^{k-3}-k$ and $f_1''(x)=2(k-2)(k-3)x^{k-4}$. Since we are in the case $k > 3$, we have $f_1''(x) > 0$ for $x > 0$. Thus the function $f_1$ is convex and has a unique global minimum at the point $x^*$ for which $f_1'(x^*)=0$, that is, at  
		\begin{align}
			x^* = \left( \frac{k}{2(k-2)}\right)^{\frac{1}{k-3}}\,.
		\end{align} Since $k \geq 4$, we have $\frac{k}{2(k-2)} \leq 1$, and so 
		\begin{align} \label{eq:inequality_simple_f1}
			f_1(x^*) & = 2  \left( \frac{k}{2(k-2)}\right)^{\frac{k-2}{k-3}} - k  \left( \frac{k}{2(k-2)}\right)^{\frac{1}{k-3}} + k  \geq 2  \left( \frac{k}{2(k-2)}\right)^{\frac{k-2}{k-3}}  - k \cdot 1 + k > 0 \,.
		\end{align}
		Combining \eqref{eq:simple_inequality_f_simpler} and  \eqref{eq:inequality_simple_f1} gives  
		$f(x) \geq f_1(x) \geq f_1(x^*) > 0 \; \; \forall x > 0\,.$ 
		In particular, the required inequality  holds for all $x > 1$, which completes the case.
		
		Combining the cases, we obtain $f(x) \geq 0$ for all $x \geq 1-2/k$, and so $f\left(n^{\frac{1}{k}} - 2/k \right) \geq 0$. This  completes the proof of the lemma.
	\end{proof}
	
	\begin{corollary}
		\label{corr:helper_lemma_1} 
		For each $n  \geq 1$ and $k \geq 3$, we have: 
		$
		n  + 2k >  k n^{\frac{1}{k}} + 2  \,.
		$
	\end{corollary}
	\begin{proof}  
		Lemma~\ref{eq:helper_lemma_1} yields  
		$
		n  + 2k \geq   k n^{\frac{1}{k}} + 2 + \left(n^{\frac{1}{k}} - \frac{2}{k}\right)^k \,.
		$
		Since $k \geq 3$, we also have $n^{\frac{1}{k}} \geq 1 > \frac{2}{k}$, so $\left(n^{\frac{1}{k}} - \frac{2}{k}\right)^k  > 0$. Thus $n + 2k > k n^{\frac{1}{k}} + 2 $, as required.
	\end{proof}

	\section{Appendix: Unordered search} \label{app:unordered_search}
	
	In this section we include the omitted proofs for unordered search. 
	
	\subsection{Unordered search upper bounds}
	
	Here we give  the optimal randomized algorithms on a worst case input and deterministic algorithms for any input distribution for unordered search.
	
	\paragraph{Deterministic algorithms for a worst case input.}
	We start with a simple observation, namely that the optimal $k$-round deterministic algorithm in the worst case just queries $n/k$ locations in each round.

	\begin{observation} \label{obs:k_round_deterministic_algorithm_unordered}
		For each $k \in \{1, \ldots, n\}$, 
		there is a deterministic $k$-round algorithm for ordered search that  always succeeds and asks at most $n$ queries in the worst case:  
		\begin{itemize} \item In each round $j \in [k]$, issue   $\lfloor n/k \rfloor$ or $\lceil n/k \rceil$ at locations not previously queried. When the item is found, return it and halt. 
		\end{itemize}
	\end{observation}
	\begin{proof}
		This algorithm queries $n$ locations in the worst case, and so always finds the element using at most $n$ queries.
	\end{proof}

	\paragraph{Randomized algorithms for a worst case input.}  The optimal randomized algorithm is described next.
	
	\bigskip 
	
	\noindent \textbf{Proposition~\ref{thm:select_ub_rand} (restated).}  
	\emph{Let $p \in (0,1]$ and $k,n \in \mathbb{N}_{\geq 1}$. Then}
	\[ 
	\mathcal{R}_{1-p}(\mbox{unordered}_{n,k}) \leq np \cdot \frac{k+1}{2k} + p + \frac{p}{n} \,.
	\] 
	\begin{proof}
		Consider the following algorithm, which has an all-or-nothing structure.
		\begin{description}
			\item[$\diamond$ With probability $1-p$:] do nothing.
			\item[$\diamond$ With probability $p$:] run the following protocol: 
			\begin{itemize} 
				\item Choose a uniform random permutation $\vec{\pi} = (\pi_1, \ldots, \pi_n)$ of $[n]$. For each $j \in [k]$, define  
				\[ m_j =  \left \lceil n \cdot {j}/{k} \right \rceil 
				\qquad \mbox{and} \qquad 
				S_j = \{\pi_1, \ldots, \pi_{m_j}\}\,.
				\] 
				\item In each round $j \in [k]$: query all the locations in $S_j$ that have not been queried yet. Whenever the element is found, return its location and halt immediately.
			\end{itemize}
		\end{description}
		We bound the success probability and the expected number of queries of the algorithm.
		\paragraph{Success probability.}
		If the algorithm finishes execution in exactly $j \geq 1$ rounds, then the number of queries issued is $|S_j| = \lceil n j /k \rceil$. By the end of the $k$-th round, the number of queries issued would be $\lceil n  k/k\rceil = n$. 
		Thus if the algorithm enters round $1$ then  it doesn't stop until finding  where the element is, so the success probability is exactly $p$.

		\paragraph{Expected number of queries.}
		Let $A_j$ be the event that the algorithm halts exactly at the end of round $j$.
		On event $A_j$, the algorithm issues  $\lceil n  {j}/{k} \rceil$ queries.
		The probability of event $A_j$ is
		\begin{align} \label{eq:pr_A_j_unordered_randomized_ub}
			\Pr(A_j) = p \cdot   \frac{\lceil n \cdot \frac{j}{k} \rceil - \lceil n \cdot \frac{j-1}{k}\rceil}{n}  \,.
		\end{align}
		Then the expected number of queries issued by the algorithm is $ q_k = \sum_{j=1}^k \Pr(A_j) \cdot \lceil n  {j}/{k}\rceil$.  Using \eqref{eq:pr_A_j_unordered_randomized_ub}, we can rewrite this as 
		\begin{align}
			q_k & = 
			\sum_{j=1}^k p \cdot \frac{\lceil n \cdot \frac{j}{k} \rceil - \lceil n \cdot \frac{j-1}{k}\rceil}{n} \cdot \left \lceil n \cdot \frac{j}{k} \right \rceil \notag \\
			&= np + \frac{p}{n} \cdot \sum_{j=1}^{k-1} \left( \left  \lceil n \cdot \frac{j}{k} \right \rceil - \left \lceil n \cdot \frac{j+1}{k} \right \rceil \right)  \left  \lceil n \cdot \frac{j}{k} \right \rceil \,. \label{eq:rewriting_q_k_to_fit_lemma_for_unordered_both_det_and_randomized}
		\end{align}
		Applying  Lemma~\ref{lem:unordered_upper_bound_shared} with $x = n$ to bound the  expression in  \eqref{eq:rewriting_q_k_to_fit_lemma_for_unordered_both_det_and_randomized} yields 
		\begin{align}
			q_k &\le np + \frac{p}{n}\cdot  \left( -\frac{n^2(k-1)}{2k} + \lceil n \rceil + 1 \right) = np \cdot \frac{k+1}{2k} + p + \frac{p}{n}\,.
		\end{align}
		This completes the proof.
	\end{proof}
	
	\paragraph{Deterministic algorithms for a random input.}  Given an input distribution $\Psi = (\Psi_1, \ldots, \Psi_n)$, we next design  an optimal deterministic algorithm  for it.
	
	\bigskip 
	
	\noindent \textbf{Proposition~\ref{thm:select_ub} (restated).} \emph{ Let $p \in (0,1]$ and $k,n \in \mathbb{N}_{\geq 1}$. 
		Then }
	\[\mathcal{D}_{1-p}(\mbox{unordered}_{n,k}) \leq 
	np \Bigl( 1 - \frac{k-1}{2k} \cdot p\Bigr) + 1+ p + \frac{2}{n}\,.
	\]
	\begin{proof}
		Suppose the input distribution is $\Psi = (\Psi_1, \ldots, \Psi_n)$.
		Let $\pi$ be a permutation of $[n]$ such that $\Psi_{\pi_1} \geq \ldots \geq \Psi_{\pi_n}$. For each $j \in [k]$, let  $S_j \subseteq [n]$ be  the top $\lceil np \cdot \frac{j}{k} \rceil$ array positions in the ordering given by $\pi$, that is:
		\[
		S_j = \left \{\pi_1, \ldots, \pi_{m_j} \right \}, \mbox{ where } m_j = \left \lceil np \cdot \frac{j}{k} \right \rceil \,.
		\]
		Consider the following algorithm.
		\begin{quote}
			\emph{In each round $j \in [k]$:
				Query the locations in $S_j$ that have not been queried in the previous $j-1$ rounds. 
				Once the element is found, return its location and halt immediately.}
		\end{quote}
		\paragraph{Success probability.} To bound the success probability of the algorithm, observe that the subsets $S_j$ are nested, that is: 
		$ S_1 \subseteq  \ldots \subseteq S_k \,.$
		By the end of round $k$, the algorithm has only queried locations from $S_k$ and either found the element or exhausted $S_k$. 
		
		For all $j \in [k]$, denote the probability that the sought element is in $S_j$ by 
		\[ 
		\phi_j = \sum_{\ell \in S_j} \Psi_{\ell}\,.
		\] 
		Lemma~\ref{eq:simple_lemma_sorted_array} gives  $\phi_j \geq {|S_j|}/{n}$. Then the success probability is  $
		\widetilde{p}  = \sum_{\ell \in S_k} \Psi_{\ell} 
		\geq \frac{|S_k|}{n} = \frac{\lceil np \cdot \frac{k}{k} \rceil}{n} \geq p\,.
		$ 
		\paragraph{Expected number of queries.} Next we bound the expected number of queries. For each $j \in [k]$, let $A_j$ be the event that the algorithm halts exactly at the end of round $j$. On event $A_j$, the algorithm issues a total of  $|S_j|$ queries.  Moreover,  the probability of event $A_j$ is 
		\begin{align}  \label{eq:probability_a_j_unordered_det_lb}
			\Pr(A_j) = 
			\begin{cases}
				\phi_j - \phi_{j-1} & \text{ if } 1 \leq j \leq k-1, \mbox{ where } \phi_0 = 0\,. \\
				1 - \phi_{k-1} & \text{ if } j = k\,.
			\end{cases}
		\end{align}
		Let $S_0 = \emptyset$. 
		For each $j \in \{0, \ldots, k\}$, define  
		\begin{align}
			\eta_j = 
			\phi_j - \frac{|S_j|}{n} \,.
		\end{align}
		We have $\eta_j \geq 0$  since $\phi_j \geq |S_j|/n$.
		
		Then the expected number $q_k$ of queries issued  on input distribution $\Psi$ can be bounded by:
		\begin{align}
			q_k & = \sum_{j=1}^{k} \Pr(A_j) \cdot |S_j|   = \left(1 - \phi_{k-1} \right) \cdot |S_k| + \sum_{j=1}^{k-1} \left( \phi_j - \phi_{j-1}\right) \cdot |S_j| \explain{By \eqref{eq:probability_a_j_unordered_det_lb}}   \\
			& = \left(1 - \frac{|S_{k-1}|}{n} - \eta_{k-1} \right) \cdot |S_k| + \sum_{j=1}^{k-1} \left( \frac{|S_j|}{n} + \eta_j -\frac{|S_{j-1}|}{n} - \eta_{j-1} \right) \cdot |S_j| \explain{By definition of $\eta_j$} \\
			& = \left[ \left(1 - \frac{|S_{k-1}|}{n}  \right) \cdot |S_k| + \sum_{j=1}^{k-1} \left( \frac{|S_j| - |S_{j-1}|}{n}  \right) \cdot |S_j|  \right] - \eta_{k-1} \cdot |S_k| + \sum_{j=1}^{k-1} \left( \eta_{j} - \eta_{j-1} \right) \cdot |S_j|\,. \label{eq:decomposing_q_j_unordered_deterministic_lb}
		\end{align}

		We have $0 = |S_0| \leq |S_1| \leq \ldots \leq |S_{k}|$, and so 
		$     \sum_{j=1}^{k-1} \left( \eta_{j} - \eta_{j-1} \right) \cdot |S_j| \leq \eta_{k-1} \cdot |S_{k-1}| \,.$ Thus  
		\begin{align} \label{eq:bound_telescopic_eta_j_up_to_k_minus_one_unordered}
			- \eta_{k-1} \cdot |S_k| + \sum_{j=1}^{k-1} \left( \eta_{j} - \eta_{j-1} \right) \cdot |S_j|  \leq - \eta_{k-1} \cdot |S_k| + \eta_{k-1} \cdot |S_{k-1}|  \leq 0 \,.
		\end{align}
		Using \eqref{eq:bound_telescopic_eta_j_up_to_k_minus_one_unordered} in \eqref{eq:decomposing_q_j_unordered_deterministic_lb} gives 
		\begin{align} \label{eq:intermediate_bound_q_k_deterministic_on_random_input_distribution_geq_sum_of_stuff_unordered}
			q_k \leq \left(1 - \frac{|S_{k-1}|}{n}  \right) \cdot |S_k| + \sum_{j=1}^{k-1} \left( \frac{|S_j| - |S_{j-1}|}{n}  \right) \cdot |S_j|  \,.
		\end{align}
		
		We observe that  
		\begin{align}
			\left(1 - \frac{|S_{k-1}|}{n} \right) \cdot |S_k| - \sum_{j=1}^{k-1} \frac{|S_{j-1}|}{n} \cdot |S_j| = |S_k| - \sum_{j=1}^{k-1} \frac{|S_{j+1}|}{n} \cdot |S_j| \,. \label{eq:rewriting_sum_s_j_times_s_j_plus_one_unordered}
		\end{align}
		Adding $\sum_{j=1}^{k-1} {|S_j|^2}/{n}$ to both sides of \eqref{eq:rewriting_sum_s_j_times_s_j_plus_one_unordered}, we obtain 
		\begin{align}
			\left(1 - \frac{|S_{k-1}|}{n} \right) \cdot |S_k| + \sum_{j=1}^{k-1} \left( \frac{|S_j| - |S_{j-1}|}{n}\right) \cdot |S_j| = |S_k| + \sum_{j=1}^{k-1} \left( \frac{|S_j| - |S_{j+1}|}{n} \right) \cdot |S_j|    \,.  \label{eq:result_of_rewrite_s_k_plus_sum_of_s_j_times_s_j_minus_s_j_plus_one_unordered}
		\end{align}
		Substituting \eqref{eq:result_of_rewrite_s_k_plus_sum_of_s_j_times_s_j_minus_s_j_plus_one_unordered} in \eqref{eq:intermediate_bound_q_k_deterministic_on_random_input_distribution_geq_sum_of_stuff_unordered} and using the identity $|S_j| = \lceil np \cdot j/k \rceil$ gives 
		\begin{align}
			q_k & \leq |S_k| + \sum_{j=1}^{k-1} \left( \frac{|S_j| - |S_{j+1}|}{n} \right) \cdot |S_j|   = \lceil np \rceil + \sum_{j=1}^{k-1} \left( \frac{ \lceil np \cdot \frac{j}{k} \rceil  - \lceil np \cdot \frac{j+1}{k} \rceil }{n} \right) \cdot \left \lceil np \cdot \frac{j}{k} \right \rceil  \,.  \label{eq:bound_so_far_for_q_k_unordered}
		\end{align}
		
		Applying Lemma~\ref{lem:unordered_upper_bound_shared} with $x = np$ gives 
		\begin{align}
			\sum_{j=1}^{k-1} \left( \left \lceil np \cdot \frac{j}{k} \right  \rceil - \left \lceil np \cdot \frac{j+1}{k} \right  \rceil \right) \cdot \left  \lceil np \cdot \frac{j}{k} \right  \rceil
			\le -\frac{(np)^2(k-1)}{2k} + \lceil np \rceil +1 \,.  \label{eq:result_of_applying_lemma_with_x_equal_np_unordered_deterministic}
		\end{align}
		Combining    \eqref{eq:bound_so_far_for_q_k_unordered}  and \eqref{eq:result_of_applying_lemma_with_x_equal_np_unordered_deterministic}  gives: 
		\begin{align}
			q_k &\le \lceil np \rceil + \frac{1}{n}\left( -\frac{(np)^2(k-1)}{2k} + \lceil np \rceil + 1 \right)
			= np \left( 1 - \frac{p(k-1)}{2k} \right) +  \frac{\lceil np \rceil }{n} + \frac{1}{n}  + \lceil np \rceil - np \notag \\
			& \leq np \left( 1 - \frac{p(k-1)}{2k} \right) + p + \frac{2}{n} + \lceil np \rceil - np \,.
		\end{align}
		This completes the proof.
	\end{proof}

	\subsection{Unordered search lower bounds}

	In this section we include the  unordered search lower bounds. 
	
	\bigskip 
	
	\noindent \textbf{Proposition~\ref{thm:randomized_lb_k_rounds_unordered_search} (restated).}
	\emph{Let $p \in (0,1]$ and $k,n \in \mathbb{N}_{\geq 1}$. Then $\mathcal{R}_{1-p}(\mbox{unordered}_{n,k}) \geq np \cdot \frac{k+1}{2k} $.}
	\begin{proof}
		For proving the required lower bound, it  will suffice to assume the input is drawn from the uniform distribution. 
		By an average argument, such a lower bound will also hold for a worst case input.
		
		Let  $\mathcal{A}_k$ be  a   $k$-round randomized algorithm that succeeds with probability $p$ when facing the uniform distribution as input and denote by  $q_k(n,p)$ the expected number of queries asked by  $\mathcal{A}_k$  on the uniform distribution.
		
		In round $1$, the algorithm has some probability $\delta_m$ of asking $m$ queries, for each $m \in \{0, \ldots, n\}$. Moreover, for  each such $m$, there are different (but finitely many) choices for the positions of the $m$ queries of round $1$. However, since the goal is to minimize the number of queries,  it suffices to restrict attention to the best way of positioning the queries in round $1$, breaking ties arbitrarily between different equally good options. For unordered search, each queried location is equivalent to any other since a query only reveals whether the element is there or not.
		
		For  each $m \in \{0, \ldots, n\}$, we  define  the following  variables:
		
		\begin{itemize}
			\item $\delta_m$ is the probability that the algorithm asks $m$ queries in the first round.
			\item  $\alpha_m$ is the probability that the algorithm finds the element in one of the rounds in $\{2, \ldots, k\}$, given that it didn't find it in the first round.
		\end{itemize} 
		
		The probability of finding the element in the first round is $m/n$, so the probability that the algorithm may need to continue to one of the rounds in $\{2, \ldots, k\}$ is $(n-m)/n$. The expected number of queries of $\mathcal{A}_k$ on the uniform distribution is 
		\begin{align} \label{eq:q_k_n_p_unordered}
			q_k(n,p) & = \sum_{m=0}^{n} \delta_m \left( m + \left( \frac{n-m}{n}\right) \cdot  q_{k-1}(n-m, \alpha_{m}) \right), 
		\end{align}
		where the variables are related by the  following constraints:
		\begin{align}
			&  \sum_{m=0}^{n} \delta_m  = 1 \\
			&	p_m = \frac{m}{n} + \left( \frac{n-m}{n} \right) \cdot \alpha_m, \qquad \forall m \in \{0, \ldots, n\}   \label{p_m_as_function_sum_of_blocks_unordered} \\ 
			& 			p = \sum_{m=0}^n \delta_m \cdot p_m \label{eq:unordered_p_sum_of_p_m} \\ 
			&  0 \leq \alpha_{m} \leq 1, \qquad \forall m \in \{0, \ldots, n\} \\
			& \delta_{m} \geq 0, \qquad \forall m  \in \{0, \ldots, n\} \,.
		\end{align}

		\paragraph{Base case.} Proposition~\ref{prop:lb_one_round} gives  $q_1(n,p) \geq np$, as required.
		
		\paragraph{Induction hypothesis.} Suppose $q_{\ell}(v,s) \geq v s \cdot \frac{\ell+1}{2\ell}$ for all $\ell \in [k-1]$, $v \in \mathbb{N}$, and $s \in [0,1]$.
		
		\paragraph{Induction step.} Using the induction hypothesis in \eqref{eq:q_k_n_p_unordered} gives
		\begin{align}
			q_{k}(n, p) & = \sum_{m=0}^{n} \delta_m \left( m + \frac{n-m}{n} \cdot q_{k-1}(n-m, \alpha_m) \right)  \notag \\
			& \geq \sum_{m=0}^{n} \delta_m \left( m + \frac{(n-m)^2}{n} \cdot \alpha_m \cdot \frac{k}{2k-2} \right)  \,.  \label{eq:intermediate_lb_unordered_randomized_expected}
		\end{align}
		Substituting $\alpha_m = ({n \cdot p_m - m})/({n-m})$ from \eqref{p_m_as_function_sum_of_blocks_unordered} in  \eqref{eq:intermediate_lb_unordered_randomized_expected} gives 
		\begin{align}
			q_{k}(n, p) & \geq  \sum_{m=0}^{n} \delta_m \left( m + \frac{(n-m)(n \cdot p_m - m)}{n}  \cdot \frac{k}{2k-2} \right) \,. \label{eq:unordered_q_k_lb_substituting_ih}
		\end{align}
		Lemma~\ref{lem:crucial_unordered_lb_lemma} gives 
		\begin{align}
			m + \frac{(n-m)(n \cdot p_m - m)}{n}  \cdot \frac{k}{2k-2} \geq n \cdot p_m \left( \frac{k+1}{2k} \right)\,. \label{eq:result_of_crucial_lemma_unordered_lb_randomized}
		\end{align}
		Using \eqref{eq:result_of_crucial_lemma_unordered_lb_randomized} in \eqref{eq:unordered_q_k_lb_substituting_ih} gives  
		\begin{align}
			q_{k}(n, p) & \geq \sum_{m=0}^n \delta_m \left( n \cdot p_m \cdot  \frac{k+1}{2k}  \right) = n \cdot \left( \frac{k+1}{2k} \right) \cdot \sum_{m=0}^n \delta_m \cdot p_m \notag  \\
			& = n p \cdot \left( \frac{k+1}{2k} \right)  \,.  \explain{Since $p = \sum_{m=0}^{n} \delta_m \cdot p_m$ by \eqref{eq:unordered_p_sum_of_p_m}}
		\end{align}
	\end{proof}
	
	Next we give the lower bound on the distributional complexity.
	
	\bigskip

	\noindent \textbf{Proposition~\ref{thm:det_lb_k_rounds_unordered_search} (restated).}
	\emph{Let $p \in (0,1]$ and $k,n \in \mathbb{N}_{\geq 1}$. Then $\mathcal{D}_{1-p}(\mbox{unordered}_{n,k}) \geq  np\left(1-\frac{k-1}{2k}p\right) $.}
	\begin{proof}
		For each $\ell \in \mathbb{N}$, let 
		$\mathcal{A}_{\ell}$ be  an optimal  $\ell$-round randomized algorithm that succeeds with probability $p$ when facing the uniform distribution as input. Let $q_{\ell}(n,p)$ be the expected number of queries of algorithm $\mathcal{A}_{\ell}$ when given an array of length $n$.
		
		Since $\mathcal{A}_k$ is deterministic,  it asks a fixed number $m$ of queries in round $1$. Moreover, since the input is drawn from the uniform distribution,  each location is equally likely to contain the answer, and so the actual locations do not matter, but rather only their number. Thus the probability of finding the answer in round $1$ is $m/n$.  Let $\alpha$ be the  probability that the algorithm finds the element in one of the later rounds in $\{2, \ldots, k \}$, given that the element was not found in the first round.
		
		Given these observations, the expected number of queries of the deterministic algorithm can be written as 
		\begin{align} \label{eq:objective_recursion_unordered}
			q_{k}(n,p) = m + \left( \frac{n-m}{n} \right) \cdot q_{k-1}(n-m, \alpha),
		\end{align}
		where the variables are related by the following constraints:
		\begin{align} \label{eq:p_constraints_recursion_unordered} 
			\begin{cases}
				&	p = \frac{m}{n} + \left( \frac{n-m}{n} \right) \cdot \alpha    \\ 
				&  0 \leq \alpha \leq 1\,.  
			\end{cases}
		\end{align} 
		
		We prove by induction on $k$ that that 
		\begin{align} \label{eq:target_unordered_lb_uniform_distribution}
			q_k(n,p) \geq np \left(1 - \frac{k-1}{2k}\cdot  p\right) \,.
		\end{align} 
		
		\paragraph{Base case.} Proposition~\ref{prop:lb_one_round} shows that $q_1(n,p) \geq np$.
		
		\paragraph{Induction hypothesis.} Suppose $q_{\ell}(v,s) \geq v s \left(1 - \frac{\ell-1}{2\ell}\cdot  s\right)$ for all $\ell \in [k-1]$, $v \in \mathbb{N}$, and $s \in [0,1]$.
		
		\paragraph{Induction step.} We prove   \eqref{eq:target_unordered_lb_uniform_distribution} holds for $k$ and all $n \in \mathbb{N}, p \in [0,1]$. The induction hypothesis gives 
		\begin{align} \label{eq:q_k_minus_1_unordered_deterministic_on_uniform_by_ih}
			q_{k-1}(n-m, \alpha) \geq (n-m) \alpha \left(1 - \frac{k-2}{2k-2} \cdot \alpha\right),
		\end{align}
		which substituted in \eqref{eq:objective_recursion_unordered} yields  
		\begin{align}
			q_{k}(n,p) & = m + \left( \frac{n-m}{n} \right) \cdot q_{k-1}(n-m, \alpha)   \geq m +  \frac{\alpha (n-m)^2}{n}    \left(1 - \frac{k-2}{2k-2} \cdot \alpha\right) \,.  
		\end{align}
		Since $\alpha = ({np-m})/({n-m})$ by \eqref{eq:p_constraints_recursion_unordered}, we obtain 
		\begin{align}
			q_{k}(n,p) & \geq 
			m + \frac{\left(np-m\right)\left(n-m\right)}{n} \left(1 - \frac{k-2}{2k-2} \cdot \left( \frac{np-m}{n-m}\right) \right) \notag  \\
			& \geq  np \left( 1 - \frac{k-1}{2k} \cdot p \right) \,. \explain{By Lemma~\ref{lem:lower_bound_inequality_unordered_deterministic_on_uniform_distribution}}
		\end{align}
		This completes the induction step and the proof. 
	\end{proof}
	
	\subsection{Lemmas for unordered search}
	
	In this section we include the lemmas used to prove the unordered search bounds.

	\begin{lemma} \label{lem:unordered_upper_bound_shared}
		Let $x \in \mathbb{R}$ and $k \in \mathbb{N}$, where $x, k > 0$. Then  
		\begin{align}
			\sum_{j=1}^{k-1} \left( \left \lceil x \cdot \frac{j}{k} \right  \rceil - \left \lceil x \cdot \frac{j+1}{k} \right  \rceil \right)  \left  \lceil x \cdot \frac{j}{k} \right  \rceil
			\le -\frac{x^2(k-1)}{2k} + \lceil x \rceil+1 \,.  \label{eq:target_ineq_lemma_with_bj_s_unordered}
		\end{align}
	\end{lemma}
	\begin{proof}
		For every $j \in [k]$, let $b_j = \lceil x  {j}/{k} \rceil - x  {j}/{k}$.
		The left hand side of \eqref{eq:target_ineq_lemma_with_bj_s_unordered} can be rewritten as 
		\begin{align}
			\sum_{j=1}^{k-1} \left( \left \lceil x \cdot \frac{j}{k} \right \rceil - \left  \lceil x \cdot \frac{j+1}{k} \right \rceil \right)  \left  \lceil x \cdot \frac{j}{k} \right \rceil
			&= \sum_{j=1}^{k-1} \left( x \cdot \frac{j}{k} + b_j - x \cdot \frac{j+1}{k} - b_{j+1}\right)  \left( x \cdot \frac{j}{k} + b_j \right)\\
			&= \sum_{j=1}^{k-1}  \left( -\frac{x^2j}{k^2} + b_j \left(b_j-b_{j+1} \right) + \frac{x}{k}  \Bigl( j(b_j - b_{j+1}) - b_j \Bigr) \right) \,. \label{eq:qk_sum_expanded}
		\end{align}
		
		The last term of the sum in \eqref{eq:qk_sum_expanded} almost entirely cancels:
		\begin{align} \label{eq:qk_sum_term_cancels}
			\sum_{j=1}^{k-1}  \frac{x}{k} \cdot \Bigl( j \left(b_j - b_{j+1} \right) - b_j \Bigr) = -b_k \cdot \frac{x(k-1)}{k}  \leq 0 \,.
		\end{align}
		Combining \eqref{eq:qk_sum_expanded} with \eqref{eq:qk_sum_term_cancels}, we get
		\begin{align} 
			\sum_{j=1}^{k-1} \left( \left \lceil x \cdot \frac{j}{k} \right \rceil - \left  \lceil x \cdot \frac{j+1}{k} \right \rceil \right)  \left  \lceil x \cdot \frac{j}{k} \right \rceil
			& \le \sum_{j=1}^{k-1} \left(  -\frac{x^2j}{k^2} + b_j \left(b_j-b_{j+1} \right) \right) \notag \\
			& =  -\frac{x^2(k-1)}{2k} + \sum_{j=1}^{k-1}   b_j \left(b_j-b_{j+1} \right)\,. \label{eq:qk_sum_expanded_3}
		\end{align}

		Next we bound the summation term  in \eqref{eq:qk_sum_expanded_3}. If  $b_{j} \ge b_{j+1} \ge b_{j+2}$ for some  $j \in [k-2]$, then
		\begin{align}
			b_{j}(b_{j} - b_{j+1}) + b_{j+1}(b_{j+1} - b_{j+2}) \leq b_{j}(b_{j} - b_{j+2}) \,. \label{eq:qk_bj_collapse_lemma}
		\end{align}
		Thus if there is a (weakly) decreasing sequence  $b_j \geq b_{j+1} \geq \ldots \geq b_{j+t}$ for some $t \geq 2$ and $j \in [k-t]$, then  applying inequality  \eqref{eq:qk_bj_collapse_lemma} iteratively gives 
		\begin{align} 
			\sum_{i = j}^{j+t-1} b_i \left( b_i - b_{i+1}\right)  \leq b_j \left(b_j - b_{j+t} \right)\,. \label{eq:ineq_decreasing_subsequence_of_length_t_unordered}
		\end{align}
		
		We will use inequality \eqref{eq:qk_bj_collapse_lemma}  to collapse some of the terms in the sum  $\sum_{j=1}^{k-1} b_j (b_j-b_{j+1})$.
		
		Towards this end, let $G = ([k], E)$ be a line graph where the vertices are $\{1, \ldots, k\}$ and the edges $E = \{(j, j+1) \mid j \in [k-1]\}$. 
		For each $j \in [k-1]$, if $b_j \geq  b_{j+1}$ then edge $(j,j+1)$ is colored with black and depicted as oriented down, and otherwise it is colored with yellow and oriented up.  
		
		We also give each vertex $j \in [k]$ a color $c_j \in \{R,B\}$, such that $c_1 = c_k = R$. Furthermore, for each $j \in [k-1]$, if $b_j < b_{j+1}$ then  both endpoints of the edge are colored red:   $c_j = c_{j+1} = R$. All other vertices are colored blue ($B$). See Figure~\ref{fig:b_j_sequence} for an illustration. 
		
		\begin{figure}[h!]
			\centering 
			\includegraphics[scale=0.8]{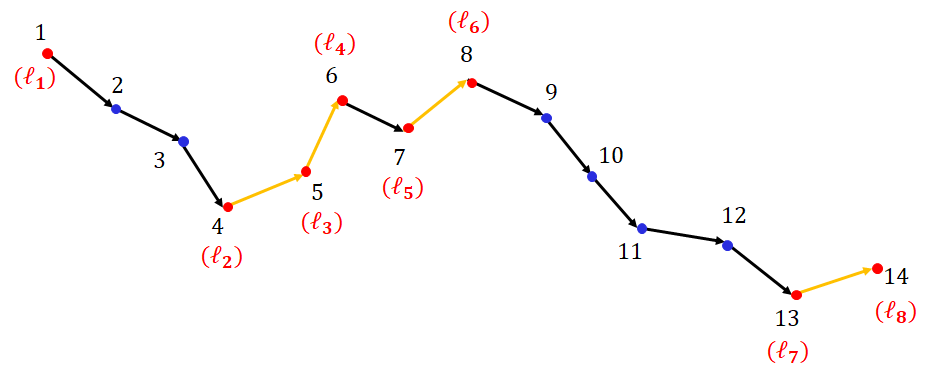}
			\caption{\small{Given $k \geq 2$ and numbers $b_1, \ldots, b_k \in [0,1)$, we construct a graph with edges  $(j, j+1)$ for each $j \in [k-1]$. For each $j \in [k]$, if $b_j \geq  b_{j+1}$, the edge from $j$ to $j+1$ is oriented downwards and is colored with black. If $b_j < b_{j+1}$, the edge from $j$ to $j+1$ is oriented upwards and is colored with yellow. The endpoints of all the yellow edges are added to the set $L$, together with special vertices $1$ and $k$. All the vertices in $L$ are colored red and the vertices in $[k] \setminus L$ are colored blue. For the graph in the picture we have  $k = 14$ and $L =
					\{1, 4, 5,6,7,8,13,14\}$. Each element $\ell_j$ of  $L$ is marked in red near the corresponding node.}}
			\label{fig:b_j_sequence}
		\end{figure}
		
		Let $\ell_1 = 1 < \ldots < \ell_m = k$ be the   red vertices in $G$ and $L = \{\ell_1, \ldots, \ell_m\}$. For all $i \in [m-1]$:
		\begin{itemize}
			\item  if the path from $\ell_i$ to $\ell_{i+1}$ has black edges, then $b_{\ell_i} \geq \ldots \geq b_{\ell_{i+1}}$ and so   inequality \eqref{eq:qk_bj_collapse_lemma} gives 
			\begin{align}
				\sum_{j = \ell_i}^{\ell_{i+1}-1} b_j (b_j - b_{j+1}) \leq b_{\ell_i} \left( b_{\ell_i} - b_{\ell_{(i+1)}} \right) \,. \label{eq:black_path_inequality}
			\end{align}
			\item else, the path from $\ell_i$ to $\ell_{i+1}$ has no black edges. Then $\ell_{i+1} = \ell_i + 1$, and so the next inequality trivially holds:
			\begin{align}
				b_{\ell_i} \left( b_{\ell_i} - b_{\ell_{(i+1)}} \right) \leq b_{\ell_i} \left( b_{\ell_i} - b_{\ell_{(i+1)}} \right) \,. \label{eq:consecutive_yellow_edge_inequality}
			\end{align}
		\end{itemize}
		Combining \eqref{eq:black_path_inequality} and \eqref{eq:consecutive_yellow_edge_inequality},  we can bound the sum of all $b_j$'s as follows:
		\begin{align}
			\sum_{j=1}^{k-1} b_j \left(b_j - b_{j+1}\right) \leq \sum_{i=1}^{m-1} b_{\ell_i} \left( b_{\ell_i} - b_{\ell_{(i+1)}} \right)\,.  \label{eq:qk_bj_collapse_pre}
		\end{align}


		%
		%
		%

		Since $b_j \in [0, 1)$ for all $j$, we have   \mbox{$b_{\ell_i}\left(b_{\ell_{i}} - b_{\ell_{i+1}} \right) \le (1-b_{\ell_{i+1}})$} and \mbox{$b_{\ell_{i+1}}\left(b_{\ell_{i+1}} - b_{\ell_{i+2}} \right) \le b_{\ell_{i+1}}$}.
		Thus adjacent terms in \eqref{eq:qk_bj_collapse_pre} sum to at most $1$. Then 
		\begin{itemize}
			\item If $m-1$ is even, then  
			\begin{align}
				\label{eq:even_m_minus_one_even_unordered_det_ub}
				\sum_{i = 1}^{m-1} b_{\ell_i} \left(b_{\ell_i} - b_{\ell_{(i+1)}} \right) \leq \frac{m-1}{2} < \frac{m}{2} \,. 
			\end{align}
			\item If $m-1$ is odd, then 
			\begin{align}
				\label{eq:odd_m_minus_one_even_unordered_det_ub}
				\sum_{i = 1}^{m-1} b_{\ell_i} \left(b_{\ell_i} - b_{\ell_{(i+1)}} \right) & \leq \left  \lfloor \frac{m-1}{2}\right \rfloor + b_{\ell_i} \left(b_{\ell_i} - b_{\ell_{(i+1)}} \right)   \leq  \left  \lfloor \frac{m-1}{2}\right \rfloor +  1  = \frac{m}{2}\,.
			\end{align}
		\end{itemize}
		Combining \eqref{eq:even_m_minus_one_even_unordered_det_ub} and \eqref{eq:odd_m_minus_one_even_unordered_det_ub}, we obtain 
		\begin{align} \label{eq:sum_of_l_i_s_at_most_m_over_2_unordered_det_ub}
			\sum_{i = 1}^{m-1} b_{\ell_i} \left(b_{\ell_i} - b_{\ell_{(i+1)}} \right) \leq \frac{m}{2} \,.
		\end{align}
		Combining \eqref{eq:qk_bj_collapse_pre} with \eqref{eq:sum_of_l_i_s_at_most_m_over_2_unordered_det_ub} while summing over all $j \in [k-1]$ gives 
		\begin{align} \label{eq:qk_bj_sum_le_m_over_2}
			\sum_{j=1}^{k-1} b_j \left(b_j-b_{j+1} \right) &\le \frac{m}{2} \,. 
		\end{align}
		
		Let $D = \{j \in [k-1] \mid  b_j < b_{j+1}\}$ and  $\Delta = |D|$.
		Since $b_j = \lceil x j/k \rceil - xj/k$, 
		we have $b_j \le b_{j+1}$ if and only if $\lceil x j/k \rceil - xj/k \leq   \lceil x (j+1)/k \rceil - x(j+1)/k$ $(\dag)$. Since $x/k > 0$, inequality  $(\dag)$ implies 
		\begin{align}
			\lceil x  {j}/{k} \rceil + 1 &\le \lceil x  {(j+1)}/{k} \rceil   \qquad \forall j \in D\,.  \label{eq:bj_ascending}
		\end{align}
		Consider the elements of $D$ in sorted order: $d_1 < \ldots < d_{\Delta}$.
		We will show by induction that 
		\begin{align} 
			\lceil x \cdot {d_i}/{k} \rceil\ge i \; \mbox{ for all } i \in [\Delta]\,. \label{eq:induct_hypo_x_times_d_i_over_k_unordered}
		\end{align}
		
		The base case is $i=1$. Indeed $\lceil x \cdot {d_1}/{k} \rceil \ge 1$ since  $ x \cdot {d_1}/{k} > 0$.
		We assume inequality \eqref{eq:induct_hypo_x_times_d_i_over_k_unordered} holds for $i$ and  show this implies the inequality for $i+1$.
		We have
		\begin{align}
			\left \lceil x \cdot \frac{d_{i+1}}{k} \right \rceil
			&\ge \left \lceil x \cdot \frac{d_i + 1}{k}\right \rceil \explain{Since $d_{i+1} > d_i$ and $d_i, d_{i+1} \in \mathbb{N}$.}\\
			&\ge \left \lceil x \cdot \frac{d_i}{k} \right \rceil + 1 \explain{By \eqref{eq:bj_ascending}.}\\
			&\ge i+1\,. \explain{By the inductive hypothesis.}
		\end{align}
		This completes the induction, so \eqref{eq:induct_hypo_x_times_d_i_over_k_unordered} holds.
		Now  we can bound the size of $D$. Since $d_{\Delta} \in [k-1]$, we have $ \left \lceil  x \cdot {k}/{k} \right \rceil  \geq \left \lceil  x \cdot {d_{\Delta}}/{k} \right \rceil$. By \eqref{eq:induct_hypo_x_times_d_i_over_k_unordered}, we have $\left \lceil  x \cdot {d_{\Delta}}/{k} \right \rceil \geq \Delta$. Thus 
		\begin{align}
			\lceil x \rceil =  \left \lceil  x \cdot \frac{k}{k} \right \rceil  & \geq \left \lceil  x \cdot \frac{d_{\Delta}}{k} \right \rceil 
			\geq \Delta \,.   
			\label{eq:D_bound}
		\end{align}
		
		Observe that $\Delta$  is equal to the number of yellow edges in the graph $G$, since $j \in D$  if and only if the edge $(j, j+1)$ is yellow. Thus the number of endpoints of yellow edges in $G$ is at most $2\Delta$. Since $|L| = m$ and $L$ consists precisely of all the endpoints of yellow edges together with vertices $1$ and $k$, we have 
		$m = |L| \leq | \{1, k\}| + 2\Delta = 2 + 2\Delta \,.$  Since $\Delta \leq \lceil x \rceil$ by  \eqref{eq:D_bound}, we obtain 
		\begin{align}
			m \leq 2 + 2 \Delta \leq 2 + 2 \lceil x \rceil \,.  \label{eq:m_bound}
		\end{align}
		%
		Combining \eqref{eq:m_bound} with \eqref{eq:qk_bj_sum_le_m_over_2}, we get
		\begin{align}
			\sum_{j=1}^{k-1} b_j \left(b_j-b_{j+1} \right) &\le \lceil x \rceil +1 \,. \label{eq:qk_bj_collapse_post}
		\end{align}
		Combining  \eqref{eq:qk_bj_collapse_post} with \eqref{eq:qk_sum_expanded_3} gives 
		\begin{align}
			\sum_{j=1}^{k-1} \left( \left \lceil x \cdot \frac{j}{k} \right \rceil - \left \lceil x \cdot \frac{j+1}{k}\right  \rceil \right)  \left \lceil x \cdot \frac{j}{k} \right \rceil
			& \leq -\frac{x^2(k-1)}{2k} + \sum_{j=1}^{k-1}   b_j \left(b_j-b_{j+1} \right) \leq -\frac{x^2(k-1)}{2k} + \lceil x \rceil + 1\,. \notag 
		\end{align}
		This completes the proof.
	\end{proof}
	
	\begin{lemma} \label{lem:lower_bound_inequality_unordered_deterministic_on_uniform_distribution}
		Let $k, n \in \mathbb{N}$, $x \in [0,n]$, and $p \in [0,1]$.  Suppose $k \geq 2$. Then 
		\begin{align} \label{eq:unordered_deterministic_uniform_lb_main_inequality}
			x + \frac{(np-x)\left(n-x \right)}{n} \left( 1 - \frac{k-2}{2k-2} \cdot \frac{np-x}{n-x} \right) \geq np \left( 1 - \frac{k-1}{2k} \cdot p \right)\,.
		\end{align}
	\end{lemma}
	\begin{proof}
		Let $f: \mathbb{R} \to \mathbb{R}$ be 
		\begin{align}
			f(x) = x + \frac{(np-x)\left(n-x \right)}{n}  \left( 1 - \frac{k-2}{2k-2} \cdot \frac{np-x}{n-x} \right) - np \left( 1 - \frac{k-1}{2k} \cdot p \right)\,.
		\end{align}
		Then the required inequality \eqref{eq:unordered_deterministic_uniform_lb_main_inequality} is equivalent to showing $f(x) \geq 0$ for all $x \in [0,n]\,.$
		
		Expanding the terms in the expression for $f(x)$, we get \begin{align}
			& f(x) \geq 0 \iff \notag \\
			& x + np - x - \frac{x(np-x)}{n} - \left( \frac{k-2}{2k-2} \right) \frac{(np-x)^2}{n}
			-np + np^2 \left( \frac{k-1}{2k} \right) \geq 0, \label{eq:desired_inequality_equivalent_to_f_x_non_negative_unordered_lb}
		\end{align}
		which after simplification is equivalent to 
		\begin{align} 
			& x^2 k^2 -x \cdot 2knp + n^2 p^2 \geq 0\,. \label{eq:quadratic_simplified_lb_unordered_deterministic_on_uniform}
		\end{align}
		The quadratic equation in \eqref{eq:quadratic_simplified_lb_unordered_deterministic_on_uniform} has a unique global minimum at $x^* = np/k$, with $f(x^*) = 0$. Thus \eqref{eq:quadratic_simplified_lb_unordered_deterministic_on_uniform} holds, so \eqref{eq:desired_inequality_equivalent_to_f_x_non_negative_unordered_lb} holds and so $f(x) \geq f(x^*) = 0 \, \, \forall x \in [0,n]$ as required.
	\end{proof}
	
	\begin{lemma} \label{lem:crucial_unordered_lb_lemma}
		Let $k, n, m \in \mathbb{N}$, where $k \geq 2$, $n \geq 1$, and $m \in \{0, \ldots, n\}$. Let $\gamma \in \left[ {m}/{n}, 1\right]$. Then 
		\begin{align}
			m + \frac{(n-m)(n \cdot \gamma - m)}{n}  \cdot \frac{k}{2k-2} - n \cdot \gamma \left( \frac{k+1}{2k} \right) \geq 0 \,. \label{eq:target_inequality_crucial_lemma_unordered_randomized_lb}
		\end{align}
	\end{lemma}
	\begin{proof}
		Inequality \eqref{eq:target_inequality_crucial_lemma_unordered_randomized_lb} is equivalent to  
		\begin{align}
			& m + n  \gamma \cdot \frac{k}{2k-2} - m \cdot \frac{k}{2k-2} - m \gamma \cdot \frac{k}{2k-2} + \frac{m^2}{n} \cdot \frac{k}{2k-2} - n \cdot \gamma \cdot  \frac{k+1}{2k} \geq 0 \,.  \label{eq:rephrased_target_inequality_crucial_lemma_unordered_randomized_lb}
		\end{align}
		Multiplying both sides of \eqref{eq:rephrased_target_inequality_crucial_lemma_unordered_randomized_lb} by $2nk(k-1)$ and rearranging, we get that \eqref{eq:target_inequality_crucial_lemma_unordered_randomized_lb} is equivalent to 
		\begin{align}
			\gamma  n (n - mk^2) + mn (k^2-2k) +  m^2 k^2  \geq 0\,.   \label{eq:simpler_target_unordered_randomized_lb_rephrased}
		\end{align}
		If $n \geq m  k^2$ then \eqref{eq:simpler_target_unordered_randomized_lb_rephrased} clearly holds. Else, assume $n < m k^2$.  Since $m/n \leq \gamma \leq 1$, we have
		\begin{align}
			\gamma n (n - mk^2) \geq n (n - mk^2) \,. \label{eq:unordered_simple_but_crucial_inequality_gamma_product_randomized_lb}
		\end{align}
		Using \eqref{eq:unordered_simple_but_crucial_inequality_gamma_product_randomized_lb}, we can bound the left hand side of \eqref{eq:simpler_target_unordered_randomized_lb_rephrased} as follows:
		\begin{align}
			\gamma n (n - mk^2 ) + mn (k^2-2k) +  m^2 k^2 & \geq n( n - mk^2 ) +  mn (k^2-2k) +  m^2 k^2 = (n - mk)^2 \geq 0\,. \notag 
		\end{align}
		Thus \eqref{eq:simpler_target_unordered_randomized_lb_rephrased} holds when $n < mk^2$ as well, which implies  \eqref{eq:target_inequality_crucial_lemma_unordered_randomized_lb} holds in all cases, as required.
	\end{proof}

	\section{Appendix: Cake cutting and sorting in rounds} \label{app:cake_cutting_and_sorting}
	
	In this section we study cake cutting in rounds and discuss the connection between sorting with rank queries and proportional cake cutting. We first introduce the cake cutting model.
	
	\paragraph{Cake cutting model.} The resource (cake) is represented as the interval $[0,1]$. There is a set of players $N=\{1,\ldots,n\}$, such that each player $i\in N$ is endowed with a private \emph{valuation function} $V_i$ that assigns a value to every subinterval of $[0,1]$. These values are induced by a non-negative integrable \emph{value density function} $v_i$, so that for an interval $I$, $V_i(I)=\int_{x\in I} v_i(x)\ dx$. The valuations are additive, so $V_i\left(\bigcup_{j=1}^m I_j\right) = \sum_{j=1}^m V_i(I_j)$ for any disjoint intervals $I_1, \ldots, I_m \subseteq [0,1]$. The value densities are non-atomic, and sets of measure zero are worth zero to a player.
	W.l.o.g., the valuations are normalized to $V_i([0,1]) = 1$, for all $i = 1 \ldots n$. 
	
	A \emph{piece of cake} is a finite union of disjoint intervals. A piece is \emph{connected} (or contiguous) if it consists of a single interval.
	An \emph{allocation} $A = (A_1, \ldots, A_n)$ is a partition of the cake among the players, such that each player $i$ receives the piece $A_i$, the pieces are disjoint, and $\bigcup_{i \in N} A_i = [0,1]$.
	An allocation $A$ is said to be \emph{proportional} if $V_i(A_i) \geq 1/n$ for all $i \in N$.
	
	\paragraph{Query complexity of cake cutting.} All the discrete cake cutting protocols operate in a query model known as the Robertson-Webb model (see, e.g., the book of \cite{RW98}), which was explicitly stated by~\cite{woeginger2007complexity}. In this model, the protocol communicates with the players using the following types of queries:
	\begin{itemize}
		\item{}$\emph{\textbf{Cut}}_i(\alpha)$: Player $i$ cuts the cake at a point $y$ where $V_{i}([0,y]) = \alpha$, where $\alpha \in [0,1]$ is chosen arbitrarily by the center \footnote{Ties are resolved deterministically, using for example the leftmost point with this property.}. The point $y$ becomes a {\em cut point}.
		\item{}$\emph{\textbf{Eval}}_i(y)$:  Player $i$ returns $V_{i}([0,y])$, where $y$ is a previously made cut point.
	\end{itemize}
	
	An RW protocol asks the players a sequence of cut and evaluate queries, at the end of which it outputs an allocation demarcated by cut points from its execution (i.e. cuts discovered through queries).
	Note that the value of a piece $[x,y]$ can be determined with two Eval queries, $\textrm{Eval}_i(x)$ and $\textrm{Eval}_i(y)$.
	
	When a protocol runs in $k$ rounds, then multiple RW queries (to the same or different agents) can be issued at once in each round. Note the choice of queries submitted in round $j$ cannot depend on the results of queries from the same or later rounds (i.e. $j, j+1, \ldots, k$).
	
	\subsection{Upper bounds}
	
	We will devise a protocol that finds a proportional allocation of the cake in $k$ rounds of interaction, which will also give a protocol for sorting with rank queries. For the special case of one round, a proportional protocol was studied in~\cite{BBKP14,MO12}.
	Our high level approach is to iteratively divide the cake into subcakes and assign agents to each subcake.
	

	\begin{prop} \label{prop_ub}
		There is an algorithm that runs in $k$ rounds and computes a proportional allocation with a total of $O(k n^{1+1/k})$ RW queries.
	\end{prop}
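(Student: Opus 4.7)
The proof splits naturally into a query-count argument and a correctness (proportionality) argument. For the query count, I would observe that at the start of round $r$, the $n$ agents are distributed among $z^{r-1}$ disjoint subcakes (where $z = \lceil n^{1/k} \rceil$), and that in each round every agent issues exactly $z-1$ cut queries, namely the queries $\mathrm{Cut}_i(a_i + (m_1 + \cdots + m_j)/n)$ for $j = 1,\ldots,z-1$ within her current subcake of size $m$. Hence each round uses exactly $n(z-1) \le n^{1+1/k}$ cut queries, and summing over $k$ rounds gives a total of $O(kn^{1+1/k})$ queries.

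For proportionality, the plan is to prove the following invariant by induction on the round number $r$: at the end of round $r$, every active subcake $S$ with $m$ agents in it satisfies $V_i(S) \ge m/n$ for each agent $i \in S$, and moreover $\mathrm{Cut}_i(a_i)$ and $\mathrm{Cut}_i(b_i)$ both lie inside $S$ with $b_i - a_i = m/n$. The base case $r=0$ is immediate from normalization. For the inductive step, fix a subcake $S = [L,R]$ with $m$ agents that is split into subsubcakes of sizes $m_1, \ldots, m_z$. By the inductive hypothesis all the $j$-th marks $y_{i,j}$ lie in $S$. The cut point $c_j$ is chosen as the $m_j$-th smallest $j$-th mark among \emph{still unassigned} agents, and the $m_j$ agents with the smallest $j$-th marks are assigned to the $j$-th subsubcake $[c_{j-1}, c_j]$. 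For each such agent $i$, the assignment rule gives $y_{i,j} \le c_j$, hence $V_i([0,c_j]) \ge a_i + (m_1+\cdots+m_j)/n$; meanwhile, since $i$ was unassigned after subsubcake $j-1$ was formed, $y_{i,j-1} \ge c_{j-1}$, hence $V_i([0,c_{j-1}]) \le a_i + (m_1+\cdots+m_{j-1})/n$. Subtracting yields $V_i([c_{j-1},c_j]) \ge m_j/n$. The boundary cases $j=1$ (where we use $V_i([0,L]) \le a_i$) and $j=z$ (where we use $V_i([0,R]) \ge b_i$) work identically, using the inductive hypothesis on $\mathrm{Cut}_i(a_i), \mathrm{Cut}_i(b_i) \in S$. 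Updating $a_i$ and $b_i$ to $a_i + (m_1+\cdots+m_{j-1})/n$ and $a_i + (m_1+\cdots+m_j)/n$ maintains the invariant for the next round. After $k$ rounds every subcake has exactly one agent, and the invariant then gives $V_i(S_i) \ge 1/n$ for all $i$, which is exactly proportionality.

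The main obstacle is the invariant step, and specifically the need to argue simultaneously about both endpoints $c_{j-1}$ and $c_j$ of an agent's assigned subsubcake: the upper endpoint bound comes from the agent being selected into the $j$-th subsubcake, while the lower endpoint bound comes from the agent having \emph{not} been selected into any earlier one. Handling the first and last subsubcakes as separate boundary cases, together with tie-breaking in the ``$m_j$-th smallest'' selection (which can be made deterministic without affecting the inequalities, since ties in a mark only strengthen the relevant inequality for the tied agents), makes the bookkeeping slightly delicate but introduces no new ideas beyond those above.
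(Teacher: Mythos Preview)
Your proposal is correct. For proportionality, you prove by induction exactly the invariant the paper states (that $\mathrm{Cut}_i(a_i)$ and $\mathrm{Cut}_i(b_i)$ lie in agent $i$'s current subcake with $b_i - a_i = m/n$); the paper merely asserts this invariant in the algorithm description and then invokes it at the end, so your argument is the same idea carried out in full detail, including the left/right endpoint bookkeeping that the paper leaves implicit.

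For the query count, the approaches differ. You read Algorithm~1 as using the \emph{same} subdivision factor $z = \lceil n^{1/k}\rceil$ in every round and simply observe that each of the $n$ agents makes $z-1$ cuts per round, giving a direct total of $kn(z-1) \le kn^{1+1/k}$. The paper instead treats the recursion as resetting the subdivision factor to $\lceil m^{1/(k-j)}\rceil$ when a subcake has $m$ agents and $k-j$ rounds remain (this is how its first claim, that subcakes have at most $n^{1-j/k}$ agents after $j$ rounds, is proved), and then bounds the total query count by induction on $k$ via the inequality $\sum_j (k-1)m_j^{1+1/(k-1)} \le (k-1)n^{1+1/k}$. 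Your per-round count is shorter and sidesteps that inequality (which in the paper tacitly relies on the $m_j$ being nearly equal), at the cost of matching the paper's recursive reading of the algorithm only approximately when $n$ is not a perfect $k$-th power. Since both readings yield the same $O(kn^{1+1/k})$ bound and the same proportionality invariant, the discrepancy is harmless.
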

	
	We first describe the algorithm, and then prove Proposition~\ref{prop_ub}.
	The idea behind the algorithm is to partition the cake into $n^{1/k}$ subcakes and assign $n^{1 - 1/k}$ agents to each section, such that every agent believes that if they ultimately get a proportional share of their subcake, then they will have a proportional slice overall.
	Then all that remains is to recurse on each subcake in parallel in the successive rounds.
	
	One complication is that our only method of asking agents to cut a subcake, the $Cut$ query, requires that we know the values of the boundary of the subcake to that agent.
	However, the boundaries of the subcakes are known only with respect to one agent (possibly different agents for each boundary).
	We circumvent this difficulty by instead asking each agent to divide a further subset of their subcake whose boundary values for their valuation are known.
	In Algorithm 1, this further subset for each agent $i$ is the interval $[\textrm{Cut}_i(a_i), \textrm{Cut}_i(b_i))$.

	\paragraph{Algorithm 1.} \label{alg:algorithm_1}
	Input:
	\begin{itemize}
		\item Cake interval $[x,y]$ to be divided.
		\item Agent set $A$ among whom the cake is to be allocated.
		\item For each agent $i \in A$, values $a_i$ and $b_i$ in $[0,1]$.
		\item Number of remaining rounds $k$.
	\end{itemize}
	Procedure:
	\begin{enumerate}
		\item If $|A| = 1$, allocate the whole interval to the sole agent. Otherwise, continue.
		\item Define $z = \lceil |A|^{1/k} \rceil$ and define $m_j = \lceil |A| \cdot \frac{j}{z} \rceil - \lceil |A| \cdot \frac{j-1}{z} \rceil$ for each $j \in [z]$.
		\item Query $\textrm{Cut}_i\left( a_i + (b_i-a_i) \cdot \frac{1}{|A|} \sum_{\ell = 1}^j m_\ell \right)$ for all agents $i \in A$ and all $j \in [z-1]$.
		\item For $j = 1, 2, \ldots, z-1$:
		\begin{enumerate}
			\item Select $S_j$ to be the $m_j$ agents $i$ among $A \setminus \left( \bigcup_{\ell=1}^{j-1} S_\ell \right)$ with the smallest values for $\textrm{Cut}_i\left( a_i + (b_i-a_i) \cdot \frac{1}{|A|} \sum_{\ell = 1}^j m_\ell \right)$.
			\item Set $c_j$ to be the $m_j$th smallest value for $\textrm{Cut}_i\left( a_i + (b_i-a_i) \cdot \frac{1}{|A|} \sum_{\ell = 1}^j m_\ell \right)$ among all $i \in A \setminus \left( \bigcup_{\ell=1}^{j-1} S_\ell \right)$.
		\end{enumerate}
		\item Set $S_z = A \setminus \left( \bigcup_{\ell=1}^{z-1} S_\ell \right)$, $c_0 = 0$, and $c_z = 1$.
		\item In parallel in the following rounds, recurse on the the following instance for each $j \in [z]$:
		\begin{itemize}
			\item The cake interval to be divided is $[c_{j-1}, c_j]$.
			\item The set of agents is $S_j$.
			\item For each agent $i \in S_j$, set $new(a_i) =  a_i + (b_i-a_i) \cdot \frac{1}{|A|} \sum_{\ell = 1}^{j-1} m_\ell$.
			\item For each agent $i \in S_j$, set $new(b_i) =  a_i + (b_i-a_i) \cdot \frac{1}{|A|} \sum_{\ell = 1}^j m_\ell$.
			\item The number of remaining rounds is $k-1$.
		\end{itemize}
	\end{enumerate}
	
	To initially run Algorithm $1$, use as input the following parameters.
	The cake interval to be divided is $[0,1]$. The set of agents is $[n]$.
	For each agent $i \in N$, set $a_i = 0$ and $b_i = 1$.
	The number of (remaining) rounds is $k$.

	\paragraph{Example of running Algorithm 1.}
	
	\begin{example}
		Let $n = 4$ and $k =2$.
		Let the agents' value densities be as shown in Figure \ref{fig:cake_example}.
		After the first round we will have 
		\begin{itemize}
			\item $\textrm{Cut}(a_1) = 0.65, \textrm{Cut}(b_1) = 1, \textrm{Cut}(a_2) = 0.5, \textrm{Cut}(b_2) = 1, \textrm{Cut}(a_3) = 0, \textrm{Cut}(b_3) = 0.45$,\\$ \textrm{Cut}(a_4) = 0, \textrm{Cut}(b_4) = 0.4$.
		\end{itemize}
		The dividing line between the two subcakes, i.e. $c_1$, will be $\textrm{Cut}(a_3) = 0.5$.
		
		\begin{figure}[h!]
			\centering
			\includegraphics[scale=0.8]{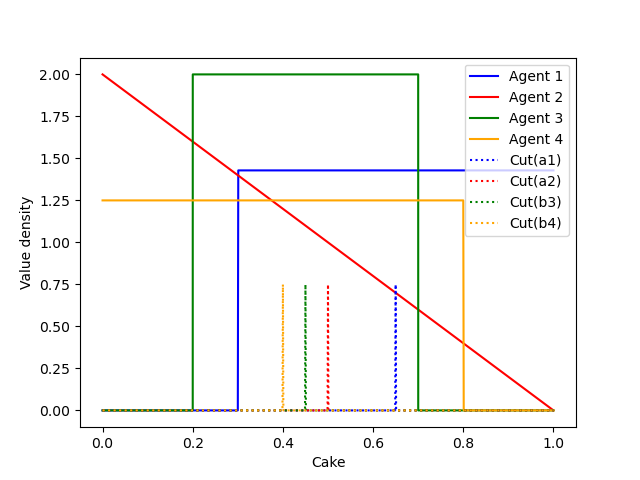}
			\caption{A potential value distribution for four agents from Example 1.
				When running Algorithm 1 on the shown value density functions with $k=2$, after the first round we will have $a_1 = a_2 = 0.5$, $b_1 = b_2 = 1$, $a_3 = a_4 = 0$, and $b_3 = b_3 = 0.5$.
				This leads to the $\textrm{Cut}$ values shown.
				In the second round, Algorithm 1 will recurse on the subcake $[0,\textrm{Cut}(a_3)) = [0, 0.5]$ with agents $3$ and $4$ and on the subcake $[0.5,1]$ with agents $1$ and $2$.
			}
			\label{fig:cake_example}
		\end{figure}
		
	\end{example}
	
	\begin{example}
		Let $n=1000$ and $k=3$. Algorithm 1 works as follows in each round:
		\begin{enumerate} 
			\item Round 1: everyone is asked to mark their $\frac{1}{10}, \frac{2}{10}, \ldots, \frac{9}{10}$ points.
			These are used to separate the agents into $10$ subcakes, each containing $100$ agents.
			\item Round 2: Everyone is asked to mark their $\frac{1}{10}, \frac{2}{10}, \ldots, \frac{9}{10}$ points within their respective value interval $[a_i, b_i]$.
			For example, for the second subcake each agent marks their $\frac{11}{100}, \frac{12}{100}, \ldots, \frac{19}{100}$ points.
			Again these are used to separate each subcake further into $10$ subcakes, each containing $10$ agents.
			\item Round 3: Everyone is asked to mark their $1/10, \ldots, 9/10$ points within their respective value interval.
			This time when assigning agents to subcakes, the algorithm assigns only $1$ to each, so we're done. 
		\end{enumerate}
	\end{example}
	

	Next we prove that the algorithm correctly computes a proportional allocation of the cake in $k$ rounds. 
	
	\begin{proof}[{Proof of Proposition} \ref{prop_ub}]
		
		Consider Algorithm 1.
		We claim that after $j$ rounds each subcake contains at most $n^{1-j/k}$ agents.
		In the base case, after $0$ rounds, the sole subcake contains all $n$ agents.
		In the inductive case, we assume that after $j$ rounds each subcake contains at most $n^{1-j/k}$ agents.
		Consider an arbitrary subcake containing $m$ agents and an arbitrary $\ell$.
		Then
		\begin{align}
			m_\ell = \lceil m \cdot \frac{j}{\lceil n^{1/k} \rceil} \rceil - \lceil m \cdot \frac{j-1}{\lceil n^{1/k} \rceil} \rceil
			\le m^{1 - 1/(k-j)} \le n^{1 - \frac{j+1}{k}}
		\end{align}
		This concludes the induction.
		Then after $k$ rounds each subcake contains at most $n^{1-k/k} = 1$ agents.
		Thus the algorithm generates an allocation in $k$ rounds.
		
		Next we claim inductively that at the start of every call to Algorithm 1, for all $i \in N$ we have $x \le \textrm{Cut}_i(a_i)$ and $\textrm{Cut}_i(b_i) \le y$.
		In the initial call to Algorithm 1 this is true since $0 \le \textrm{Cut}_i(0)$ and $\textrm{Cut}_i(1) \le 1$.
		In the recursive call in step 6, consider an arbitrary $j \in [z]$ and an arbitrary agent $i \in S_j$.
		If $j=1$, then $c_{j-1} = x \le a_i = a_i + a_i + (b_i-a_i) \cdot \frac{1}{|A|} \sum_{\ell = 1}^{j-1} m_\ell = new(a_i)$ by inductive assumption.
		If instead $j>1$, then because $i \notin S_{j-1}$, we know by definition of $c_j$ that
		\begin{align}
			c_{j-1} \le \textrm{Cut}_i\left( a_i + (b_i-a_i) \cdot \frac{1}{|A|} \sum_{\ell = 1}^j m_\ell \right) = \textrm{Cut}_i(new(a_i))\,.
		\end{align}
		Either way, in the recursive call we have $x = c_{j-1} \le \textrm{Cut}_i(new(a_i))$.
		If $j=z$ then $c_j = y \ge b_i = a_i + (b_i-a_i) \cdot \frac{1}{|A|} \sum_{\ell = 1}^j m_\ell = new(b_i)$ by inductive assumption.
		And if instead $j < z$, then because $i \in S_j$, we know by definition of $c_j$ that
		\begin{align}
			c_j \ge \textrm{Cut}_i(a_i + (b_i-a_i) \cdot \frac{1}{|A|} \sum_{\ell = 1}^j m_\ell) = \textrm{Cut}_i(new(b_i))\,.
		\end{align}
		Therefore $x \le \textrm{Cut}_i(a_i)$ and $\textrm{Cut}_i(b_i) \le y$ at the start of every call to Algorithm 1.
		
		To argue that every agent receives value at least $1/n$, we proceed by induction on $k$.
		In particular, we claim that at the start of every call to Algorithm 1, every agent $i$ has $b_i - a_i \ge |A|/n$.
		In the initial call to Algorithm 1 this is true since $1-0 = n/n$.
		In the recursive call in step 6, consider an arbitrary $j \in [z]$ and an arbitrary agent $i$.
		Because $x \le \textrm{Cut}_i(a_i)$ and $\textrm{Cut}_i(b_i) \le y$ at the start of each call to Algorithm 1, we have that agent $i$ values $[x,y]$ as at least $b_i-a_i$.
		By definition of $new(a_i)$ and $new(b_i)$ in step 6, we have by inductive assumption
		\begin{align}
			new(b_i) - new(a_i) = (b_i - a_i) \cdot \frac{m_j}{|A|} \ge \frac{m_j}{n}
		\end{align}
		This completes the induction.
		When Algorithm 1 returns, it gives each agent $i$ the interval $[x,y]$. Since $x \le \textrm{Cut}_i(a_i)$ and $\textrm{Cut}_i(b_i) \le y$, this has value at least $b_i - a_i \ge \frac{1}{n}$ to agent $i$.
		
		To argue the bound on the number of queries, we proceed by induction on $k$.
		For $k=1$, the bound is $n^2$, which is satisfied since we issue $n-1$ queries for each of $n$ agents.
		In the inductive case, in the first round we issue $\lceil n^{1/k} \rceil - 1 \le n^{1/k}$ queries for every agent, for a total of at most $n^{1+1/k}$.
		By the inductive assumption, the remaining number of queries is
		\begin{align}
			\sum_{j=1}^{\lceil n^{1/k} \rceil} (k-1) m_j^{1+1/(k-1)}  \le (k-1) \Bigl(  \sum_{j=1}^{\lceil n^{1/k} \rceil} m_j  \Bigr)^{1+1/k} = (k-1)n^{1+1/k}
		\end{align}
		Combining, we get at most $kn^{1+1/k}$ queries in total.
	\end{proof}

	A key step in connecting cake cutting with sorting will be the following reduction, which reduces sorting a vector of $n$ elements with rank queries to proportional (contiguous) cake cutting with $n$ agents.  Rank queries have the form \emph{``How is $rank(x_j)$ compared to $k$?''}, where the answer can be ``$<$'', ``$=$'', or ``$>$''. 
	
	\medskip 
	
	\begin{prop} \label{sorting_to_prop_reduction} There exists a polynomial time reduction from sorting $n$ elements with rank queries to proportional cake cutting with $n$ agents. The reduction holds for any number of rounds.
	\end{prop}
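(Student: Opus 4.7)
The plan is to show that any proportional cake cutting protocol for $n$ agents running in $k$ rounds can be converted into a sorting algorithm in the rank query model that also runs in $k$ rounds, with only a polynomial blow-up in the number of queries. Given elements $x_1, \ldots, x_n$ with unknown ranks $r_i := rank(x_i)$, the sorting algorithm simulates a cake cutting instance in which agent $i$'s value density is the uniform density of height $n$ supported on $I_{r_i} := [(r_i-1)/n, r_i/n]$. These densities are non-atomic and normalized to $V_i([0,1]) = 1$, so they are valid RW valuations, and the instance admits a proportional contiguous allocation (namely the one that assigns $I_{r_i}$ to agent $i$).

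First I would verify that \emph{any} proportional contiguous allocation on this profile reveals the sorted order. Let $(A_1, \ldots, A_n)$ be such an allocation with cut points $0 = c_0 < c_1 < \cdots < c_n = 1$. Proportionality $V_i(A_i) = n \cdot |A_i \cap I_{r_i}| \ge 1/n$ forces $A_i$ to overlap $I_{r_i}$ in strictly positive measure. For two adjacent pieces $A_i = [c_{j-1}, c_j]$ and $A_{i'} = [c_j, c_{j+1}]$ with $A_i$ on the left, the overlap constraints yield $c_j > (r_i - 1)/n$ and $c_j < r_{i'}/n$, so $r_i - 1 < r_{i'}$, and since the ranks are distinct this forces $r_i < r_{i'}$. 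Iterating along the cuts, the left-to-right order of the pieces matches the increasing rank order, and reading it off returns the sorting permutation.

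Next I would simulate each RW query using rank queries while preserving rounds. Every query about agent $i$ is determined by $r_i$ alone, since $v_i$ is determined by $r_i$; in particular, $\text{Eval}_i(k/n)$ at an integer multiple of $1/n$ is literally the rank query ``is $rank(x_i) \le k$?''. For any other Cut or Eval query on agent $i$ appearing in round $j$ of the cake cutting protocol whose answer depends on an undetermined $r_i$, the sorting algorithm issues, within the same round $j$, all $n - 1$ rank queries ``is $rank(x_i) \le k$?'' for $k = 1, \ldots, n-1$ in parallel, which pins down $r_i$ exactly. This schedule is admissible because the queries the cake cutting protocol issues in round $j$ depend only on the transcript of rounds $1, \ldots, j-1$, so the supplementary rank queries for round $j$ can be planned without any round-$j$ adaptivity.

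The hardest step will be handling Cut queries, whose answers lie at rank-dependent positions $(r_i - 1)/n + \alpha/n$ that the sorting algorithm cannot write down without first learning $r_i$; the bulk rank-query scheme above resolves this at the cost of an $O(n)$ multiplicative factor in queries while keeping the total round count equal to $k$. The resulting reduction therefore runs in polynomial time and applies for any number of rounds $k \ge 1$.
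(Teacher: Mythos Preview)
Your reduction is correct for the proposition as literally stated, but it takes a genuinely different route from the paper. The paper reuses the Woeginger--Sgall adversarial instance: each agent's density is a collection of $n$ tiny spikes whose positions inside the sets $X_i$ are fixed by an adversary depending on the hidden permutation, and the key property is that a primitive cut query $Cut_p(i/n)$ has exactly three possible answers, in bijection with the three outcomes of the rank query ``How is $rank(x_p)$ compared to $i$?''. Thus every RW query is simulated by \emph{at most one} rank query, and the reduction is query-preserving.

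Your construction is much simpler (uniform block densities on $I_{r_i}$), and your recovery argument from a proportional contiguous allocation is clean and correct. The cost is that to answer a single $Cut_i(\alpha)$ you spend up to $n-1$ rank queries to pin down $r_i$ completely, so your reduction inflates the query count by a factor of $\Theta(n)$. For the bare proposition (``polynomial time reduction, preserving rounds'') this is fine. But the paper uses this proposition to derive the equivalence of Theorem~\ref{thm:intro:sorting_cake_equiv} and the $\Omega(kn^{1+1/k})$ lower bound for cake cutting; with your $O(n)$ blow-up, the sorting lower bound would only transfer to an $\Omega(kn^{1/k})$ bound for cake cutting, losing a full factor of $n$. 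So your argument proves the proposition but does not support its intended applications, whereas the paper's more delicate instance does.
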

	The reduction from sorting to cake cutting was essentially done in the work of Woeginger and Sgall \cite{woeginger2007complexity}, but appears implicitly. We formalize the connection to rank queries and note the reduction is round-preserving.
	The proof of Proposition~\ref{sorting_to_prop_reduction} is in section \ref{app:sorting_reduction}.
	
	
	
	\begin{prop}
		There is a deterministic sorting algorithm in the rank query model that runs in $k$ rounds and asks a total of $O(k n^{1+1/k})$ queries.
	\end{prop}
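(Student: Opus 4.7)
The plan is to combine the two preceding results: the cake cutting algorithm of Proposition~\ref{prop_ub} and the reduction of Proposition~\ref{sorting_to_prop_reduction}. Concretely, Proposition~\ref{sorting_to_prop_reduction} gives a reduction that takes any sorting instance in the rank query model on $n$ elements and produces a proportional cake cutting instance on $n$ agents, with each rank query being simulated by $O(1)$ RW queries at the same round. Proposition~\ref{prop_ub} gives a deterministic proportional cake cutting protocol (Algorithm~1) which uses at most $k n^{1+1/k}$ RW queries over $k$ rounds. Composing these two, we obtain a deterministic sorting algorithm that uses $O(k n^{1+1/k})$ rank queries in $k$ rounds.

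In more detail, I would first explicitly invoke Proposition~\ref{sorting_to_prop_reduction} to fix, for the given sorting instance, the corresponding cake cutting instance and the query-by-query simulation. I would then run Algorithm~1 on the cake cutting side, translating every cake query it makes into the associated $O(1)$ rank queries on the sorting side. Since the reduction preserves rounds, the rank queries that correspond to cake queries issued in round $j$ of Algorithm~1 can all themselves be issued in round $j$ of the sorting algorithm; hence the whole procedure takes $k$ rounds. The final proportional allocation produced by Algorithm~1 is decoded back into a sorted order by the reduction.

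The correctness argument has two components: the cake cutting output is a valid proportional allocation (by Proposition~\ref{prop_ub}), and under the reduction a proportional allocation determines the sorted order (by Proposition~\ref{sorting_to_prop_reduction}). For the query count, each of the $k n^{1+1/k}$ cake queries costs $O(1)$ rank queries, yielding the claimed $O(k n^{1+1/k})$ total. Determinism is preserved since Algorithm~1 and the reduction are both deterministic.

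The only nontrivial point in writing this out is being careful that the round structure really is preserved by the reduction, i.e.\ that nothing in the simulation of an RW query by rank queries requires waiting for the answer of an earlier simulated query within the same round. Since Proposition~\ref{sorting_to_prop_reduction} is stated to hold for any number of rounds, this per-round preservation is built into the statement and there is no additional obstacle to combining the two bounds.
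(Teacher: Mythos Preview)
Your approach is correct and matches the paper's proof: both combine Proposition~\ref{sorting_to_prop_reduction} (the reduction from sorting to cake cutting) with Proposition~\ref{prop_ub} (the cake-cutting upper bound), and the paper additionally spells out the resulting sorting algorithm directly. One small slip: in your first paragraph you write ``each rank query being simulated by $O(1)$ RW queries,'' which has the direction backwards; you correctly state it in the second and fourth paragraphs, where each cake (RW) query is answered using $O(1)$ rank queries on the sorting side.
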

	\begin{proof}
		By the reduction from sorting to cake cutting in Proposition~\ref{sorting_to_prop_reduction}, the upper bound follows from Proposition~\ref{prop_ub}.
		
		The sketch of the resulting deterministic sorting algorithm is as follows.
		In the first round, for each $x$ in the array,
		query comparing $rank(x)$ to $\lceil n^{1-1/k} \rceil, \lceil 2n^{1-1/k} \rceil, \ldots, \lceil n - n^{1-1/k} \rceil$.
		This divides the array into $\lceil n^{1/k} \rceil$ blocks of indices of the form $(\lceil (i-1) n^{1-1/k} \rceil, \lceil i n^{1-1/k} \rceil)$ for $i = 1, 2, \ldots, \lceil n^{1/k} \rceil$.
		Each element either has its exact rank revealed, or is found to belong to a particular block.
		Then recursively call the sorting algorithm in each block.
	\end{proof}
	

	\subsection{Lower bound}
	
	In this section we first show a lower bound for sorting in the rank query model; for deterministic algorithms this bound improves upon the bound in ~\cite{AlonAzar88} by a constant factor   and the proof is simpler (see Appendix~\ref{app:sorting_lb}). Deterministic algorithms are relevant specifically for fair division, since some studies find that it is preferable to avoid randomness in the allocations if possible when dealing with human agents.

	\begin{proposition} \label{thm:sorting_lower_bound}
		Let $c(k,n)$ be the minimum total number of queries required to sort $n$ elements in the rank query model by the best deterministic algorithm in $k$ rounds. Then
		$c(k,n) \ge \frac{k}{2e} n^{1+1/k} -kn\,.$
	\end{proposition}

	Alon and Azar~\cite{AlonAzar88} show a lower bound of $\Omega\left(kn^{1+1/k}\right)$ for randomized sorting with rank queries, which together with the reduction in Proposition~\ref{sorting_to_prop_reduction} implies the next corollary.
	
	\begin{corollary}
		Let $\mathcal{A}$ be an algorithm that runs in $k$ rounds for solving proportional cake cutting with contiguous pieces for $n$ agents. If $\mathcal{A}$ succeeds with constant probability, then it issues $\Omega(k n^{1+1/k})$ queries in expectation.
	\end{corollary}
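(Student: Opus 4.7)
The plan is to argue by contrapositive, transferring the Alon--Azar randomized lower bound for sorting in the rank query model directly to proportional contiguous cake cutting via the reduction established in Proposition~\ref{sorting_to_prop_reduction}. Concretely, suppose for contradiction that there is a randomized $k$-round protocol $\mathcal{A}$ for proportional cake cutting with contiguous pieces that succeeds with some constant probability $c > 0$ and issues $o(k n^{1+1/k})$ RW queries in expectation. I would then plug $\mathcal{A}$ into the reduction from Proposition~\ref{sorting_to_prop_reduction} to obtain a randomized $k$-round sorting protocol $\mathcal{A}'$ for the rank query model with the same asymptotic expected query count and the same constant success probability, contradicting the Alon--Azar bound of $\Omega(k n^{1 + 1/k})$.

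The steps I would carry out are as follows. First, I would invoke Proposition~\ref{sorting_to_prop_reduction} to fix, for every sorting instance $\vec{x}$ of length $n$, a corresponding cake cutting instance $\mathcal{I}(\vec{x})$ on $n$ agents such that a proportional contiguous allocation of $\mathcal{I}(\vec{x})$ reveals a sorted order of $\vec{x}$. Second, I would verify that the reduction translates each rank query on $\vec{x}$ into $O(1)$ RW queries on $\mathcal{I}(\vec{x})$, and vice versa, so that the round structure is preserved and expected query counts blow up by at most a constant factor. Third, I would run $\mathcal{A}$ on $\mathcal{I}(\vec{x})$ to get an allocation, extract the sorted order via the reduction map, and argue that with probability at least $c$ this recovers a correct sorting of $\vec{x}$. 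Finally, I would invoke the Alon--Azar randomized lower bound of $\Omega(k n^{1 + 1/k})$ expected rank queries for any sorting protocol with constant success probability, yielding the desired contradiction.

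The main obstacle, and the only non-bookkeeping step, is ensuring that the reduction of Proposition~\ref{sorting_to_prop_reduction} is round-preserving and query-count-preserving up to constants when fed a randomized cake cutting algorithm. Since Proposition~\ref{sorting_to_prop_reduction} establishes a polynomial-time reduction that holds for any number of rounds, the round preservation is immediate; the remaining subtlety is to confirm that the reduction uses only $O(1)$ RW queries to simulate a single rank query (and conversely, $O(1)$ rank queries to simulate any RW query arising in the image of the reduction). Once this is in hand, boosting the success probability if necessary by a constant number of independent repetitions (still constant in $k$ and $n$) finishes the argument, and the $\Omega(k n^{1 + 1/k})$ bound carries over to $\mathcal{A}$ in expectation.
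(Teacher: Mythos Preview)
Your proposal is correct and takes essentially the same approach as the paper: combine the reduction of Proposition~\ref{sorting_to_prop_reduction} with the Alon--Azar $\Omega(kn^{1+1/k})$ randomized lower bound for sorting with rank queries. One small clarification: the relevant direction of the query correspondence is that each RW query issued by $\mathcal{A}$ on $\mathcal{I}(\vec{x})$ can be answered using at most one rank query on $\vec{x}$ (as established in the proof of Proposition~\ref{sorting_to_prop_reduction}), so the boosting step you mention is unnecessary.
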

	
	The proof of Proposition~\ref{thm:sorting_lower_bound} is given in section \ref{app:sorting_lb}.

	\subsection{Sorting to cake cutting reduction} \label{app:sorting_reduction}
	
	Here we prove the reduction of sorting to proportional cake cutting where the sorting is not with comparisons, but rather with queries that, given an item $p$ and index $i$ return whether the rank of $p$ is less than, equal to, or greater than $i$.
	The bulk of the work has already been done by Woeginger and Sgall \cite{woeginger2007complexity} through the introduction of a set of cake valuations and an adversary protocol. 
	We present again their valuations and adversary protocol without proving the relevant lemmas; we would refer the reader to their paper for the proofs.
	Then we perform the last few steps to prove the reduction.
	

	
	\begin{definition}\cite{woeginger2007complexity}
		Let the $\alpha$-point of an agent $p$ be the infimum of all numbers $x$ such that $\mu_p([0,x]) = \alpha$.
		In other words, $\textrm{Cut}_p(\alpha) = x$.
	\end{definition}
	
	We fix $0 < \epsilon < 1/n^4$. The choice is not important.
	
	\begin{definition}\cite{woeginger2007complexity}
		For $i=1,\ldots,n$ let $X_i \subset [0,1]$ be the set consisting of the $n$ points $i/(n+1) + k\epsilon$ with integer $1 \le k \le n$.
		Further let $X = \bigcup_{1 \le i \le n} X_i$
	\end{definition}
	
	By definition every agent's $0$-point is at $0$. The positions of the $i/n$-points with $1 \le i \le n$ are fixed by the adversary during the execution of the protocol. In particular, the $i/n$-points of all agents are distinct elements of $X_i$. Note that this implies that all $i/n$-points are left of all $(i+1)/n$ points.
	
	\medskip
	
	\begin{definition}\cite{woeginger2007complexity}
		Let $\mathcal{I}_{p,i}$ be a tiny interval of length $\epsilon$ centered around the $i/n$-point of agent $p$.
	\end{definition}
	
	We place all the value of each agent $p$ in her $\mathcal{I}_{p,i}$ for $i=0,\ldots,n$. More precisely, for $i=0,\ldots,n$ she has a sharp peak of value $i/(n^2+n)$ immediately to the left of her $i/n$ point and a sharp peak of value $(n-i)/(n^2+n)$ immediately to the right of her $i/n$ point. Note that the measure between the $i/n$ and $(i+1)/n$ points is indeed $1/n$. Further note that the value $\mu_p(\mathcal{I}_{p,i}) = 1/(n+1)$. Also note that the $\mathcal{I}_{p,i}$ are all disjoint except for the $\mathcal{I}_{p,0}$, which are identical. Finally note that every $\alpha$-point of an agent $p$ lies inside one of that agent's $\mathcal{I}_{p,i}$s.
	
	\begin{definition}\cite{woeginger2007complexity}
		If $x \in \mathcal{I}_{p,i}$, then let $c_p(x)$ be the corresponding $i/n$-point of agent $p$.
	\end{definition}
	
	
	
	\begin{definition}\cite{woeginger2007complexity}
		We call a protocol \emph{primitive} iff in all of its cut operations $\textrm{Cut}_p(\alpha)$ the value of $\alpha$ is of the form $i/n$ with integer $0 \le i \le n$.
	\end{definition}
	
	\begin{lemma}\cite{woeginger2007complexity}
		For every protocol $P$ there exists a primitive protocol $P'$ such that for every cake cutting instance of the restricted form described above,
		\begin{enumerate}
			\item $P$ and $P'$ make the same number of cuts.
			\item if $P$ assigns to agent $p$ a piece $\mathcal{J}$ of measure $\mu_p(\mathcal{J}) \ge 1/n$, then also $P'$ assigns to agent $p$ a piece $\mathcal{J}'$ of measure $\mu_p(\mathcal{J}') \ge 1/n$.
		\end{enumerate}
	\end{lemma}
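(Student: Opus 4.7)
The plan is to construct $P'$ by internally simulating $P$ and replacing every non-primitive cut query with a primitive one that probes essentially the same location. The starting observation is structural: each $v_p$ is supported on the $2(n+1)$ sharp peaks contained in the intervals $\mathcal{I}_{p,i}$ (which are pairwise disjoint apart from the coincident $\mathcal{I}_{p,0}$), so $V_p([0,\cdot])$ is a step function whose jumps are exactly the peak masses. Consequently the index $i$ with $Cut_p(\alpha) \in \mathcal{I}_{p,i}$ is determined by $\alpha$ alone; a direct computation shows that $\alpha \in (i/(n+1), (i+1)/(n+1)]$ sends the cut into $\mathcal{I}_{p,i}$. The primitive query $Cut_p(i/n)$ (the leftmost point of cumulative value $i/n$) also lands in $\mathcal{I}_{p,i}$, so primitive and non-primitive responses always lie inside the same interval of width $\epsilon$.

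The simulator $P'$ then runs $P$ step by step: whenever $P$ issues $Cut_p(\alpha)$, $P'$ computes the matching $i$ from $\alpha$ and issues $Cut_p(i/n)$ in its place, feeding the returned point back into the internal simulation. Eval queries on previously made cut points are answered exactly, since $V_p$ evaluated at an $i/n$-point equals $i/n$. Because each non-primitive cut of $P$ is replaced by exactly one primitive cut of $P'$, the two protocols issue the same number of cuts on every restricted instance, which gives property~(1). The allocation output by $P'$ inherits $P$'s interval-to-agent assignment, but each boundary is shifted from its original position in some $\mathcal{I}_{p,i}$ to the corresponding primitive cut point inside the same interval.

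For property~(2), the crucial structural fact is that $v_q$ vanishes on $\mathcal{I}_{p,i}$ whenever $(p,i) \ne (q,j)$ with $i, j \ge 1$, because the intervals are pairwise disjoint. Hence a boundary shift contained in $\mathcal{I}_{p,i}$ changes only $v_p$-mass across that boundary and affects only agent $p$'s piece; every other agent's measure is preserved exactly. The main obstacle is then to guarantee that this $v_p$-change does not drop $\mu_p(\mathcal{J}_p)$ below $1/n$. The approach is to refine the substitution rule by choosing between the two candidate primitive cuts $Cut_p(i/n)$ and $Cut_p((i+1)/n)$ so that the boundary shift always moves in the direction keeping the displaced peak mass on the side of the boundary occupied by $\mathcal{J}_p$; which side that is can be determined from $P$'s simulated internal state. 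Since a contiguous $\mathcal{J}_p$ has at most two boundaries generated by $p$'s own cuts and the total peak mass at any one $i/n$-point is $1/(n+1) < 1/n$, such a direction choice is always available, so no $v_p$-mass is removed from $\mathcal{J}_p$ and $\mu_p(\mathcal{J}_p) \ge 1/n$ is preserved.
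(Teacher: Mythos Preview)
The paper does not actually prove this lemma; it is quoted from \cite{woeginger2007complexity} and the authors explicitly write ``we would refer the reader to their paper for the proofs.'' So there is nothing in the paper to compare your argument against directly. That said, your proposal has a genuine gap in the treatment of property~(2).

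Your ``direction choice'' between $Cut_p(i/n)$ and $Cut_p((i+1)/n)$ does not work. First, the allocation (and hence which side of a given boundary $\mathcal{J}_p$ lies on) is only known after $P$ terminates, so $P'$ cannot make this choice at the time the cut is issued without either re-issuing cuts or doubling the cut count. Second, and more seriously, $Cut_p((i+1)/n)$ lands in $\mathcal{I}_{p,i+1}$, a \emph{different} $X$-region from the original cut. Moving the boundary from $\mathcal{I}_{p,i}$ to $\mathcal{I}_{p,i+1}$ can sweep across $\mathcal{I}_{q,i}$ or $\mathcal{I}_{q,i+1}$ for the neighbouring agent $q$, so you may remove up to $2/(n+1)$ of $q$'s value from the adjacent piece. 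Since the boundary is shared, rounding ``outward for $p$'' is exactly rounding ``inward for $q$''; you cannot help both simultaneously, and your argument gives no reason why $q$'s piece survives.

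The correct route avoids any direction choice: always replace $Cut_p(\alpha)$ by $Cut_p(i/n)$, the \emph{unique} primitive value with cut in the same $\mathcal{I}_{p,i}$, and then verify property~(2) by a short case analysis on who made the two endpoints of $\mathcal{J}_p=[a,b]$. If an endpoint is another agent's cut, $V_p$ is unchanged there since $v_p$ vanishes on $\mathcal{I}_{q,j}$. If an endpoint is $p$'s own cut, note that $V_p$ evaluated at any other agent's cut point is always a multiple of $1/(n+1)$; combining this discreteness with $\mu_p([a,b])\ge 1/n$ forces the corresponding primitive endpoints to differ by at least one ``slot,'' so the rounded piece still has $\mu_p$-value at least $1/n$. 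For instance, if both endpoints are $p$'s cuts with $\alpha\in(i/(n+1),(i+1)/(n+1)]$ and $\beta\in(j/(n+1),(j+1)/(n+1)]$, then $\beta-\alpha\ge 1/n$ forces $j\ge i+1$, whence the rounded piece has value $(j-i)/n\ge 1/n$. The mixed cases are similar. No outward shift is ever needed.
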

	
	It is also true that given $P$, protocol $P'$ can be quickly constructed. This follows directly from Woeginger and Sgall's constructive proof of the above lemma. This implies that we, the adversary, may assume w.l.o.g. that the protocol is primitive. We can now define the adversary's strategy. Fix a permutation $\pi$ on $[n]$. Suppose at some point the protocol asks $\textrm{Cut}_p(i/n)$. With multiple queries in the same round, answer the queries in an arbitrary order.
	\begin{enumerate}
		\item If $\pi(p) < i$, then the adversary assigns the $i/n$ point of agent $p$ to the smallest point in the set $X_i$ that has not been used before.
		\item If $\pi(p) > i$, then the adversary assigns the $i/n$ point of agent $p$ to the largest point in the set $X_i$ that has not been used before.
		\item If $\pi(p) = i$, then the adversary assigns the $i/n$ point of agent $p$ to the $i$th smallest point in the set $X_i$.
	\end{enumerate}
	
	This strategy immediately precipitates the following lemma.
	
	\begin{lemma}\cite{woeginger2007complexity}
		If $\pi(p) \le i \le \pi(q)$ and $p \ne q$, then the $i/n$ point of agent $p$ strictly precedes the $i/n$ point of agent $q$
	\end{lemma}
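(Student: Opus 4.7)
My plan is to prove the lemma via a simple invariant on the adversary's state during the protocol's execution. Label the $n$ points of $X_i$ as positions $1, 2, \ldots, n$ in increasing order of their value in $[0,1]$. The invariant I would maintain, by induction on the number of $Cut_p(i/n)$ queries processed so far (breaking ties within a round by the arbitrarily chosen serial order), is the following: the agents $p$ with $\pi(p) < i$ whose $i/n$-points have already been fixed occupy some prefix $\{1, \ldots, a\}$ of $X_i$; the agents with $\pi(p) > i$ that have been processed occupy some suffix $\{n-b+1, \ldots, n\}$; and if the unique agent with $\pi(p) = i$ has been processed, it occupies position $i$.

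The inductive step is driven by a straightforward counting argument. Since exactly $i-1$ agents have $\pi < i$ and the adversary always picks the smallest unused position for them, the prefix length $a$ never exceeds $i-1$, so these agents never reach position $i$ or higher. Symmetrically, there are exactly $n-i$ agents with $\pi > i$, and the suffix they occupy stays within $\{i+1, \ldots, n\}$. When the $\pi = i$ agent is finally queried, positions in $\{1, \ldots, i-1\}$ are occupied only by $\pi < i$ agents and positions in $\{i+1, \ldots, n\}$ only by $\pi > i$ agents, so position $i$ is still free and available to be assigned as the adversary's third rule demands.

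With the invariant in hand, the lemma follows immediately: any agent $p$ with $\pi(p) \le i$ ends up at a position of $X_i$ with index at most $i$, and any agent $q$ with $\pi(q) \ge i$ at a position with index at least $i$. Since $\pi$ is a permutation, the hypothesis $p \ne q$ rules out $\pi(p) = \pi(q) = i$, so at least one of the two inequalities is strict and the position indices of $p$ and $q$ differ strictly. Translating back through the $[0,1]$-ordering of $X_i$, agent $p$'s $i/n$-point strictly precedes agent $q$'s. The main thing that needs care is stating the invariant precisely and checking that the adversary's within-round tie-breaking cannot break it; once the invariant is isolated the bookkeeping reduces to the three counts $i-1$, $1$, and $n-i$, leaving no substantive calculation to carry out.
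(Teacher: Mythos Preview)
Your argument is correct. The paper does not actually prove this lemma: it simply states that the adversary's strategy ``immediately precipitates'' it and cites Woeginger--Sgall for the details. Your invariant-based argument is exactly the natural unpacking of that claim. The key counts you isolate---that only $i-1$ agents have $\pi(p)<i$ and only $n-i$ have $\pi(p)>i$---are what guarantee the growing prefix never reaches position $i$ and the growing suffix never drops to position $i$, so the three rules of the adversary never collide. Your observation that the within-round serialisation order is irrelevant (since the inductive step works for any single query regardless of type) is also correct and worth noting explicitly, as the paper's phrasing ``answer the queries in an arbitrary order'' might otherwise look like a loose end.
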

	
	At the end, the protocol must assign intervals to agents. Let $y_0, y_1, \ldots, y_n$ be the boundaries of these slices; i.e. $y_0 = 0$, $y_n = 1$, and all other $y_j$ are cuts performed. Then there is a permutation $\phi$ of $[n]$ such that for $i=1,\ldots,n$ the interval $[y_{i-1},y_i)$ goes to agent $\phi(i)$.
	
	\medskip
	
	\begin{figure}[h!]
		\centering
		\includegraphics[scale=1.6]{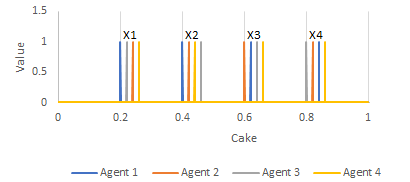}
		\caption{A potential value distribution for four agents.
			Each agent receives a spike in value in each of $X_0, X_1, X_2, X_3, X_4$ ($X_0$ is not shown).
			Each spike has total value $1/5$, so to get the required $1/4$ value an agent's slice must include parts of multiple $X_i$.
			Note that Agent 1 receives the first slot in $X_1$, Agent 2 receives the second slot in $X_2$, etc. Further note that slot 1 is allocated to Agent 1 in $X_2$ and slot 4 is allocated to Agent 4 in $X_3$.
			This implies that the slices must be allocated to agents $1, 2, 3, 4$ in order.
		}
		\label{fig:value_distribution}
	\end{figure}
	
	\begin{lemma}\label{lem:yi_in_xi}\cite{woeginger2007complexity}
		If the primitive protocol $P'$ is fair, then $y_i \in X_i$ for $1 \le i \le n-1$ and the interval $[y_{i-1},y_i]$ contains the $(i-1)/n$-point and the $i/n$-point of agent $\phi(i)$.
	\end{lemma}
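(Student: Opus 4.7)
The plan is to prove both claims in tandem by first locating every interior cut inside some cluster $X_{j_i}$ and then using the proportionality requirement to force $j_i = i$ and to sandwich agent $\phi(i)$'s required endpoints inside her slice. Since $P'$ is primitive, each $y_i$ (for $1 \le i \le n-1$) is the $j/n$-point of some agent, so $y_i \in X_{j_i}$ for some $j_i \in \{1,\ldots,n-1\}$; I also set $j_0 = 0$ and $j_n = n$ to cover the endpoints $y_0 = 0$ and $y_n = 1$. Because the clusters $X_0, X_1, \ldots, X_n$ lie in pairwise disjoint intervals separated by gaps of order $\Theta(1/n)$ while each cluster has diameter at most $n\epsilon \ll 1/n$, the chain $0 = y_0 < y_1 < \cdots < y_n = 1$ immediately gives $j_{i-1} \le j_i$.

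Next I would establish a uniform upper bound on the value any agent $p$ can extract from the slice $I_i = [y_{i-1},y_i]$. Since $p$'s value sits entirely on the intervals $\mathcal{I}_{p,j}$ (each of mass $1/(n+1)$) and $c_j^p \in X_j$, the cluster separation forces every spike with $j_{i-1} < j < j_i$ to lie fully inside $I_i$, while the two boundary spikes at $c_{j_{i-1}}^p$ and $c_{j_i}^p$ each contribute at most $1/(n+1)$. This yields $\mu_p(I_i) \le (j_i - j_{i-1} + 1)/(n+1)$. Taking $p = \phi(i)$ and using $\mu_{\phi(i)}(I_i) \ge 1/n$ forces $j_i - j_{i-1} \ge 1$ (since the difference is a non-negative integer); the telescoping identity $n = j_n - j_0 = \sum_{i=1}^n (j_i - j_{i-1}) \ge n$ then collapses to equality throughout, giving $j_i = i$ and settling part (a).

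With $j_i - j_{i-1} = 1$ in hand, the interior-spike count is zero, so the entire $1/n$ of $\mu_{\phi(i)}(I_i)$ must come from the two boundary spikes $\mathcal{I}_{\phi(i),i-1}$ and $\mathcal{I}_{\phi(i),i}$. If $y_{i-1}$ were strictly to the right of $c_{i-1}^{\phi(i)}$, then because any two distinct points of $X_{i-1}$ are separated by at least $\epsilon$ while $\mathcal{I}_{\phi(i),i-1}$ has width $\epsilon$, the whole left spike would fall to the left of $y_{i-1}$ and contribute $0$, leaving at most $1/(n+1) < 1/n$ total---a contradiction. A symmetric argument forces $y_i \ge c_i^{\phi(i)}$, so both $c_{i-1}^{\phi(i)}$ and $c_i^{\phi(i)}$ lie in $I_i$, proving part (b).

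The step I expect to be most delicate is the geometric bookkeeping behind the inequality $\mu_p(I_i) \le (j_i - j_{i-1} + 1)/(n+1)$: one has to verify, using the explicit construction of the $X_j$ together with the choice $\epsilon < 1/n^4$, that the $n$ slots of each cluster are pairwise $\epsilon$-separated and that the inter-cluster gaps dwarf $\epsilon$, so that the only spikes meeting $I_i$ apart from the clean interior set are the two at the cluster endpoints. Once this geometric separation is locked in, the rest of the argument reduces to the short telescoping and the two-case boundary analysis above.
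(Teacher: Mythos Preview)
The paper does not actually prove this lemma: it is quoted verbatim from Woeginger and Sgall~\cite{woeginger2007complexity}, and the surrounding text explicitly says ``We present again their valuations and adversary protocol without proving the relevant lemmas; we would refer the reader to their paper for the proofs.'' So there is no in-paper proof to compare against.

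That said, your argument is a correct reconstruction of the lemma and is essentially the natural one. A few minor points worth tightening: your convention $j_0=0$, $j_n=n$ is not literally ``$y_0\in X_0$'' or ``$y_n\in X_n$'' (neither holds), but the counting bound $\mu_p(I_i)\le (j_i-j_{i-1}+1)/(n+1)$ still goes through at the ends because the $0$-spike sits at $0=y_0$ and all of $X_n$ lies below $1=y_n$; you should say this explicitly rather than relying on the convention. Likewise, your part~(b) boundary argument uses $y_{i-1}\in X_{i-1}$, which fails at $i=1$; there the conclusion $y_0\le c_0^{\phi(1)}$ is trivial since both equal $0$, and symmetrically at $i=n$. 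With those two edge cases handled, the telescoping $\sum_i (j_i-j_{i-1})=n$ with each summand $\ge 1$ forces $j_i=i$, and the $\epsilon$-separation of slots within each $X_j$ pins down the $(i-1)/n$- and $i/n$-points of $\phi(i)$ inside $[y_{i-1},y_i]$ exactly as you describe.
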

	
	\begin{lemma}\label{lem:permuation_inversion}\cite{woeginger2007complexity}
		For any permutation $\sigma \ne id$ of $[n]$, there exists some $i$ with $$\sigma(i+1) \le i \le \sigma(i) \, .$$
	\end{lemma}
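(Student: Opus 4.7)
The plan is to exhibit the required index $i$ directly from the smallest ``descent'' location of $\sigma$. Concretely, I want to take $i = j-1$, where $j$ is the smallest index with $\sigma(j) < j$, and verify that both inequalities in $\sigma(i+1) \le i \le \sigma(i)$ fall out immediately from the choice of $j$.

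First I would argue such a $j$ exists whenever $\sigma \ne id$. Since $\sigma$ is a permutation of $[n]$, we have $\sum_{i=1}^n (\sigma(i)-i) = 0$. If $\sigma(i) \ge i$ held for every $i$, every summand would be nonnegative and the sum being zero would force $\sigma(i) = i$ for all $i$, contradicting $\sigma \ne id$. Hence some index $j$ with $\sigma(j) < j$ exists, and we may take the least such $j$. Because $\sigma(1) \ge 1$ always, we get $j \ge 2$, so $i := j-1$ lies in $\{1, \ldots, n-1\}$ as required.

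Next I would verify the two inequalities. By minimality of $j$, every index $j' < j$ satisfies $\sigma(j') \ge j'$; in particular $\sigma(i) = \sigma(j-1) \ge j-1 = i$, giving the right inequality $i \le \sigma(i)$. For the left inequality, the defining property of $j$ gives $\sigma(i+1) = \sigma(j) < j$, i.e.\ $\sigma(i+1) \le j-1 = i$. Combining, $\sigma(i+1) \le i \le \sigma(i)$, which is exactly the claim.

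I do not expect any real obstacle: the only subtle point is ruling out the corner case $j = 1$ (which would make $i = 0$ invalid), but $\sigma(1) \ge 1$ handles this trivially, and the nonexistence of such a $j$ would contradict $\sigma \ne id$ via the zero-sum observation on $\sigma(i)-i$. The argument is essentially a one-line ``first descent'' pigeonhole, so the write-up should be only a few sentences.
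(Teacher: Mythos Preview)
Your argument is correct. Taking $j$ to be the least index with $\sigma(j)<j$ works: the zero-sum identity $\sum_i(\sigma(i)-i)=0$ guarantees such a $j$ exists when $\sigma\ne id$, the observation $\sigma(1)\ge 1$ forces $j\ge 2$, and then both inequalities at $i=j-1$ follow immediately from minimality and the defining property of $j$.

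As for comparison with the paper: this lemma is quoted from Woeginger and Sgall~\cite{woeginger2007complexity}, and the present paper explicitly declines to reprove it (``we present again their valuations and adversary protocol without proving the relevant lemmas; we would refer the reader to their paper for the proofs''). So there is no in-paper proof to compare against. Your first-descent argument is a clean, self-contained proof of the cited fact and would serve perfectly well if one wanted to make the appendix self-contained.
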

	
	We can now claim that $\phi = \pi^{-1}$.
	To prove this, suppose for sake of contradiction $\phi \ne \pi^{-1}$; then $\pi \circ \phi \ne id$ and by Lemma~\ref{lem:permuation_inversion}, there exists an $i$ such that
	\begin{align}
		\pi(\phi(i+1)) \le i \le \pi(\phi(i))
	\end{align}
	Then let $p = \phi(i+1)$ and $q = \phi(i)$. Further let $z_p$ be the $i/n$ point of agent $p$ and $z_q$ be the $i/n$ point of agent $q$. By Lemma~\ref{lem:permuation_inversion}, we have $z_p < z_q$. By Lemma~\ref{lem:yi_in_xi}, we have $z_p \in [y_i,y_{i+1}]$ and $z_q \in [y_{i-1}, y_i]$. But this implies $z_p \ge y_i \ge z_q$, in contradiction with $z_p < z_q$.
	Therefore $\phi = \pi^{-1}$.
	With this preliminary work out of the way, we are finally ready to state and prove the reduction.
	
	\bigskip
	
	\noindent \textbf{Proposition} \ref{sorting_to_prop_reduction} (restated): 
	\emph{There exists a polynomial time reduction from sorting an $n$ element with rank queries to proportional cake cutting with $n$ agents. The reduction holds for any number of rounds.}
	\begin{proof}
		After an evaluation query $\textrm{Eval}_p(x)$, where $x = \textrm{Cut}_{p'}(i/n)$ and $p \ne p'$, there are only two possible answers: $i/(n+1)$ and $(i+1)/(n+1)$. This reveals whether the $i/n$ point of $p$ is left or right of that of $p'$. This only reveals new information if $\pi(p') = i$. In this case, the information is whether $\pi(p) < i$ or $\pi(p) > i$. 
		
		After a cut query $\textrm{Cut}_p(i/n)$, there are only three answers. These correspond exactly to $\pi(p) < i$, $\pi(p) = i$, and $\pi(p) > i$. Thus w.l.o.g., all queries are cut queries. 
		
		Then given a sorting problem with rank queries, we can construct a proportional cake cutting instance such that any solution assigns slices according to the inverse permutation of the unsorted elements of the original sorting problem. The sorting problem can then be solved without any additional queries. Furthermore, each query in the cake cutting instance can be answered using at most one query in the sorting instance. This completes the reduction. Because of the one-to-one correspondence between queries, it immediately follows that the reduction holds for any number of rounds.
	\end{proof}

	\subsection{Sorting lower bound} \label{app:sorting_lb}
	
	Our approach for the lower bound builds on the work in~\cite{AZV86}. To show that this sorting problem is hard, we find a division between two regions of the array such that one must be solved in future rounds while the other still needs to be solved in the current round. 
	
	\bigskip 
	
	\noindent \textbf{Proposition} \ref{thm:sorting_lower_bound} (restated): 
	\emph{Let $c(k,n)$ be the minimum total number of queries required to sort $n$ elements in $k$ rounds in the rank query model. Then
		$c(k,n) \ge \frac{k}{2e} n^{1+1/k} -kn$.}
	\begin{proof}
		We proceed by induction on $k$.
		For $k=1$, note that if any two items $p_j, p_k$ have no query for indices $i, i+1$ then the adversary can assign those positions to those items and the solver will be unable to determine their true order. Thus for $i=2,4,\ldots n$ at least $n-1$ queries are necessary, for a total of $\lfloor n/2 \rfloor (n-1)$. Then
		\begin{align}
			\lfloor n/2 \rfloor (n-1) \ge (n/2-1/2)(n-1) = n^2/2 - n + 1/2 > n^2/(2e) - n\,. \notag
		\end{align}
		For $k>1$, 
		assume the claim holds for all pairs $(k',n')$ where either $(k' < k)$ or $(k' = k$ and $n' < n)$.
		If $n^{1/k} \le 2e$, then
		\begin{align}
			\frac{n^{1/k}}{2e} - 1 \le 0 \iff 
			\frac{k}{2e} n^{1+1/k} -kn \le 0 \notag 
		\end{align}
		so the bound is non-positive, and is thus trivially satisfied.
		Thus we may assume $n^{1/k} > 2e$.
		
		\smallskip
		
		If there are no queries in the first round, then we have
		\begin{align}
			c(k,n) \ge c(k-1,n) \ge \frac{(k-1)}{2e} n^{1+\frac{1}{k-1}} -(k-1)n
			= \frac{k}{2e} n^{1+1/k}  \Bigl[ \left(1-\frac{1}{k}\right)n^{\frac{1}{k^2-k}} + \frac{2e}{kn^{1/k}} \Bigr] - kn \notag 
		\end{align}
		From here it suffices to show
		$(1-\frac{1}{k})n^{1/(k^2-k)} + \frac{2e}{kn^{1/k}} \ge 1$.  \\
		
		\smallskip
		
		Recall the AM–GM inequality
		$ \alpha a + \beta b \ge a^\alpha b^\beta $ with $a,b,\alpha,\beta > 0$ and $\alpha+\beta=1$. Taking $\alpha = 1-{1}/{k}$, $\beta = {1}/{k}$, $a=n^{1/k^2-1/k}$, and $b={2e}/{n^{1/k}}$, we get
		\begin{align}
			\left(1-\frac{1}{k}\right)n^{1/(k^2-k)} + \frac{2e}{kn^{1/k}} \ge (2e)^{1/k} \ge 1
		\end{align}
		so we may assume there is at least one query in the first round.
		
		Take any $k$-round algorithm for sorting a set $V$ of $n$ elements using rank queries.
		Let $x$ be the maximum integer such that there exist $x$ items with no queries in $[1,x]$ but there do not exist $x+1$ items with no queries in $[1,x+1]$.
		Note that since there is at least one query, it follows that $x < n$.
		Let $S$ be one such set of $x$ items.
		Then at least $n-x$ items have a query in $[1,x+1]$. 
		At this point the adversary announces that every element of $S$ precedes every element of $V-S$.
		The adversary also announces the item at position $x+1$. We call this item $p_{mid}$.
		None of the $n-x$ queries help to sort the items in $S$ since they are either at $x+1$ or for an item not in $S$, so we also need $c(k-1,x)$ queries to sort $S$.
		Additionally, none of the $n-x$ queries help to sort the items in $V-S-\{p_{mid}\}$, so we also need an additional $c(k,n-x-1)$ queries to sort $V-S-\{p_{mid}\}$.
		This implies the following inequality.
		\begin{align} \label{recurrent_inequality}
			c(k,n) \ge c(k, n-x-1) + (n-x) + c(k-1, x)
		\end{align}
		We consider two cases.
		
		\begin{description}
			\item[Case $x \ge k/\ln{2}$.] By the inductive assumption,
			\begin{align}
				\begin{split}
					c(k,n) &\ge \frac{k}{2e} (n-x-1)^{1+1/k} -k(n-x-1) + (n-x) + \frac{k-1}{2e} x^{1+1/(k-1)} - (k-1)x\\
					&= \frac{k}{2e} n^{1+1/k}  \Bigl[  \left(1-\frac{x+1}{n}\right)^{1+1/k}
					+ \left(1-\frac{1}{k}\right) \frac{ x^{1+1/(k-1)} }{ n^{1+1/k} }
					+ \frac{2e}{kn^{1/k}}  \Bigr] 
					-kn+k    
				\end{split}
			\end{align}
			
			In the AM-GM inequality $\alpha a + \beta b \ge a^\alpha b^\beta$, taking 
			\begin{align} \alpha = 1-{1}/{k}, \beta = {1}/{k}, a = \frac{ x^{1+1/(k-1)} }{ n^{1+1/k} }, \mbox{ and } b=\frac{2e}{n^{1/k}},
			\end{align}
			we get
			\begin{align}
				c(k,n) \ge \frac{k}{2e} n^{1+1/k}  \Bigl[  \left(1-\frac{x+1}{n}\right)^{1+1/k}
				+ \frac{x}{n^{1-1/k^2}} \cdot \frac{(2e)^{1/k}}{n^{1/k^2}}  \Bigr] 
				-kn+k
			\end{align}
			
			Now, since $(1+\frac{1}{k})^k \nearrow e$, we have $e^{1/k} > 1+1/k$. This yields
			\begin{align}
				c(k,n) \ge \frac{k}{2e} n^{1+1/k}  \Bigl[  \left(1-\frac{x+1}{n}\right)^{1+1/k}
				+ 2^{1/k}\frac{x}{n}\left(1+\frac{1}{k}\right)  \Bigr] 
				-kn+k
			\end{align}
			Then recall Bernoulli's Inequality: $(1-a)^t \ge 1-at$ if $t \ge 1$ and $a \le 1$. This yields
			\begin{align}
				\begin{split}
					c(k,n) &\ge \frac{k}{2e} n^{1+1/k}  \Bigl[  1 - \frac{x+1}{n}\left(1+\frac{1}{k}\right)
					+ 2^{1/k}\frac{x}{n}\left(1+\frac{1}{k}\right)  \Bigr]  -kn+k\\
					&=  \Bigl[ \frac{k}{2e} n^{1+1/k} -kn \Bigr]  + \frac{k+1}{2e}n^{1/k}\Bigl(\left(2^{1/k}-1\right)x-1\Bigr) +k
				\end{split}
			\end{align}
			Then since by L'H\^opital's rule $k(2^{1/k}-1) \rightarrow \ln{2}$ from above, we have
			\begin{align}
				c(k,n) \ge  \Bigl[ \frac{k}{2e} n^{1+1/k} -kn \Bigr]  + \frac{k+1}{2e}n^{1/k}\left(\frac{x\ln{2}}{k}-1\right) +k\,.
			\end{align}
			Then by invoking our case assumption that $x \ge k/\ln{2}$, we get
			\begin{align}
				c(k,n) &\ge  \Bigl[ \frac{k}{2e} n^{1+1/k} -kn \Bigr]  +k \ge \frac{k}{2e} n^{1+1/k} -kn, \notag
			\end{align}
			as required.	
			\item[Case $x < k/\ln{2}$.] From inequality (\ref{recurrent_inequality}), we get 
			\begin{align}
				c(k,n) \ge c(k,n-x-1) + (n-x) + c(k-1,x)  \ge c(k,n-x-1) + n-\frac{k}{\ln{2}}\,. \notag 
			\end{align}
			By the inductive hypothesis,
			\begin{align}
				c(k,n) \ge \frac{k}{2e} (n-x-1)^{1+1/k} - nk + n- \frac{k}{\ln{2}}
				= \frac{k}{2e} n^{1+1/k} \left(1-\frac{x+1}{n}\right)^{1+1/k} -nk + n-\frac{k}{\ln{2}} \notag 
			\end{align}
			Using the Bernoulli inequality $(1-a)^t \ge 1-at$ with $t\ge 1$ and $a \le 1$, we get 
			\begin{align}
				c(k,n) &\ge \frac{k}{2e} n^{1+1/k} \Bigl[ 1-\frac{x+1}{n}\left(1+\frac{1}{k}\right) \Bigr] -nk + n-\frac{k}{\ln{2}} \notag \\
				&= \Bigl[ \frac{k}{2e} n^{1+1/k} -nk \Bigr] -\frac{k}{2e} n^{1/k} (x+1)\left(1+\frac{1}{k}\right) + n - \frac{k}{\ln{2}} \notag \\
				&\ge \Bigl[ \frac{k}{2e} n^{1+1/k} -nk \Bigr] -\frac{k+1}{2e} n^{1/k} \left(\frac{k}{\ln{2}}+1\right) + n - \frac{k}{\ln{2}}
			\end{align}
			At this point we want to show
			$n > \frac{k+1}{2e} n^{1/k} (1+k/\ln{2}) + k/\ln{2}$.
			It suffices to show both of the following inequalities
			\begin{align}
				(i) \; \;  \frac{3n}{4} > \frac{k+1}{2e} n^{1/k} \left(\frac{k}{\ln{2}}+1\right) \; \mbox{ and } \; (ii) \; \;  \frac{n}{4} > \frac{k}{\ln{2}}
			\end{align}
			Inequality $(i)$ holds if and only if 
			\begin{align}
				n^{1-1/k} > \frac{2(k+1)}{3e} \left(\frac{k}{\ln{2}}+1\right) \notag 
			\end{align}
			Since $n > (2e)^k$, we get that $n^{1-1/k} > (2e)^{k-1}$. For $k\ge2$, we obtain 
			$(2e)^{k-1} > \frac{2(k+1)}{3e} \left(\frac{k}{\ln{2}}+1\right)$, which concludes $(i)$. 
			
			\smallskip
			
			To show $(ii)$, recall that $n > (2e)^k$. Then for $k\ge2$ we get $(2e)^k > 4k/\ln(2)$, 
			which implies $(ii)$.
			This concludes the second case and the proof of the theorem.
		\end{description}
	\end{proof}

	\section{Folklore lemmas} \label{app:folklore_lemmas}
	
	Here  we include  a few folklore lemmas that we use, together with their proofs for completeness. 
	
	\begin{lemma} \label{eq:simple_lemma_sorted_array}
		Let  $\vec{y} = (y_1, \ldots, y_n)$ with $y_1 \geq \ldots \geq y_n  \geq 0$   and $\sum_{i=1}^n y_i = 1$. Then  
		$
		\sum_{j=1}^{i} y_j  \geq {i}/{n} 
		$ $\forall  i \in [n]$.
	\end{lemma}
	\begin{proof}
		Let $i \in [n]$. Since $\vec{y}$ is decreasing, we have 
		$\bigl( {\sum_{j=1}^i y_j} \bigr)/{i} \geq \bigl({\sum_{j=i+1}^n y_j} \bigr)/({n-i})$ $(\dag)$.
		
		\smallskip 
		Assume by contradiction that 
		$y_1 + \ldots + y_i < \frac{i}{n}$ $(\ddag)$. 
		Adding $y_{i+1} + \ldots + y_n$ to both sides of $(\ddag)$, we get 
		\begin{align}
			1 = y_1 + \ldots + y_n & < \frac{i}{n} + y_{i+1} + \ldots + y_n \notag \\
			& \leq \frac{i}{n} + \frac{n-i}{i} \cdot  \left( {\sum_{j=1}^i y_j} \right) \explain{By $(\dag)$} \\
			& < \frac{i}{n} + \frac{n-i}{i} \cdot \left( \frac{i}{n} \right) \explain{By $(\ddag)$} \\
			& = 1\,.
		\end{align}
		We obtained  $ 1 < 1$, thus the assumption  in $(\ddag)$ must have been false and  the lemma holds.
	\end{proof}

	\begin{lemma} \label{lem:helper_k_to_power_1_plus_1_over_k} 
		Let $ x \in \mathbb{R}_{\geq 3}$. Then 
		$
		x^{1+\frac{1}{x}} > x + 1\,.
		$
	\end{lemma}
	\begin{proof}
		Raising both sides to the power $1/(x+1)$, the  inequality is  equivalent to  $x^{\frac{1}{x}}  > \left(x +1\right)^{\frac{1}{x+1}}$, or 
		$(1/x) {\ln{(x)}}  > (1/(x+1)) {\ln{(x+1)}}$ $(\dag)$.
		
		Define $g(x)={(\ln{x})}/{x}$. Its derivative is $g'(x)={(1-\ln{x})}/{x^2}$. Thus  $g$ is increasing on $[1,e)$ and decreasing on $[e,+\infty)$. It follows that $(\dag)$ holds for $x\geq 3$ and the lemma follows.
	\end{proof}
	
	The next lemma shows that if $v$ is an integrable function defined on $[0,1]$, then there is an interval $I$ of length $p$ on the circle where the interval $[0,1]$ is bent such that the point $0$ coincides with $1$, with the property that $\int_{I} v(x) \, dx = p$. 
	\begin{lemma}  \label{lem:length_p_probability_mass_p_ordered}
		Let $v : [0,1] \to \mathbb{R}_{\geq 0}$ be an integrable function with $\int_{0}^{1} v(x) \, dx = 1$. Then there exists  $a  \in [0,1]$ such that  one of the following  holds:
		\begin{itemize}
			\item $\int_{a}^{a+p} v(x) \, dx = p$, where $0 \leq a \leq 1-p$; 
			\item 
			$\int_{0}^{a}v(x) \, dx + \int_{a+1-p}^{1} v(x) \, dx = p $, where $ 1-p < a < 1$.
		\end{itemize} 
	\end{lemma}
	\begin{proof}
		We define a new function $g: [0,1] \to \mathbb{R}_{\geq 0}$, such that 
		\[ 
		g(x) = 
		\begin{cases}
			\int_{x}^{x+p} v(y) \, dy & \text{ if } 0 \leq x \leq 1-p \,. \\
			& \\
			\int_{x}^{1} v(y) \, dy  +   \int_{0}^{x+p-1} v(y) \, dy   & \text{ if } 1-p < x \leq 1\,.
		\end{cases}
		\]
		To prove the lemma it suffices to show that there exists  $c\in[0,1]$ such that $g(c)=p$.
		Indeed, the function $g$ is continuous and so integrable. Let $F : [0,1] \to \mathbb{R}_{\geq 0}$ be 
		$ F(x) = \int_{0}^{x} v(y) \, dy\,.$
		Using this notation, we get:
		\begin{align} 
			\int_{0}^{1} g(x) \, dx & = \left[ \int_{0}^{1-p} \int_{x}^{x+p} v(y) \, dy \, dx \right] + \left[ \int_{1-p}^{1} \left( \int_{x}^{1} v(y) \, dy \right) + \left( \int_{0}^{x+p-1} v(y) \, dy \right) \, dx \right] \notag \\
			& = \int_{0}^{1-p} \Bigl( F(x+p) - F(x) \Bigr) \, dx + \int_{1-p}^{1} \Bigl(\left(  F(1) - F(x)\right) + \left( F(x+p-1) - F(0) \right) \Bigr) \, dx \notag \\
			& = \int_{0}^{1-p} F(x+p) \, dx - \int_{0}^{1-p} F(x) \, dx+ \int_{1-p}^{1} 1 \, dx - \int_{1-p}^{1} F(x) \, dx + \int_{1-p}^{1} F(x+p-1) \, dx\explain{Since $F(1) = 1$ and $F(0) = 0$.}\\
			& = \int_{0}^{1-p} F(x+p) \, dx - \int_{0}^{1-p} F(x) \, dx + p - \int_{1-p}^{1} F(x) \, dx + \int_{1-p}^{1} F(x+p-1) \, dx\,. \label{eq:decomposing_double_integral}
		\end{align}
		We have 
		\begin{align}
			\int_{1-p}^{1} F(x+p-1) \, dx = \int_{0}^p F(y) \, dy 
			\qquad \mbox{and} \qquad \int_{0}^{1-p} F(x+p) \, dx = \int_{p}^{1} F(z) \, dz  \,.  \label{eq:dydz_replace_dx_ordered}
		\end{align}
		Using \eqref{eq:dydz_replace_dx_ordered}  in 
		\eqref{eq:decomposing_double_integral} yields
		\begin{align} 
			\int_{0}^{1} g(x) \, dx & = \int_{p}^{1} F(z) \, dz - \int_{0}^{1-p} F(x)\, dx + p - \int_{1-p}^{1} F(x) \, dx + \int_{0}^{p} F(y) \, dy \,. 
			\label{eq:integral_rewritten_simplified}
		\end{align}
		Notice that $\int_{p}^{1} F(z) \, dz + \int_{0}^{p} F(y) \, dy =\int_{0}^{1} F(x)\, dx$ and $- \int_{0}^{1-p} F(x)\, dx - \int_{1-p}^{1} F(x) \, dx = - \int_{0}^{1} F(x)\, dx$. 
		
		Therefore, the four integrals in \eqref{eq:integral_rewritten_simplified} cancel each other and we get
		$\int_{0}^{1} g(x) \, dx  = p\,.$ $(\dag)$
		
		Applying the first mean value theorem for definite integrals in  $(\dag)$, there exists $c\in [0,1]$ with the property that $g(c) = \frac{1}{1-0}\int_{0}^{1} g(x) \, dx = p$, which concludes the proof.
	\end{proof}

\end{document}